\let\origvec\vec
\documentclass[smallextended,nospthms]{svjour3}

\let\vec\origvec

\usepackage{pgfplots}
	\pgfplotsset{compat=1.12}

\sloppy
\usepackage{euscript}
\usepackage{marginnote}
\usepackage{anyfontsize} 
\usepackage{amsthm} 
\usepackage{mathrsfs, amsmath,  amssymb, euscript, amsfonts}
\usepackage{newtxtext,newtxmath} 
\usepackage{hyperref}
\hypersetup{
colorlinks= true
}
\usepackage[capitalize]{cleveref}
	\crefname{mtheorem}{Theorem}{}
	\crefname{equation}{}{}
	\crefformat{appendix}{\textsection#2#1#3}
	\crefformat{section}{\textsection#2#1#3}
	\crefrangeformat{equation}{(#3#1#4)-(#5#2#6)}

\newtheorem{theorem}{Theorem}
\newtheorem{mtheorem}{Theorem}

\newtheorem{lemma}[theorem]{Lemma}

\newtheorem{corollary}[theorem]{Corollary}

\theoremstyle{definition}
  \newtheorem{definition}[theorem]{Definition}
  \newtheorem{remark}[theorem]{Remark}
  
  \newtheorem{example}[theorem]{Example}

\AtBeginDocument{%
  \setlength{\oddsidemargin}{\dimexpr(\paperwidth-\textwidth)/2-1in}%
  \setlength{\evensidemargin}{\oddsidemargin}%
  \setlength{\topmargin}{%
    \dimexpr(\paperheight-\textheight)/2-\headheight-\headsep-1in}%
}


\usepackage{verbatim}
\newcommand{\R}{\mathbb{R}}
\newcommand{\C}{\mathbb{C}}
\newcommand{\N}{\mathbb{N}}
\newcommand{\Z}{\mathbb{Z}}
\newcommand{\T}{\mathbb{T}}

\newcommand{\eS}{\EuScript{S}}
\newcommand{\rH}{\mathcal{H}}

\newcommand{\rX}{\mathrm{X}}
\newcommand{\rL}{\mathrm{L}}
\newcommand{\rR}{\mathrm{R}}
\newcommand{\rK}{\mathrm{K}}
\newcommand{\sF}{\mathscr{F}}
\renewcommand{\Re}{\mathrm{Re}\,}

\newcommand{\textbi}[1]{\textit{\textbf{#1}}}

\newcommand{\tr}{\mathrm{tr}\,}
\newcommand{\ind}{\mathrm{ind}\,}
\newcommand{\sgn}{\mathrm{sgn}\,}
\newcommand{\ess}{\sigma_{\mathrm{ess}}}
\newcommand{\Arg}{\mathrm{Arg}\,}


\begin{document}

\title{The Witten Index for 1D Supersymmetric Quantum Walks with Anisotropic Coins}
\author{Akito Suzuki \and Yohei Tanaka}
\institute{Akito Suzuki \at
Division of Mathematics and Physics, 
Faculty of Engineering, Shinshu University, 
Wakasato, Nagano 380-8553, Japan
\email{akito@shinshu-u.ac.jp} \and 
Yohei Tanaka \at
School of Computer Science, Engineering and Mathematics, Flinders University, 
1284 South Road, Clovelly Park, 5042, SA, Australia \\
\email{tana0035@flinders.edu.au}
}
\maketitle

\begin{abstract}
Chirally symmetric discrete-time quantum walks possess supersymmetry, and their Witten indices can be naturally defined. The Witten index gives a lower bound for the number of topologically protected bound states. The purpose of this paper is to give a complete classification of the Witten index associated with a one-dimensional split-step quantum walk. It turns out that the Witten index of this model exhibits striking similarity to the one associated with a Dirac particle in supersymmetric quantum mechanics.
\end{abstract}
\keywords{Quantum walks, Supersymmetry, Witten index, Split-step quantum walks}



\section{Introduction}
\label{Section: Introduction}


Discrete-time quantum walks are versatile platforms realising topological phenomena \cite{Gross,AsOb13,OA,OK,MKO,XZB,Ced1,Ced2}. Kitagawa et al. \cite{KRBD} proposed a split-step quantum walk with chiral symmetry and experimentally observed topologically protected bound states \cite{KBFRBKAD}
(see \cite{Kit} for a comprehensive review). 
For such bound states, Fuda et al. \cite{FFSd,FFS1} proved the robustness against compact perturbations and the spatial exponential decay property with mathematical rigour. Barkhofen et al. \cite{BLNSS} implemented a chiral symmetric discrete-time quantum walk with supersymmetry. Recently,  the first author of the present paper \cite{Suzuki18} proved that all the chiral symmetric quantum walks possess supersymmetry and that a discrete-time quantum walk has chiral symmetry if and only if the product of two unitary involutions represents its evolution operator. From these facts, we know that a discrete-time quantum walk can possess supersymmetry even if it does not have apparent chiral symmetry. Indeed, all homogeneous one-dimensional two-state quantum walks \cite{ABNVW01,Ko02}, multi-dimensional quantum walks \cite{FFSd}, various types of quantum walks on graphs \cite{MNRoS07,MNRS09,Se13,HKSS14,Po16,HiSe1,HiSe2,HSS,KPSS18}, and several quantum-walk based algorithms \cite{Gr,Sz} have evolution operators that can be represented by the product of two unitary involutions, and therefore they exhibit supersymmetry. 
See \cite{Suzuki18} for more details, and \cite{Oh} for many examples of inhomogeneous one-dimensional quantum walks \cite{Ko09,Ko10,ShiKa10,KoLuSe13,EEKST1} whose evolutions are written by the product of two unitary evolutions. 

As shown in \cite{Suzuki18}, 
a supersymmetric quantum walk (SUSYQW) assigns the Witten index,
which provides a lower bound for the number of topologically protected bound sates. In this paper, we classify the Witten index for the split-step quantum walk entirely. 

\subsection{Witten index for SUSYQWs}
\label{Subsection:Witten}
To give a precise definition of the Witten index for 
SUSYQWs 
introduced in \cite{Suzuki18}, we briefly review here the supersymmetric structure of chiral symmetric quantum walks. 
We say that a unitary operator $U$ on a Hilbert space $\rH$ has chiral symmetry if there exists a unitary involution $\varGamma$ on $\rH$
(i.e., $\varGamma^{-1} = \varGamma^* = \varGamma$) such that
$\varGamma U \varGamma = U^{-1}$. 
$U$ has chiral symmetry if and only if 
it can be represented as a product of two unitary involutions 
$\varGamma$ and $C:=\varGamma U$. 

Suppose that $U$ is the evolution of a chiral symmetric quantum walk. Namely, there are two unitary involutions $\varGamma$ and $C$ such that $U=\varGamma C$. We call
$Q:= [\varGamma, C]/2i$
a supercharge and $H=Q^2$ the superhamiltonian, 
where $[X, Y] :=XY-YX$ is the commutator. 
A direct calculation proves that $\varGamma$ and $H$ commute and
hence $H$ can be decomposed into $H_+\oplus H_-$
with respect to the decomposition 
$\ker (\varGamma-1) \oplus \ker(\varGamma+1)$. 
We now define a topological index $ {\rm ind}(\varGamma, C)$ so that it coincides with the Witten index 
$\Delta_\varGamma(H)$ of $H$ with respective to $\varGamma$, i.e.,
\begin{equation}
\label{eq:witt} 
{\rm ind}(\varGamma, C) =
\Delta_\varGamma(H) := \dim \ker H_+ - \dim \ker H_-. 
\end{equation} 

In this sense, we call a pair $(\varGamma, C)$ of two unitary involutions 
a SUSYQW 
with the evolution $U=\varGamma C$
and call the index ${\rm ind}(\varGamma, C)$ 
the Witten index for the SUSYQW. 
We say that a SUSYQW 
$(\varGamma, C)$ 
is Fredholm if $H$ is Fredholm. 
As shown in \cite{Suzuki18}, 
the Fredholmness of $(\varGamma, C)$ depends only on $U$
(or equivalently $H$), and it is independent of the choice of $(\varGamma, C)$. However, the Witten index ${\rm ind}(\varGamma, C)$ 
depends on the choice of $(\varGamma, C)$.
If $(\varGamma, C)$ is Fredholm, 
then the index ${\rm ind}(\varGamma, C)$ is robust against compact perturbations (see \cref{Section: Appendix}). 

\subsection{Main result}

In this paper, we study
a split-step quantum walk \cite{FFS1,FFSd,FFS1w}, which unifies Kitagawa's split-step quantum walk \cite{Kit} and a usual one-dimensional quantum walk \cite{ABNVW01,Ko02,Suzuki16}. 
Let  $ \rH: = \ell^2(\Z,\C^2)$ be the state space of the split-step quantum walk. Idetifying $\rH$ with $\ell^2(\mathbb{Z}) \oplus \ell^2(\mathbb{Z})$, we define a shift operator $\varGamma$ on $\rH$ as
\begin{equation}
\label{shift} 
\varGamma 
= \begin{pmatrix}
    p & q L \\ q^* L^* & - p \end{pmatrix},
\end{equation}
where $L$ is the left shift operator on $\ell^2(\mathbb{Z})$. We suppose that $(p,q) \in \mathbb{R} \times \mathbb{C}$ satisfies $p^2 +|q|^2=1$, which ensures that $\varGamma$ is a unitary involution. 
We define a coin operator on $\rH$ as
\begin{equation}
\label{coin} 
C = \begin{pmatrix} a_1 & b^* \\ b & a_2 \end{pmatrix}, 
\end{equation}
where $a_1$, $a_2$ and $b$ are the multiplication operator on $\ell^2(\mathbb{Z})$ by functions $a_j:\mathbb{Z} \to \mathbb{R}$ ($j=1,2$) and $b:\mathbb{Z} \to \mathbb{C}$. We assume $a_j(x)^2 + |b(x)|^2 = 1$ ($x \in \mathbb{Z}$, $j=1,2$), which garantees that $C$ is a unitary involution.
The evolution operator of the split-step quantum walk is defined as the product of $\varGamma$ and $C$, i.e., $U = \varGamma C$. 
Since $\varGamma$ and $C$ are unitary involutions,
the evolution $U$ has chiral symmetry, and
$(\varGamma,C)$ defines a 
SUSYQW 
as explained in Subsection \ref{Subsection:Witten}. 

As shown in \cite{Suzuki18}, the modulus of ${\rm ind}(\varGamma, C)$ provides the lower bound for the number of topological bound states. Therefore, if 
 ${\rm ind}(\varGamma, C)$ is nonzero, 
the corresponding quantum walk with the evolution $U = \varGamma C$  has a topological bound state. Motivated by this fact, we give a complete classification of the Witten index for the split-step quantum walk
 $(\varGamma, C)$ 
 defined by \eqref{shift} and \eqref{coin}. 

To state our main result, we suppose that the coin 
$C$ is anisotropic \cite{Richard-Suzuki-Aldecoa18,Richard-Suzuki-Aldecoa19}, i.e., $C(x) :=  \begin{pmatrix} a_1(x) & b^*(x) \\ b(x) & a_2(x) \end{pmatrix}$ has limits as $x \to \pm \infty$. We denote these limits by
$C(\rL) = \lim_{x \to -\infty} C(x)$ and $C(\rR) = \lim_{x \to +\infty} C(x)$. 
Clearly, the limit coins are unitary and hermitian. 
We say that a unitary and hermitian matrix is trivial if it equals $+1$ or $-1$. 
If the limit coins $C(\sharp)$ ($\sharp =\rL, \rR$) are nontrivial, they can be assumed to be a unitary involution of the form
 \[ C(\sharp) =  \begin{pmatrix} a(\sharp) & b^*(\sharp) \\ b(\sharp) & -a(\sharp) \end{pmatrix} \]
without loss of generality (see Section \ref{Section: Main Results} for more details). We are now in a position to state our main result.

\begin{mtheorem}
\label{Theorem: MainTheorem}
Let $\varGamma$ and $C$ be 
defined by \eqref{shift} and \eqref{coin}.
Suppose that $C$ is anisotropic and the limit coins $C(\sharp)$ 
($\sharp = \rL,\rR$) are nontrivial.  
Then 
\begin{equation}
\label{Equation: Fredholmness}
\tag{A1}
\mbox{$(\varGamma,C)$ is Fredholm if and only if $|p| \neq |a(\sharp)|$ for each $\sharp = \rL,\rR.$}
\end{equation}
In this case, we have
\begin{equation}
\label{Equation: Index Formula for Type III}
\tag{A2}
\ind(\varGamma,C) = 
\begin{cases}
+ \sgn p, & |a(\rR)| < |p| <|a(\rL)|, \\
- \sgn p, & |a(\rL)| < |p| < |a(\rR)|, \\
0, & \mbox{otherwise}. 
\end{cases}
\end{equation}
\end{mtheorem}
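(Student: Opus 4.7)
The plan is first to reduce to a piecewise-constant coin using the compact-perturbation invariance of the Witten index recalled in \cref{Section: Appendix}. By anisotropy the matrix-valued multiplication operator $C-C_0$ is compact, where $C_0$ is the piecewise-constant coin defined by $C_0(x)=C(\rL)$ for $x<0$ and $C_0(x)=C(\rR)$ for $x\geq 0$. Hence $(\varGamma,C)$ and $(\varGamma,C_0)$ share both the Fredholm property and the Witten index, and I may replace $C$ by $C_0$ throughout.

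For \eqref{Equation: Fredholmness}, introduce the fully homogeneous SUSYQWs $(\varGamma,C(\sharp))$ with superhamiltonians $H(\sharp)$ for $\sharp=\rL,\rR$. A Weyl-type localisation along the two half-lines $\{x<0\}$ and $\{x\geq 0\}$, together with the fact that $H_0$ differs from the corresponding half-line operators of $H(\sharp)$ only by a finite-rank correction near the origin, gives $\ess(H_0)=\ess(H(\rL))\cup\ess(H(\rR))$. Fourier transforming the translation-invariant $H(\sharp)$ realises it as multiplication by a $2\times 2$ matrix-valued symbol on $\T$, whose determinant, computed explicitly, vanishes for some $k\in\T$ precisely when $|p|=|a(\sharp)|$; this proves \eqref{Equation: Fredholmness}.

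For \eqref{Equation: Index Formula for Type III}, use the chirality decomposition $\rH=\rH_+\oplus\rH_-$ with $\rH_\pm=\ker(\varGamma\mp 1)$. Solving $\varGamma\psi=\pm\psi$ directly yields an identification of $\rH_\pm$ with $\ell^2(\Z)$ through the map $\psi_1\mapsto(\psi_1,\lambda_\pm L^*\psi_1)^\top$, where $\lambda_+=(1-p)/q$ and $\lambda_-=-(1+p)/q$. Since $Q_0$ is self-adjoint we have $\ker H_0=\ker Q_0$, and the zero-mode condition $Q_0\psi=0$ for $\psi\in\rH_\pm$ becomes, via the relation $\varGamma C_0\psi=C_0\varGamma\psi=\pm C_0\psi$, a homogeneous second-order difference equation for $\psi_1$ with piecewise-constant coefficients determined by $a(\sharp),b(\sharp)$ on each half-line.

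On each half-line the recurrence has characteristic roots of the form $\lambda_\pm(a(\sharp)\pm 1)/b(\sharp)$; the squared moduli simplify to explicit expressions in $(1\pm p)$ and $(1\pm a(\sharp))$ alone, so whether a given root is strictly inside or outside the unit circle is governed purely by the signs of $p\pm a(\sharp)$, equivalently by $|p|\gtrless|a(\sharp)|$ together with $\sgn p$. Counting $\ell^2$-solutions on each half-line, imposing the matching conditions at the seam around $x=0$, and assembling $\dim\ker H_\pm$ in each regime of \eqref{Equation: Index Formula for Type III} then produces the stated formula. The main obstacle is the sign-and-chirality bookkeeping responsible for the factor $\sgn p$: one must follow the parameters $\lambda_\pm$ through the matching to identify which chirality sector actually supports a non-trivial $\ell^2$ zero mode in the non-trivial regions, and verify that when $|p|$ lies outside the open interval with endpoints $|a(\rL)|,|a(\rR)|$ the exponential modes on the two sides cannot be combined into a global $\ell^2$ solution, forcing the index to vanish.
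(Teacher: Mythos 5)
Your proposal follows essentially the same route as the paper: reduce by compactness to a piecewise-constant coin, identify the Fredholm condition through the essential spectrum of the homogeneous limit operators via Fourier transform, then reduce the zero-mode counting to a second-order difference equation with piecewise-constant coefficients and count $\ell^2$-solutions by comparing the characteristic roots with the unit circle. The only cosmetic difference is that you identify $\rH_\pm$ with $\ell^2(\Z)$ directly through the eigenvector map $\psi_1 \mapsto (\psi_1,\lambda_\pm L^*\psi_1)^\top$ rather than conjugating by a unitary $\epsilon$ as the paper does; your map is not unitary (its norm scales by $\sqrt{2/(1\pm p)}$), but since you only need a linear bijection to compare kernel dimensions, that does no harm.

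One genuine gap you should address: your characteristic-root formula $\lambda_\pm(a(\sharp)\pm 1)/b(\sharp)$ presupposes $b(\sharp)\neq 0$. When $C(\sharp)$ is nontrivial and $b(\sharp)=0$ (so $|a(\sharp)|=1$), the second-order recurrence on that half-line degenerates — the leading coefficient vanishes and the equation forces $\Psi$ to vanish identically on that side. Your sketch ``counting $\ell^2$-solutions on each half-line, imposing matching conditions at the seam'' tacitly assumes the non-degenerate recurrence on both sides, which corresponds only to the paper's Type III; the paper handles Types I, II, II$'$ (one or both $b(\sharp)=0$) by separate and simpler arguments, and the final formula \eqref{Equation: Index Formula for Type III} is then seen to cover these cases because $|a(\sharp)|=1$ makes the inequalities $|a(\sharp)|<|p|$ vacuous. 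You should make this case split explicit, or at least observe that the diagonal-limit cases reduce to counting solutions on a single half-line (or to a single seam condition), rather than leaving it to the reader to see that the root formula degenerates gracefully.

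A second, more minor point: you assert that the determinant of the symbol of $H(\sharp)$ on $\T$ ``vanishes for some $k\in\T$ precisely when $|p|=|a(\sharp)|$,'' which is the crux of \eqref{Equation: Fredholmness}, but give no indication of the computation. The paper establishes this by parametrising $p=\sin\Theta$, $a(\sharp)=\sin\Theta_\sharp$ and using the cosine addition formula to show the band is $[-\cos(\Theta+\Theta_\sharp),\cos(\Theta-\Theta_\sharp)]$; something of this sort needs to be included so that the reader can see why the threshold is exactly $|p|=|a(\sharp)|$ and not some other algebraic relation in $(p,a(\sharp),b(\sharp))$.
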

The case where at least one of the two limits $C(\rL)$ and $C(\rR)$ is a trivial unitary involution is excluded 
here, since the pair $(\varGamma,C)$ with this property automatically fails to be Fredholm (\cref{Lemma: Infinite Dimensional Ker Qepsilon}). \cref{Theorem: MainTheorem} provides a necessary and sufficient condition for the Fredholmness of 1D split-step SUSYQWs endowed with anisotropic coins, together with complete classification of the associated Witten index.

Note also that the model takes its simplest form when $p=0,$ but the associated Witten index is $0$ in this case by the formula \cref{Equation: Index Formula for Type III}. It is therefore important to consider non-zero $p$ as well as the trivial case $p=0.$  

There are close links between quantum walks and Dirac particles. In a continuous limit, quantum walks converge to Dirac particles \cite{BES,Str06} (see \cite{MS} for a mathematically rigorous and general proof). 
Klein's paradox and Zitterbewegung in quantum walks were found in \cite{Mey97,Str07,Ku08}.   
\cref{Theorem: MainTheorem} inspires a new relation between quantum walks and Dirac particles in comparison with the rusult of Bole et al. \cite{BGGSS87}: for the Dirac operator $Q = -i \sigma_2 d/dx + \sigma_1 \phi(x)$ on $L^2(\mathbb{R}) \oplus L^2(\mathbb{R})$ with an anisotropic scalar potential $\phi(x)$ satisfying $\lim_{x \to \pm \infty}\phi(x) = \phi_\pm \in \mathbb{R}$, the Witten index equals $\pm 1$ if $\pm \phi_- < 0 < \pm \phi_+$ and it equals  0 otherwise. 

\subsection{Organisation and strategy of the paper}

The present paper is organised as follows. In \cref{Section: Preliminaries} we go through some preliminary results including the precise definition of the one-dimensional split-step SUSYQW $(\varGamma,C)$. It is shown in \cref{Section: Diagonalisation} that the Witten index of $(\varGamma,C)$ is given by the Fredholm index of a certain well-defined operator $Q_{\epsilon_+}$ on $\ell^2(\Z)$ (see \cref{Theorem: Witten Index Formula} for details);
\begin{equation}
\label{Equation: Naive Definition of Witten Index}
\ind(\varGamma,C) = \ind Q_{\epsilon_+} = \dim \ker Q_{\epsilon_+} - \dim \ker Q_{\epsilon_+}^*.
\end{equation}
We show that the operator $Q_{\epsilon_+}$ is of the form $Q_{\epsilon_+} = \alpha L + \alpha'L^* + \beta,$ where $\alpha,\alpha',\beta$ are $\C$-valued sequences indexed by $\Z$ and $L,L^*$ are the left and right shift operators on $\ell^2(\Z)$ respectively. In \cref{Section: Classification of Dimensions} we separately compute the two dimensions on the right hand side \cref{Equation: Naive Definition of Witten Index}. With the explicit form of $Q_{\epsilon_+}$ mentioned above in mind, we shall end up solving second-order linear difference equations of the form 
\begin{equation}
\label{Equation: Naive DE}
\alpha(x) \Psi(x+1) + \alpha'(x)\Psi(x-1) + \beta(x) \Psi(x) = 0,
\qquad \Psi = (\Psi(x))_{x \in \Z} \in \ell^2(\Z),
\end{equation}
which is known to have two linearly independent algebraic solutions. Here, we need not only to algebraically solve Equation \cref{Equation: Naive DE}, but also to ensure the solutions $\Psi$ to be square summable. This is precisely why the difference on the right-hand side of \cref{Equation: Naive Definition of Witten Index} can still be non-zero.

In \cref{Section: Proof of the main theorem} we prove \cref{Theorem: MainTheorem} by making use of the index formula \cref{Equation: Naive Definition of Witten Index}. The present paper concludes with \cref{Section: Concluding Remarks}, the main focus of which is a possible generalisation of the Witten index associated with SUSYQWs which fail to be Fredholm.  Finally, \cref{Section: Appendix} contains a brief summary of the several invariance principles of the Witten index, each of which plays a supplementary role in this paper.

\section{Preliminaries}
\label{Section: Preliminaries}
The primary focus on the present paper is discrete-time quantum walks, and so we shall henceforth assume that all (linear) operators in this paper are everywhere-defined bounded operators.

\subsection{A brief overview of supersymmetry}
\label{Section: Supersymmetry}
Here, we give a brief overview of supersymmetry by going through some preliminary results in a somewhat rapid manner. What follows can be found in any standard textbook on the subject (see, for example, \cite[\textsection 5]{Book:Bernd92:TheDiracEquation} or \cite[\textsection 7.13]{Book:Arai17:AnalysisOnFockSpacesAndMathematicalTheoryOfQuantumFields}), and so proofs are omitted. An abstract operator $\varGamma$ on a Hilbert space $\rH$ is called an \textbi{involution}, if $\varGamma^2 = 1.$ Note that if an operator possesses any two of the properties ``involutory'', ``unitary'' and ``self-adjoint'', then it possesses the third. We shall make use of the following finite-dimensional example throughout this paper;

{\footnotesize 
\begin{example}[$2 \times 2$ case]
\label{Example: Unitary Involutory Matrix of Dimension 2}
A $2 \times 2$ matrix $C$ is a unitary involution if and only if it is of the following form:
\begin{equation}
\label{Equation: Unitary Involutory Matrix of Dimension 2}
C = 
\begin{pmatrix}
a_1 & b^* \\
b & a_2
\end{pmatrix},
\end{equation}
where the triple $(a_1,a_2,b) \in \R \times \R \times \C$ satisfies 
\[
b(a_1 + a_2) = 0 \mbox{ and } a_j^2 + |b|^2 = 1, \qquad j=1,2.
\]
In particular, $C = -1$ or $C = +1,$ which will be referred to as \textbi{trivial unitary involutions}, satisfies all of the above equalities. It is then easy to observe that a $2 \times 2$ matrix $C$ is a non-trivial unitary involution if and only if it is of the following form:
\begin{equation}
C = 
\begin{pmatrix}
a & b^* \\
b & -a
\end{pmatrix} \mbox{ and } a^2 + |b|^2 = 1.
\end{equation}
\end{example}
}

A self-adjoint operator $Q$ on $\rH$ is called a \textbi{supercharge} with respect to a unitary involution $\varGamma,$ if it satisfies the anti-commutation relation $Q\varGamma+\varGamma Q = 0,$ where the left hand side is commonly denoted by the symbol $\{Q,\varGamma\}.$ With a canonical decomposition $\rH = \rH_+ \oplus \rH_-$ by $\rH_{\pm} := \ker(\varGamma \mp 1)$ in mind, a supercharge $Q$ and the \textbi{superhamiltonian} $H := Q^2$ admit the following block-operator representations respectively;
\begin{align}
\label{Equation: Matrix Representation of Q}
Q &= 
\begin{bmatrix}
0  & Q_-  \\
Q_+  &  0
\end{bmatrix}, \mbox{ where $Q_\pm : \rH_\pm \to \rH_\mp$ satisfy $Q_\pm^* = Q_\mp,$}\\
H 
&=
\begin{bmatrix}
H_+  & 0  \\
0  &  H_-
\end{bmatrix}, \mbox{ where $H_\pm : \rH_\pm \to \rH_\pm$ satisfy $H_\pm^* = H_\pm.$}
\end{align}
The superhamiltonian $H$ simultaneously represents two non-negative hamiltonians $H_+,H_-$ whose spectra are identical except possibly for $0.$ The \textbi{Witten index} of the superhamiltonian $H$ with respect to $\varGamma$ is given by
\begin{equation}
\label{Equation: Definition of Witten Index}
\Delta_\varGamma(H) := \dim \ker H_+ - \dim \ker H_-,
\end{equation}
which measure the difference in the number of zero-energy ground states of $H_+,H_-$, whenever the right-hand side is well-defined. Recall that $Q_+$ is a Fredholm operator if and only if both $\dim \ker Q_\pm$ are finite-dimensional and the range of $Q_+$ is closed. In this case, the \textbi{Fredholm index} of $Q_+$ is defined by
$
\ind Q_+ := \dim \ker Q_+ - \dim \ker Q_-.
$ 
The following two results about a general bounded operator $A : \rH \to \rK$ are useful: 
\begin{align}
\label{Equation: Kernel Theorem} &\ker A^*A = \ker A, \\
\label{Equation: Closed Range Theorem}
&\mbox{$\inf \sigma(A^*A) \setminus \{0\} > 0$ if and only if $A$ has a closed range,}
\end{align}
where the proof of \cref{Equation: Closed Range Theorem} can be found, for example, in \cite[Lemma 7.27]{Book:Arai17:AnalysisOnFockSpacesAndMathematicalTheoryOfQuantumFields}). With \cref{Equation: Kernel Theorem} in mind, the Witten index has a precise interpretation as $\Delta_\varGamma(H) = \ind Q_+,$ provided that $Q_+$ is a Fredholm operator.

Following \cite{Suzuki18} we introduce the supercharge associated with 
a supersymmetric quantum walk (SUSYQW);
\begin{definition}
We call a pair $(\varGamma,C)$ of two unitary involutions 
on a Hilbert space $\rH$ a SUSYQW with the evolution operator
$U=\varGamma C$. 
\end{definition}
For a SUSYQW $(\varGamma,C)$, 
$Q := [\varGamma,C]/2i$ is a supercharge with respect to 
$\varGamma$, i.e., $\{Q, \varGamma\}=0$. 
We define the Witten inex of the SUSYQW as 
\[ {\rm ind}(\varGamma,C) = \Delta_\varGamma(H), \]
where the right-hand side is defined by \cref{Equation: Definition of Witten Index} with the superhamiltonian $H=Q^2$ for the supercharge $Q$
of the SUSYQW. 
\begin{definition}[Fredholmness]
A SUSYQW $(\varGamma,C)$ is said to be \textbi{Fredholm}, if $Q_+$ as in \cref{Equation: Matrix Representation of Q} is a Fredholm operator.
\end{definition}
\subsection{Definition of the model}
\label{Section: Main Results}

Given $\rX = \C$ or $\rX = \C^2,$ we shall consider the Hilbert space of square-summable $\rX$-valued sequences:
\[
\ell^2(\Z,\rX) := 
\{\Psi : \Z \to \rX \mid \sum_{x \in \Z} \|\Psi(x)\|^2_\rX < \infty \},
\]
where $\|\cdot\|_\rX$ is the standard norm defined on $\rX.$ We shall agree to write elements of $\C^2$ as $2 \times 1$ column vectors. With this convention in mind, an element $\Psi$ of $\ell^2(\Z,\C^2)$ is written by $\Psi = (\Psi_1,\Psi_2)^\mathrm{T},$ where $\Psi_1,\Psi_2 \in \ell^2(\Z,\C^2).$ On the Hilbert space $\ell^2(\Z) := \ell^2(\Z,\C),$ the left-shift operator $L$ and the right-shift operator $L^*$ are given respectively by
\[
(L\Psi)(x) = \Psi(x + 1) \mbox{ and } (L^*\Psi)(x) = \Psi(x - 1), \qquad x \in \Z.
\]
Evidently, we have $LL^* = L^*L = 1.$ Let $\rH = \ell^2(\Z,\C^2)$ be the state space of a quantum walker throughout the present paper. With the canonical identification $\rH = \ell^2(\Z) \oplus \ell^2(\Z)$ in mind, we are now in a position to introduce the precise definition of the model we shall consider throughout this paper;
\begin{definition}
\label{Definition: Split-step SUSYQW}
A \textbi{(one-dimensional) split-step SUSYQW} is a pair $(\varGamma,C)$ of two unitary involutions on $\rH$ that are of the following forms;
\begin{equation}
\label{Equation2: Definition of Split-Step QW}
\varGamma = 
\begin{pmatrix}
p & qL \\
q^* L^* & -p
\end{pmatrix}_{\ell^2(\Z) \oplus \ell^2(\Z)} \mbox{ and }
C = 
\begin{pmatrix}
a_1 & b^* \\
b & a_2
\end{pmatrix}_{\ell^2(\Z) \oplus \ell^2(\Z)},
\end{equation}
where the pair $(p,q) \in \R \times (\C \setminus \{0\})$ and the triple $(a_1,a_2,b)$ of $\C$-valued sequences satisfy all of the following conditions:
\begin{align}
\label{Equation2: Condition on Shift} &\theta = \Arg q, \\
\label{Equation: Condition on C} &p^2 + |q|^2 = 1, \\
\label{Equation2: Condition on C} &a_j(x)^2 + |b(x)|^2 = 1, \qquad j=1,2, \\
\label{Equation1: Condition on C} &b(x)(a_1(x) + a_2(x)) = 0, 
\end{align}
where we assume that both \cref{Equation2: Condition on C} and \cref{Equation1: Condition on C} hold true for each $x \in \Z.$ Note that the sequences $a_j = (a_j(x))_{x \in \Z}$ and $b = (b(x))_{x \in \Z}$ here are canonically identified with their associated multiplication operators on $\ell^2(\Z).$
\end{definition}


\begin{definition}[anisotropic coins]
\label{Definition: Anisotoropic Coin}
Let $\rL = - \infty,$ and let $\rR = + \infty.$ Let $(\varGamma,C)$ be a split-step SUSYQW. The coin operator $C$ is called an \textbi{anisotropic coin}, if it admits the following two-sided limits:
\begin{equation}
\lim_{x \to \sharp} C(x) = C(\sharp) =
\begin{pmatrix} 
a_1(\sharp) & b(\sharp)^* \\
b(\sharp) & a_2(\sharp) 
\end{pmatrix}_{\ell^2(\Z) \oplus \ell^2(\Z)},
\qquad \sharp = \rL,\rR,
\end{equation}
where we assume that \cref{Equation1: Condition on C} and \cref{Equation2: Condition on C} both hold true for each $x = \rL,\rR.$ Note that if $C(\sharp)$ is a non-trivial unitary involution, then we shall assume without loss of generality (see \cref{Example: Unitary Involutory Matrix of Dimension 2} for details) that
\begin{equation}
\label{Equation: Property of Nontrivial Unitary Involutions}
a(\sharp) := a_1(\sharp) = -a_2(\sharp).
\end{equation}
\end{definition}

As in \cref{Definition: Anisotoropic Coin} we shall always let $\rL = - \infty$ and $\rR = + \infty$ throughout this paper. This commonly used convention is, for example, in accordance with \cite{Richard-Suzuki-Aldecoa18}.

\section{Diagonalisation} 
\label{Section: Diagonalisation}

\subsection{The main result}
The ultimate purpose of the current section is to prove the following index formula for the Witten index;
 
\begin{theorem}
\label{Theorem: Witten Index Formula}
Let $(\varGamma,C)$ be a split-step SUSYQW, where $C$ may or may not be anisotropic. Then there exists a unitary operator $\epsilon$ on $\rH$ such that the supercharge $2iQ := [\varGamma,C]$ admits off-diagonalisation of the following form with respect to the orthogonal decomposition $\rH = \ell^2(\Z) \oplus \ell^2(\Z):$
\begin{equation}
\label{Equation: Off-diagonalisation of Qepsilon}
\epsilon^* Q \epsilon
=
\begin{pmatrix}
0 & Q_{\epsilon_-} \\
Q_{\epsilon_+} & 0
\end{pmatrix}_{\ell^2(\Z) \oplus \ell^2(\Z)},
\end{equation}
where the three operators $\epsilon, Q_{\epsilon_+}, Q_{\epsilon_-}$ are given respectively by
\begin{align}
\epsilon &= 
\label{Equation: Definition of Epsilon}
\frac{1}{\sqrt{2}} 
\begin{pmatrix}
\sqrt{1 + p}  & -\sqrt{1 - p}  \\
\sqrt{1 - p} e^{-i \theta} L^* & \sqrt{1 + p}e^{-i \theta} L^*
\end{pmatrix}_{\ell^2(\Z) \oplus \ell^2(\Z)}, \\
\label{Equation: Explicit Definition of Qepsilon}
-2iQ_{\epsilon_\pm} &= (1 \pm p) e^{i \theta}Lb - (1 \mp p)e^{-i \theta}b^*L^* \pm |q|(a_2(\cdot + 1) - a_1).
\end{align}
Furthermore, the split-step quantum walk $(\varGamma,C)$ is Fredholm if and only if $Q_{\epsilon_+}$ is a Fredholm operator. In this case, we have
\begin{equation}
\label{Equation: Wintten Index Formula for Fredhoml Pairs}
\ind(\varGamma,C) = \ind Q_{\epsilon_+} = \dim \ker Q_{\epsilon_+} - \dim \ker Q_{\epsilon_-}.
\end{equation}
\end{theorem}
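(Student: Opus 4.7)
The strategy is to first construct a unitary $\epsilon$ diagonalising the shift as $\epsilon^*\varGamma\epsilon=\mathrm{diag}(1,-1)$ on $\ell^2(\Z)\oplus\ell^2(\Z)$. Since $Q$ satisfies $\{Q,\varGamma\}=0$ by construction, the conjugate $\epsilon^*Q\epsilon$ automatically anticommutes with $\mathrm{diag}(1,-1)$ and hence has vanishing diagonal blocks, yielding the off-diagonal form \cref{Equation: Off-diagonalisation of Qepsilon}. The explicit formulas follow from a direct calculation, and the Fredholmness and index claims reduce to unitary invariance together with \cref{Equation: Kernel Theorem}.

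First I would verify directly that the operator $\epsilon$ in \cref{Equation: Definition of Epsilon} is unitary and that $\epsilon^*\varGamma\epsilon=\mathrm{diag}(1,-1)$; both reduce to short $2\times 2$ block computations using only $p^2+|q|^2=1$, $q=|q|e^{i\theta}$, and $LL^*=L^*L=1$. The form of $\epsilon$ is not pulled out of a hat: solving $\varGamma\psi=\pm\psi$ for $\psi=(\psi_1,\psi_2)^{\mathrm{T}}$ yields $\psi_2=\tfrac{1\mp p}{q}L^*\psi_1$, and normalising the resulting columns produces the $\sqrt{(1\pm p)/2}$ coefficients. The anticommutation $\{Q,\varGamma\}=0$ is itself a one-line consequence of $\varGamma^2=C^2=1$.

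The computational core is the verification of \cref{Equation: Explicit Definition of Qepsilon}. I would expand
\[
2iQ=\varGamma C-C\varGamma
\]
using \cref{Equation2: Definition of Split-Step QW}, obtaining a $2\times 2$ skew-adjoint block matrix whose entries are polynomials in the shifts $L^{\pm 1}$ and the multiplication operators $p,q,a_j,b$. Sandwiching by $\epsilon$ and repeatedly applying $Lf=f(\cdot+1)L$ gives the $(2,1)$ block $2iQ_{\epsilon_+}$, and the identity $|q|^2=1-p^2$ collapses the coefficients of $e^{i\theta}Lb$, of $e^{-i\theta}b^*L^*$, and of $(a_2(\cdot+1)-a_1)$ to the promised $-(1+p)$, $+(1-p)$, $-|q|$. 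I expect this bookkeeping to be the main obstacle: nothing is conceptually difficult, but several contributions combine into each of the three final terms, and tracking the shifted arguments $a_j(\cdot+1),b(\cdot+1)$ under conjugation by $L^{\pm 1}$ demands care. The formula for $Q_{\epsilon_-}=Q_{\epsilon_+}^*$ is then automatic from self-adjointness of $Q$.

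Finally, unitarity of $\epsilon$ implies that $(\varGamma,C)$ is Fredholm if and only if the off-diagonal operator $\epsilon^*Q\epsilon$ has Fredholm off-diagonal entries, which is in turn equivalent to $Q_{\epsilon_+}$ being Fredholm. Identifying $\rH_\pm=\ker(\varGamma\mp 1)$ with the two copies of $\ell^2(\Z)$ through $\epsilon$, the restrictions $H_\pm$ of $H=Q^2$ become $Q_{\epsilon_+}^*Q_{\epsilon_+}$ and $Q_{\epsilon_+}Q_{\epsilon_+}^*$ respectively. Applying \cref{Equation: Kernel Theorem} twice yields $\dim\ker H_+=\dim\ker Q_{\epsilon_+}$ and $\dim\ker H_-=\dim\ker Q_{\epsilon_-}$, so the definition \cref{Equation: Definition of Witten Index} of $\Delta_\varGamma(H)$ reduces to $\ind Q_{\epsilon_+}$, giving \cref{Equation: Wintten Index Formula for Fredhoml Pairs}.
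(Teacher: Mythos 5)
Your proposal is correct and follows essentially the same route as the paper: diagonalise $\varGamma$ by the unitary $\epsilon$ of \cref{Equation: Definition of Epsilon}, conjugate the supercharge, and reduce the Fredholmness and index claims to unitary invariance together with $\ker A^*A = \ker A$. The only cosmetic differences are that you obtain the vanishing diagonal blocks directly from the anticommutation $\{Q,\varGamma\}=0$ (a slightly leaner observation than the paper's $\gamma_\pm$-intertwining in its Lemma~3.4), and that you expand $2iQ=\varGamma C-C\varGamma$ directly rather than first forming $U_\epsilon$ and then taking $U_\epsilon-U_\epsilon^*$; the resulting algebra is identical.
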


\begin{remark}
A direct computation shows that the supercharge $Q$ itself is not representable as an off-diagonal matrix with respect to the $\ell^2(\Z)$-decomposition $\rH = \ell^2(\Z) \oplus \ell^2(\Z)$, unlike the standard representation \cref{Equation: Matrix Representation of Q} which makes use of the canonical decomposition $\rH = \rH_+ \oplus \rH_-.$ To avoid confusion, we shall henceforth adhere to the convention that the round parentheses are used in the former representations, whereas the square parentheses are used in the latter representations.
\end{remark}

\subsection{The significance of diagonalisation}

The main result of the current section, \cref{Theorem: Witten Index Formula}, might look rather technical at first glance, but as we shall see shortly the basic idea behind the proof is nothing but simple diagonalisation of the shift operator as in the following lemma;

\begin{lemma}
\label{Theorem: Diagonalisation of the Shift Operator}
Let $(\varGamma,C)$ be a split-step SUSYQW. The operator $\epsilon$ given by \cref{Equation: Definition of Epsilon} is a unitary operator which diagonalises the shift operator $\varGamma$ as follows:
\begin{equation}
\label{Equation: Diagonalisation of the Shift Operator}
\epsilon^* \varGamma \epsilon =
\begin{pmatrix}
1 & 0 \\
0 & -1
\end{pmatrix}.
\end{equation}
\end{lemma}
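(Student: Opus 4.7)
The plan is to verify both claims by direct block computation on $\ell^2(\Z)\oplus\ell^2(\Z)$; there is no conceptual difficulty, only bookkeeping, because every scalar coefficient commutes with $L$ and $L^*$. The one algebraic identity that drives the whole argument is
\[
|q|^2 \;=\; 1-p^2 \;=\; (1-p)(1+p), \qquad \text{so}\qquad |q| \;=\; \sqrt{1-p}\,\sqrt{1+p},
\]
which makes sense because $p^2+|q|^2=1$ forces $|p|<1$ (recall $q\neq 0$ in \cref{Definition: Split-step SUSYQW}). Together with $q = |q|\,e^{i\theta}$ and $LL^* = L^*L = 1$, this is all that is needed.

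First I would verify $\epsilon^*\epsilon = 1 = \epsilon\epsilon^*$. Forming the four blocks of $\epsilon^*\epsilon$, the diagonal entries evaluate to $\tfrac{1}{2}\bigl((1+p)+(1-p)LL^*\bigr) = 1$ and $\tfrac{1}{2}\bigl((1-p)+(1+p)LL^*\bigr) = 1$, while the off-diagonal entries telescope to $\tfrac{1}{2}\sqrt{(1-p)(1+p)}\,(-1+LL^*) = 0$. The computation for $\epsilon\epsilon^*$ is entirely parallel, using $L^*L = 1$ to collapse the shift factors on the second row.

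Next I would compute $\varGamma\epsilon$ entry by entry, substituting $q = |q|\,e^{i\theta}$ and collapsing products of the form $L\cdot e^{-i\theta}L^* = e^{-i\theta}$ wherever they occur. The four entries reduce, via the telescopings
\[
p\sqrt{1+p} + |q|\sqrt{1-p} \;=\; \sqrt{1+p},\qquad
|q|\sqrt{1+p} - p\sqrt{1-p} \;=\; \sqrt{1-p},
\]
and their sign-reversed analogues, to the matrix identity
\[
\varGamma\epsilon \;=\; \epsilon\begin{pmatrix} 1 & 0 \\ 0 & -1\end{pmatrix}.
\]
Right-multiplying by $\epsilon^*$ and invoking the unitarity just established then yields \cref{Equation: Diagonalisation of the Shift Operator}.

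There is no genuine ``main obstacle'' here; the content of the lemma is merely that the columns of $\epsilon$ are orthonormal eigen-operators of $\varGamma$ at $\pm 1$, which is exactly what one expects from a unitary involution. The only mild subtlety worth flagging is that these eigenvectors must be realised via \emph{operators} involving $L^*$ rather than mere Hilbert-space vectors, because $\varGamma$ couples neighbouring lattice sites; this is why $\epsilon$ carries a shift factor in its second row, and why care is needed to keep $L$ on the left of $L^*$ (or vice versa) throughout the computation so that the simplifications $LL^*=L^*L=1$ can actually be applied.
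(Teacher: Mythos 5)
Your proof is correct and takes essentially the same route as the paper: a direct block computation driven by $|q|=\sqrt{(1-p)(1+p)}$ and $LL^*=L^*L=1$. The only organizational difference is that the paper first records two general conjugation identities (how $\epsilon^*(\,\cdot\,)\epsilon$ acts on arbitrary diagonal and off-diagonal block matrices, which it then reuses in the proof of \cref{Theorem: Witten Index Formula}) and obtains \cref{Equation: Diagonalisation of the Shift Operator} by summing the two pieces, whereas you verify $\varGamma\epsilon=\epsilon\,\mathrm{diag}(1,-1)$ directly; both reduce to the same scalar telescopings and there is no substantive difference.
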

\begin{proof}
It is left as an easy exercise for the reader to verify that $\epsilon$ is unitary, and that the following two equalities hold true:
\begin{align}
\label{Equation1: Unitary Transform by Epsilon}
2\epsilon^*
\begin{pmatrix}
X & 0 \\
0 & X'
\end{pmatrix}
\epsilon &=
\begin{pmatrix}
(1 + p)X + (1-p)LX'L^* & - |q|(X - LX'L^*) \\
- |q|(X - LX'L^*) & (1-p)X + (1 + p)LX'L^*
\end{pmatrix}, \\ 
\label{Equation2: Unitary Transform by Epsilon}
2\epsilon^*
\begin{pmatrix}
0 & Y' \\
Y & 0
\end{pmatrix}
\epsilon &=
\begin{pmatrix}
qLY + q^*Y'L^* & \frac{-(1-p) qLY + (1 + p) q^*Y'L^*}{|q|} \\
\frac{(1 + p)qLY - (1 - p) q^*Y'L^*}{|q|} & -qLY - q^*Y'L^*
\end{pmatrix}.
\end{align}
With these two equalities in mind, we obtain \cref{Equation: Diagonalisation of the Shift Operator} as follows:
\begin{align*}
2\epsilon^* \varGamma \epsilon
&=
2\epsilon^*
\begin{pmatrix}
p & 0 \\
0 & -p
\end{pmatrix}
\epsilon
+
2\epsilon^*
\begin{pmatrix}
0 & qL \\
q^*L^* & 0
\end{pmatrix}
\epsilon \\
&=
\begin{pmatrix}
2p^2 & -2p|q| \\
-2p|q| & -2p^2
\end{pmatrix}
+
\begin{pmatrix}
2|q|^2 & 2p|q| \\
2p|q| & -2|q|^2
\end{pmatrix} =
\begin{pmatrix}
2 & 0 \\
0 & -2
\end{pmatrix}.
\end{align*}
\end{proof}

\begin{remark}
The diagonalisation of the form \cref{Equation: Diagonalisation of the Shift Operator} is not unique. Indeed, as the experienced reader might immediately notice, one can introduce the discrete Fourier transform $\sF$ following \cite{GJS} and consider following unitary transform;
\begin{equation}
\label{Equation: Fourier Transform of Shift} 
\sF  \varGamma \sF^{-1} 
=
\begin{pmatrix}
p & q e^{i(\cdot)}\\
q^* e^{-i(\cdot)} & -p
\end{pmatrix},
\end{equation}
where the right-hand side is diagonalisable in infinitely many different ways. Since the transform \cref{Equation: Fourier Transform of Shift} is reversible, we can then obtain diagonalisation of the form \cref{Equation: Diagonalisation of the Shift Operator}. The unitary operator $\epsilon$ given explicitly by \cref{Equation: Definition of Epsilon} is constructed in this precise manner.
\end{remark}

In what follows, we shall make use of the unitary invariance of the Witten index as in \cref{Theorem: Invariance of the Witten Index}. Let us fix an arbitrary unitary operator $\epsilon$ which gives the diagonalisation \cref{Equation: Diagonalisation of the Shift Operator}. We can then consider a new unitarily equivalent SUSYQW given by $(\varGamma_\epsilon,C_\epsilon) := (\epsilon^*\varGamma\epsilon, \epsilon^* C \epsilon).$ Since the new shift operator $\varGamma_\epsilon$ is given by \cref{Equation: Diagonalisation of the Shift Operator}, we see immediately that the two subspaces $\rH_{\epsilon_\pm} := \ker(\varGamma_\epsilon \mp 1)$ are given respectively by 
\[
\rH_{\epsilon_+} = \ell^2(\Z) \oplus \{0\} \mbox{ and } \rH_{\epsilon_-} = \{0\} \oplus \ell^2(\Z).
\]
Since the two subspaces $\rH_{\epsilon_\pm}$ can be canonically identified with $\ell^2(\Z),$ the following abstract version of \cref{Theorem: Witten Index Formula} holds true;

\begin{lemma}
\label{Lemma: Diagonalisation of Shift Implies Off-diagonalisation of Super Charge}
Let $(\varGamma,C)$ be a split-step SUSYQW, and let $\epsilon$ be any unitary operator which gives diagonalisation \cref{Equation: Diagonalisation of the Shift Operator}. Then the new supercharge $Q_\epsilon := \epsilon^* Q \epsilon$ admits the following off-diagonal block matrix representation with respect to the decomposition $\rH = \ell^2(\Z) \oplus \ell^2(\Z):$
\begin{equation}
\label{Equation: Skew-diagonal Representation of Qepsilon}
Q_\epsilon = 
\begin{pmatrix}
0 & Q_{\epsilon_-} \\
Q_{\epsilon_+} & 0
\end{pmatrix}_{\ell^2(\Z) \oplus \ell^2(\Z)}.
\end{equation}
Furthermore, $(\varGamma,C)$ is a Fredholm operator if and only if  $Q_{\epsilon_+} : \ell^2(\Z) \to \ell^2(\Z)$ is a Fredholm operator. In this case, we have
\[
\ind(\varGamma,C) = \ind Q_{\epsilon_+} = \dim \ker Q_{\epsilon_+} - \dim \ker Q_{\epsilon_-}.
\]
\end{lemma}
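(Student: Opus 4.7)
My plan is to transport the entire supersymmetric structure along the unitary $\epsilon$ and then read off the claim from the general block-operator description of supercharges recalled in \cref{Section: Supersymmetry}. First, I would set $\varGamma_\epsilon := \epsilon^* \varGamma \epsilon$ and $C_\epsilon := \epsilon^* C \epsilon$; both remain unitary involutions, so $(\varGamma_\epsilon, C_\epsilon)$ is again a SUSYQW in the abstract sense, with supercharge $Q_\epsilon = [\varGamma_\epsilon, C_\epsilon]/2i = \epsilon^* Q \epsilon$ and superhamiltonian $H_\epsilon = Q_\epsilon^2 = \epsilon^* H \epsilon$. By hypothesis, $\varGamma_\epsilon = \mathrm{diag}(1,-1)$ with respect to the decomposition $\rH = \ell^2(\Z) \oplus \ell^2(\Z)$, so the canonical spectral subspaces are exactly $\rH_{\epsilon_\pm} = \ker(\varGamma_\epsilon \mp 1) = \ell^2(\Z) \oplus \{0\}$ and $\{0\} \oplus \ell^2(\Z)$, which I identify with $\ell^2(\Z)$ in the obvious way.

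The key algebraic point is that $Q_\epsilon$ inherits the anti-commutation relation $\{Q_\epsilon, \varGamma_\epsilon\} = 0$. Writing $Q_\epsilon$ as a $2 \times 2$ block-operator matrix $(A_{ij})_{i,j=1,2}$ with respect to $\ell^2(\Z) \oplus \ell^2(\Z)$ and inserting $\varGamma_\epsilon = \mathrm{diag}(1,-1)$ forces the diagonal blocks $A_{11}$ and $A_{22}$ to vanish. Thus $Q_\epsilon$ is automatically off-diagonal in this decomposition, and relabelling the two surviving blocks as $Q_{\epsilon_+}$ (lower-left) and $Q_{\epsilon_-}$ (upper-right) yields the representation \cref{Equation: Skew-diagonal Representation of Qepsilon}. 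Self-adjointness of $Q_\epsilon$ gives $Q_{\epsilon_-} = Q_{\epsilon_+}^*$, in agreement with \cref{Equation: Matrix Representation of Q}.

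Next I would identify this off-diagonalisation with the standard one. Under the identifications $\rH_{\epsilon_+} \cong \ell^2(\Z) \cong \rH_{\epsilon_-}$, the block $Q_{\epsilon_+} : \ell^2(\Z) \to \ell^2(\Z)$ plays exactly the role of the operator denoted $Q_+$ in \cref{Equation: Matrix Representation of Q} for the SUSYQW $(\varGamma_\epsilon, C_\epsilon)$. Therefore, by the very definition of Fredholmness for a SUSYQW, $(\varGamma_\epsilon, C_\epsilon)$ is Fredholm if and only if $Q_{\epsilon_+}$ is Fredholm in the usual sense on $\ell^2(\Z)$, and in that case the Witten index of $(\varGamma_\epsilon, C_\epsilon)$ coincides with the Fredholm index $\ind Q_{\epsilon_+} = \dim \ker Q_{\epsilon_+} - \dim \ker Q_{\epsilon_-}$.

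Finally, I would invoke the unitary invariance of the Witten index summarised in \cref{Section: Appendix}: since $\epsilon$ is unitary, $\mathrm{ind}(\varGamma,C) = \mathrm{ind}(\varGamma_\epsilon, C_\epsilon)$, and Fredholmness is likewise preserved. Chaining these two equalities produces the stated identity. The only step requiring any thought is the passage from $\{Q_\epsilon, \varGamma_\epsilon\} = 0$ to the vanishing of the diagonal blocks, but this is a one-line computation once $\varGamma_\epsilon$ has the form $\mathrm{diag}(1,-1)$; there is no genuine obstacle, and in particular no explicit use of the formulas \cref{Equation: Definition of Epsilon} and \cref{Equation: Explicit Definition of Qepsilon} is needed at this level of generality.
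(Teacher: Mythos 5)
Your proposal is correct and follows essentially the same route as the paper: transport the SUSYQW along $\epsilon$, use the diagonal form of $\varGamma_\epsilon$ to identify $\rH_{\epsilon_\pm}$ with $\ell^2(\Z)$, and finish by invoking the unitary invariance of the Witten index (\cref{Theorem: Invariance of the Witten Index}). The only presentational difference is that the paper introduces explicit intertwiners $\gamma_\pm : \ell^2(\Z) \to \rH_{\epsilon_\pm}$ and checks the unitary equivalence of $Q'_{\epsilon_\mp}Q'_{\epsilon_\pm}$ with $Q_{\epsilon_\mp}Q_{\epsilon_\pm}$, whereas you obtain the off-diagonal form directly from $\{Q_\epsilon,\varGamma_\epsilon\}=0$ with $\varGamma_\epsilon=\mathrm{diag}(1,-1)$ — a slightly more streamlined derivation of the same fact.
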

\begin{proof}
As in \cref{Section: Supersymmetry} the new supercharge $Q_\epsilon = \epsilon^* Q \epsilon$ admits
\[
Q_\epsilon = 
\begin{bmatrix}
0 & Q'_{\epsilon_-} \\
Q'_{\epsilon_+} & 0
\end{bmatrix}_{\rH_{\epsilon_+} \oplus \rH_{\epsilon_-}}.
\]
Observe first that $\ell^2(\Z)$ can be canonically identified with $\rH_\pm$ by the unitary operators $\gamma_\pm : \ell^2(\Z) \to \rH_\pm$ defined respectively by the following formulas:
\[
\gamma_+(\Psi) := 
\begin{pmatrix}
\Psi \\ 0
\end{pmatrix} \mbox{ and }
\gamma_-(\Psi) := 
\begin{pmatrix}
0 \\ \Psi
\end{pmatrix},
\qquad \Psi \in \ell^2(\Z).
\]
If we let $Q_{\epsilon_\pm} = \gamma_\mp Q'_{\epsilon_\pm} \gamma_\pm,$ then $\gamma_\mp^* Q_{\epsilon_\pm} = Q'_{\epsilon_\pm} \gamma_\pm.$ More explicitly,
\[
Q'_{\epsilon_+}
\begin{pmatrix}
\Psi \\ 0
\end{pmatrix} = 
\begin{pmatrix}
0 \\ Q_{\epsilon_+}\Psi
\end{pmatrix} \mbox{ and }
Q'_{\epsilon_-}
\begin{pmatrix}
0 \\ \Psi
\end{pmatrix} = 
\begin{pmatrix}
Q_{\epsilon_-}\Psi \\ 0
\end{pmatrix},
\qquad \Psi \in \ell^2(\Z).
\]
With these two equalities in mind, we obtain
\[
Q_\epsilon 
\begin{pmatrix}
\Psi_1 \\ \Psi_2 
\end{pmatrix}
=
Q_\epsilon 
\begin{pmatrix}
\Psi_1 \\ 0
\end{pmatrix}
+
Q_\epsilon 
\begin{pmatrix}
0 \\ \Psi_2 
\end{pmatrix}
=
Q'_{\epsilon_+} 
\begin{pmatrix}
\Psi_1 \\ 0
\end{pmatrix}
+
Q'_{\epsilon_-}
\begin{pmatrix}
0 \\ \Psi_2 
\end{pmatrix}
=
\begin{pmatrix}
Q_{\epsilon_-}\Psi_2 \\ Q_{\epsilon_+}\Psi_1
\end{pmatrix},
\qquad 
\begin{pmatrix}
\Psi_1 \\ \Psi_2 
\end{pmatrix} \in \rH.
\]
Therefore, \cref{Equation: Skew-diagonal Representation of Qepsilon} holds true. Here, the following easy computation shows that the operators $Q'_{\epsilon_\mp}Q'_{\epsilon_\pm}$ and $Q_{\epsilon_\mp}Q_{\epsilon_\pm}$ are unitarily equivalent:
\[
\gamma^*_\pm Q'_{\epsilon_\mp}Q'_{\epsilon_\pm} \gamma_\pm = (\gamma^*_\pm Q'_{\epsilon_\mp} \gamma_\mp) (\gamma^*_\mp Q'_{\epsilon_\pm} \gamma_\pm ) = Q_{\epsilon_\mp}Q_{\epsilon_\pm}.
\]
With this fact in mind, we obtain the following two equalities:
\begin{align*}
&\dim \ker Q'_{\epsilon_\mp}Q'_{\epsilon_\pm} =  \dim \ker Q_{\epsilon_\mp}Q_{\epsilon_\pm}, \\
&\sigma(Q'_{\epsilon_\mp}Q'_{\epsilon_\pm}) \setminus \{0\} = \inf \sigma(Q_{\epsilon_\mp}Q_{\epsilon_\pm}) \setminus \{0\}.
\end{align*}
That is, $(\varGamma_\epsilon,C_\epsilon)$ is Fredholm if and only if $Q_{\epsilon_+}$ is a Fredholm operator and 
\[
\ind(\varGamma_\epsilon,C_\epsilon) = \ind Q'_{\epsilon_+} = \dim \ker Q_{\epsilon_+} -  \dim \ker Q_{\epsilon_-}.
\]
The claim now follows from \cref{Theorem: Invariance of the Witten Index}.
\end{proof}

\subsection{Proof of \cref{Theorem: Witten Index Formula}}

By virtue of \cref{Lemma: Diagonalisation of Shift Implies Off-diagonalisation of Super Charge} we may choose to work with any unitary $\epsilon$ which gives diagonalisation \cref{Equation: Diagonalisation of the Shift Operator} in order to compute the Witten index. In particular, as in \cref{Theorem: Witten Index Formula}, we shall henceforth work with the one given explicitly by \cref{Equation: Definition of Epsilon} in this paper. In order to prove \cref{Theorem: Witten Index Formula}, it remains to show that \cref{Equation: Explicit Definition of Qepsilon} holds true:

\begin{proof}[Proof of Equality \cref{Equation: Explicit Definition of Qepsilon}]
Let $\epsilon$ be the unitary operator given by \cref{Equation: Definition of Epsilon}. We shall first find the matrix representation of the time evolution $U_\epsilon := \epsilon^* U \epsilon.$  We obtain
\begin{equation}
\label{Equation: Uepsilon}
U_\epsilon = 
\epsilon^* U \epsilon
= (\epsilon^* \varGamma \epsilon) (\epsilon^* C \epsilon)
= 
\begin{pmatrix}
1 & 0 \\
0 & -1
\end{pmatrix}
\epsilon^* 
\begin{pmatrix}
a_1 & b^* \\
b & a_2
\end{pmatrix}
\epsilon =: \frac{1}{2}
\begin{pmatrix}
U'_+ & U_- \\
U_+ & U'_- 
\end{pmatrix},
\end{equation}
where \crefrange{Equation1: Unitary Transform by Epsilon}{Equation2: Unitary Transform by Epsilon} allow us to prove:
\begin{align*}
2\epsilon^*
\begin{pmatrix}
a_1 & 0 \\
0 & a_2
\end{pmatrix}
\epsilon &=
\begin{pmatrix}
(1 + p)a_1 + (1-p)a_2(\cdot + 1) & - |q|(a_1 - a_2(\cdot + 1)) \\
- |q|(a_1 - a_2(\cdot + 1)) & (1-p)a_1 + (1 + p)a_2(\cdot + 1)
\end{pmatrix}, \\ 
2\epsilon^*
\begin{pmatrix}
0 & b^* \\
b & 0
\end{pmatrix}
\epsilon &=
\begin{pmatrix}
qLb + q^* b^* L^* & \frac{-(1-p) q Lb + (1 + p) q^* b^*L^*}{|q|} \\
\frac{(1 + p)qLb - (1 - p) q^* b^* L^*}{|q|} & -qLb - q^*b^*L^*
\end{pmatrix},
\end{align*}
where $La_2 = a_2(\cdot + 1) L$ is used in the first equality. Note that \cref{Equation: Uepsilon} becomes
\[
2\epsilon^*
\begin{pmatrix}
a_1 & 0 \\
0 & a_2
\end{pmatrix}
\epsilon +
2\epsilon^*
\begin{pmatrix}
0 & b^* \\
b & 0
\end{pmatrix}
\epsilon
=
2
\epsilon^* 
\begin{pmatrix}
a_1 & b^* \\
b & a_2
\end{pmatrix} \epsilon = 
\begin{pmatrix}
1 & 0 \\
0 & -1
\end{pmatrix}
\begin{pmatrix}
U'_+ & U_- \\
U_+ & U'_- 
\end{pmatrix}
=
\begin{pmatrix}
U'_+ & U_- \\
-U_+ & -U'_- 
\end{pmatrix}.
\]
It can then be shown that the following equalities hold true:
\begin{align}
\label{Equation1: Diagonal Entries of Uepsilon}
U'_\pm & = qLb + q^* b^* L^* \pm (1 \pm p)a_1 \pm (1 \mp p)a_2(\cdot + 1), \\
\label{Equation2: Diagonal Entries of Uepsilon}
U_\pm& = -(1 \pm p) e^{i\theta}Lb + (1 \mp p) e^{-i\theta}b^*L^* \pm |q|(a_1 - a_2(\cdot + 1)) = 2i Q_{\epsilon_\pm}.
\end{align}
We are now in a position to prove \cref{Equation: Explicit Definition of Qepsilon}. We get
\[
2iQ_\epsilon = U_\epsilon -  U^*_\epsilon 
=
\frac{1}{2} 
\begin{pmatrix}
U_+ & U'_- \\
U'_+ & U_- 
\end{pmatrix}
-
\frac{1}{2} 
\begin{pmatrix}
U^*_+ & (U'_+)^* \\
(U'_-)^* & U^*_- 
\end{pmatrix}
 = 
\begin{pmatrix}
0 & \frac{U'_- - (U'_+)^*}{2} \\
\frac{U'_+ - (U'_-)^*}{2}  & 0
\end{pmatrix},
\]
where the last equality follows from the fact that $U_\pm$ given by \cref{Equation1: Diagonal Entries of Uepsilon} are both self-adjoint. On the other hand, $U_\pm$ given by \cref{Equation2: Diagonal Entries of Uepsilon} admit $(U'_\mp)^* = -U'_\pm,$ and so
\[
2iQ_\epsilon
=
\begin{pmatrix}
0 & \frac{U'_- - (U'_+)^*}{2} \\
\frac{U'_+ - (U'_-)^*}{2}  & 0
\end{pmatrix}
=
\begin{pmatrix}
0    & U'_- \\
U'_+ & 0
\end{pmatrix} = 
\begin{pmatrix}
0    & 2iQ_{\epsilon_-} \\
2iQ_{\epsilon_+} & 0
\end{pmatrix}.
\]
The claim follows.
\end{proof}

\subsection{Coin operators with trivial limits}

We shall conclude the current section with one simple corollary of \cref{Theorem: Witten Index Formula}. Recall that in \cref{Theorem: MainTheorem} the case where at least one of the two limits $C(\rL)$ and $C(\rR)$ is a trivial unitary involution is excluded. The following result explains why.

\begin{lemma}
\label{Lemma: Infinite Dimensional Ker Qepsilon}
If $(\varGamma,C)$ is a split-step SUSYQW endowed with an anisotropic coin $C$ with the property that at least one of the two limits $C(\rL)$ and $C(\rR)$ is trivial, then $\dim \ker Q_{\epsilon_\pm} = \infty.$ That is, $(\varGamma,C)$ automatically fails to be Fredholm in this case.
\end{lemma}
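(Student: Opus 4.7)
The plan is to exploit the algebraic interlock between conditions \cref{Equation1: Condition on C} and \cref{Equation2: Condition on C} together with the discreteness of $\{-1,+1\}$ in order to upgrade the hypothesis ``$C(\sharp)$ is trivial'' from a limiting statement to the stronger assertion that $C(x) = C(\sharp)$ holds identically on a one-sided tail near $\sharp$. The main conceptual hurdle, and the step I expect to be the main obstacle, is precisely this passage: a priori the entries of $C(x)$ could oscillate arbitrarily close to $\sharp$ while still converging, and one would then be forced to analyse a genuinely non-trivial three-term recurrence to exhibit kernel elements. What rules this out is the observation that \cref{Equation1: Condition on C} makes $b(x)$ vanish the moment $a_1(x) + a_2(x) \neq 0$, and that the resulting $\{-1,+1\}$-valued $a_j$ must then be eventually constant by discreteness.

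More concretely, I would fix $\sharp \in \{\rL,\rR\}$ with $C(\sharp) = \delta I$ for some $\delta \in \{\pm 1\}$, so that $a_1(\sharp) + a_2(\sharp) = 2\delta \neq 0$. Combined with the anisotropy of $C$, this yields an integer $N$ such that $a_1(x) + a_2(x) \neq 0$ for every $x$ on the $\sharp$-side of $N$. Condition \cref{Equation1: Condition on C} then forces $b(x) = 0$ on this tail, whereupon \cref{Equation2: Condition on C} gives $a_j(x) \in \{-1,+1\}$ for $j=1,2$. Since $a_j(x) \to \delta$ and the target set is discrete, each $a_j$ is eventually equal to $\delta$, so after possibly enlarging $|N|$ one has $C(x) = \delta I$ identically for every $x$ on the $\sharp$-side of $N$.

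With this eventual flatness in hand, the three coefficient sequences $b(x+1)$, $b(x)$ and $a_2(x+1) - a_1(x)$ appearing in the explicit formula \cref{Equation: Explicit Definition of Qepsilon} all vanish whenever both $x$ and $x+1$ lie in the flat tail. A short check at the boundary shows that in fact $Q_{\epsilon_\pm}\Psi = 0$ for every $\Psi \in \ell^{2}(\Z)$ supported strictly inside this tail, and this half-line subspace is manifestly infinite-dimensional. The failure of Fredholmness for $(\varGamma,C)$ then follows directly from \cref{Theorem: Witten Index Formula}.
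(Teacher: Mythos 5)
Your proof is correct and establishes the full claim, but the route is genuinely different from the paper's. The paper's proof invokes the topological invariance principle \cref{Equation: Topological Invariance}: it declares ``we may assume without loss of generality that $C(x) = C(\rL)$ for each $x \leq 0$'' and then reads off the infinite kernel for that modified coin. Strictly speaking, a compact perturbation preserves Fredholmness but not kernel dimensions, so the paper's reduction directly yields only the second half of the conclusion (failure of Fredholmness for the original coin); it does not literally transfer the equality $\dim\ker Q_{\epsilon_\pm}=\infty$ back to the unmodified $C$. Your argument sidesteps this entirely by observing that the WLOG is in fact \emph{automatic}: because $b(a_1+a_2)=0$ and $a_j^2+|b|^2=1$ together with $a_1(x)+a_2(x) \to 2\delta \neq 0$ force $b(x)=0$ and then $a_j(x)\in\{\pm1\}$ on a tail, and discreteness upgrades convergence $a_j(x)\to\delta$ to eventual equality, the coin is necessarily eventually equal to $\delta I$ near $\sharp$. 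From there the vanishing of the three coefficient sequences $b(\cdot+1)$, $b$, $a_2(\cdot+1)-a_1$ on the tail, together with the boundary check you flag (taking $\Psi$ supported one step strictly inside the tail so the lone surviving coefficient $b(N)$ is already zero), gives $\dim\ker Q_{\epsilon_\pm}=\infty$ for the actual coin, not a modified one. This is a cleaner and slightly stronger argument: it avoids the invariance machinery and proves the kernel-dimension statement as literally stated, whereas the paper's proof is shorter but only rigorously delivers the Fredholmness corollary. Both are valid ways to reach the conclusion the paper actually uses downstream.
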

\begin{proof}
We may assume without loss of generality that $C(\rL)$ is trivial, and that $C(x) = C(L)$ for each $x \leq 0$ due to the topological invariance \cref{Equation: Topological Invariance}. If $\Psi_\pm \in \ker Q_{\epsilon_\pm},$ then it follows from \cref{Equation: Explicit Definition of Qepsilon} that
\[
0 - 0 \pm |q|(a_2(x+1) - a_1(x)) \Psi_\pm(x) = 0, \qquad x \leq -1,
\]
where $a_2(x+1) - a_1(x) = a_2(\rL) - a_1(\rL) =  0.$ Thus, for each $x \leq -1$ the vectors $\Psi_\pm(x) \in \C^2$ can be freely chosen regardless of the other required conditions $ (Q_{\epsilon_\pm}\Psi_\pm)(x) = 0$ for each $x \geq 0.$ This  implies $\dim \ker Q_{\epsilon_\pm} = \infty.$
\end{proof}

\section{Classification of \texorpdfstring{$\dim \ker Q_{\epsilon_\pm}$}{dimkerQepsilon}}
\label{Section: Classification of Dimensions}

\subsection{The main result}
\label{Subsection: Classification of Dimensions}

In order to state the main theorem of the current section, we introduce the following definition;
\begin{definition}
\label{Assumption: Coin Type}
Let $(\varGamma,C)$ be a split-step SUSYQW with an anisotropic coin $C$. We shall consider the following mutually exclusive cases: 
\begin{align}
\label{Equation: Definition of Type I} \tag{I} b(\rL) = 0 \mbox{ and } b(\rR) = 0,\\
\label{Equation1: Definition of Type II} \tag{II} b(\rL) = 0 \mbox{ and } b(\rR) \neq 0, \\
\label{Equation2: Definition of Type II} \tag{II'} b(\rL) \neq 0 \mbox{ and } b(\rR) = 0, \\
\label{Equation: Definition of Type III} \tag{III} b(\rL) \neq 0 \mbox{ and } b(\rR) \neq 0.
\end{align}
We say that the coin operator $C$ is of \textbi{Type I}, if the two unitary involutions $C(\rL)$ and $C(\rR)$ are both non-trivial and if (I) holds true. Type II, II', III coins are defined likewise. That is, we shall always assume that \cref{Equation: Property of Nontrivial Unitary Involutions} holds true for each $\sharp = \rL,\rR,$ whenever the four types of the isotropic coin thus defined.
\end{definition}

With this definition in mind, the ultimate aim of the current section is to prove the following classification result:
\begin{theorem}
\label{Theorem: Classification of Dimensions}
Let $(\varGamma,C)$ be a split-step SUSYQW, and let $C$ be an anisotropic coin of the following specific form;
\begin{equation}
\label{Equation: Simplified Coin}
C(x) = 
\begin{cases}
C(\rR), & x \geq 1, \\
C(\rL), & x \leq 0,
\end{cases}
\end{equation} 
where $C(\sharp)$ is assumed to be non-trivial for each $\sharp  =\rL,\rR.$ Let $d_\pm = \dim \ker Q_{\epsilon_\pm}.$
\begin{enumerate}
\item If $C$ is of Type \ref{Equation: Definition of Type I}, then $d_\pm$ are uniquely determined by the pair $(a(\rL),a(\rR)):$
\begin{equation}
\label{Equation: Classification of Type I}
d_\pm = 
\begin{cases}
1, & a(\rL)a(\rR) < 0,\\
0, & a(\rL)a(\rR) > 0.
\end{cases}
\end{equation}
\item If $C$ is of Type \ref{Equation1: Definition of Type II}, then $d_\pm$ are uniquely determined by the triple $(p,a(\rL),a(\rR)):$ 
\begin{equation}
\label{Equation1: Classification of Type II}
d_\pm = 
\begin{cases}
1, &\mp p + a(\rL)a(\rR) < 0, \\
0, &\mp p + a(\rL)a(\rR) \geq 0.
\end{cases}
\end{equation}
\item If $C$ is of Type \ref{Equation2: Definition of Type II}, then $d_\pm$ are uniquely determined by the triple $(p,a(\rL),a(\rR)):$ 
\begin{equation}
\label{Equation2: Classification of Type II}
d_\pm = 
\begin{cases}
1, &\pm p + a(\rL)a(\rR) < 0, \\
0, &\pm p + a(\rL)a(\rR) \geq 0, \\
\end{cases}
\end{equation}
\item If $C$ is of Type \ref{Equation: Definition of Type III}, then $d_\pm$ are uniquely determined by the triple $(p,a(\rL),a(\rR)):$
\begin{equation}
\label{Equation: Classification of Type III}
d_\pm = 
\begin{cases}
1, &   a(\rR) < \pm p < a(\rL), \\
1, &  a(\rL) < \mp p  < a(\rR), \\
0, & \mbox{otherwise}.
\end{cases}
\end{equation}
\end{enumerate}
\end{theorem}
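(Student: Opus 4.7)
The plan is to translate $(Q_{\epsilon_\pm}\Psi)(x) = 0$ into the explicit three-term linear recurrence
\begin{equation*}
(1 \pm p)e^{i\theta}b(x+1)\Psi(x+1) - (1 \mp p)e^{-i\theta}b(x)^*\Psi(x-1) \pm |q|\bigl(a_2(x+1) - a_1(x)\bigr)\Psi(x) = 0
\end{equation*}
read off from \cref{Equation: Explicit Definition of Qepsilon}, to exploit the piecewise-constant structure of the simplified coin \cref{Equation: Simplified Coin} (constant $\rR$-coefficients on $\{x \ge 1\}$, constant $\rL$-coefficients on $\{x \le -1\}$, and a single \emph{mixed} equation at $x = 0$), and to solve each half-line recursion explicitly before matching at the origin.

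On a half-line with $b(\sharp) \neq 0$, the ansatz $\Psi(x) = \lambda^x$ produces a quadratic characteristic equation whose discriminant simplifies to $4|q|^2$ after invoking $|q|^2 = 1-p^2$ and $a(\sharp)^2 + |b(\sharp)|^2 = 1$; this yields two explicit roots whose moduli depend only on $p$ and $a(\sharp)$ and equal $1$ precisely when $|a(\sharp)| = |p|$. Under the Fredholmness assumption, $\ell^2$-summability then selects the decaying roots ($|\lambda| < 1$) on the right half-line and the growing roots ($|\lambda| > 1$) on the left. Where $b(\sharp) = 0$, the recurrence degenerates to a pointwise equation with non-zero coefficient $\mp 2|q|a(\sharp)$ (since $a(\sharp) \in \{\pm 1\}$ in that case), forcing $\Psi$ to vanish on that half-line. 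In Type I this immediately gives $\Psi(x) = 0$ for every $x \neq 0$, and the sole remaining relation at $x = 0$ depends only on the sign of $a(\rR) + a(\rL)$, yielding \cref{Equation: Classification of Type I}. In Types II and II' only one half-line contributes decaying solutions, and the single mixed equation at $x = 0$ reduces, via the explicit formulas for the roots, to a closed-form condition in $p$ and $a(\rL)a(\rR)$, establishing \cref{Equation1: Classification of Type II}--\cref{Equation2: Classification of Type II}.

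The main obstacle will be Type III, where both half-lines contribute decaying solutions and $d_\pm$ cannot be read off from a naive intersection count of the admissible subspaces. Here I would parametrise $\Psi$ on each side as a linear combination of the admissible exponentials, enforce continuity of $\Psi(0)$, and examine the mixed equation at $x = 0$. Using the relations satisfied by the characteristic roots (essentially $(1 \pm p)e^{i\theta}b(\sharp)\lambda = |q|(a(\sharp) \pm 1)$ in the $Q_{\epsilon_+}$ case, together with the companion identity for $(1 \mp p)e^{-i\theta}b(\sharp)^*\lambda^{-1}$), the mixed equation collapses algebraically to a condition involving only $p$, $a(\rL)$, $a(\rR)$. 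The real work is then the case-by-case bookkeeping across the sub-configurations of the signs of $p$ and of $a(\sharp) \mp p$, verifying that this condition is automatically satisfied precisely in the regimes $a(\rR) < \pm p < a(\rL)$ or $a(\rL) < \mp p < a(\rR)$ (so that $d_\pm = 1$) and imposes a non-trivial linear constraint otherwise (so that $d_\pm = 0$), thereby establishing \cref{Equation: Classification of Type III}.
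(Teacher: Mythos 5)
Your plan is substantively the same as the paper's: translating $(Q_{\epsilon_\pm}\Psi)(x)=0$ into a three-term recurrence with piecewise-constant coefficients, extracting the characteristic roots (whose discriminant collapses to $4|q|^2$ and whose moduli hit $1$ exactly when $|a(\sharp)|=|p|$), enforcing one-sided $\ell^2$-decay, treating the degenerate $b(\sharp)=0$ half-lines as forcing $\Psi\equiv0$ there, and finally matching across $x=0$. The paper phrases this via the companion matrices $A_\pm(x)$ and their eigenvectors $P_\pm(\sharp)$, but the eigenvalues $z_{\pm,j}(\sharp)$ are exactly your scalar characteristic roots, so the translation is purely cosmetic.

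One mechanism worth correcting in your Type III sketch: you describe the $x=0$ matching as a condition that \emph{becomes} satisfied in the regimes $a(\rR)<\pm p<a(\rL)$ or $a(\rL)<\mp p<a(\rR)$ and imposes a constraint otherwise. That is not quite how the computation goes. Using precisely the identities you list (namely $(1\pm p)e^{i\theta}b(\sharp)\,z_{\pm,j}(\sharp)=|q|\left((-1)^j\pm a(\sharp)\right)$ and its companion for $z_{\pm,j}(\sharp)^{-1}$), the mixed equation at $x=0$ is \emph{always} an identity when you glue the $j$-th root on $\{x\le 0\}$ to the $j$-th root on $\{x\ge 1\}$, and it is always a nondegenerate constraint for cross-$j$ gluings. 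This is exactly the content of the paper's \cref{Equation: Sandwitch of A by P}, which says the transfer matrix $A_\pm(0)$ is diagonal in the basis of outgoing/incoming eigenvectors. What actually depends on $(p,a(\rL),a(\rR))$ is not whether the matching holds, but whether the same index $j$ is simultaneously $\ell^2$-admissible on both half-lines, i.e. whether $|z_{\pm,j}(\rL)|^{-1}<1$ and $|z_{\pm,j}(\rR)|<1$ hold together; the classification \cref{Equation: Classification of Type III} then falls out directly from the explicit formulas $|z_{\pm,j}(\sharp)|=f(\mp p)f(\mp a(\sharp))$ or $f(\mp p)f(\pm a(\sharp))$. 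There is no further sign-case bookkeeping beyond reading off these two inequalities. (Also, the Fredholmness hypothesis is not needed for the $\ell^2$-selection step: a unit-modulus root simply fails to be square-summable, which already yields $d_\pm=0$ at the boundary $|p|=|a(\sharp)|$, consistent with the $\ge$ in the statement.)
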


\begin{remark}
\label{Remark: Zero Witten Index}
The following comments about \cref{Theorem: Classification of Dimensions} are worth mentioning:
\begin{enumerate}
\item Note that the ultimate purpose of the present paper is not the computation of each individual $d_\pm,$ but rather the difference $d_+ - d_-.$ Since the latter quantity is invariant under compact perturbations, we may impose \cref{Equation: Simplified Coin} without loss of generality.

\item If $p=0,$ then $d_+ = d_-$ regardless of the coin type:
\[
d_\pm = 
\begin{cases}
1, & a(\rL)a(\rR) < 0, \\
0, & a(\rL)a(\rR) \geq 0.
\end{cases}
\]
That is, the Witten index of $(\varGamma,C)$ is always zero in this case.
\end{enumerate}
\end{remark}

\subsection{Preliminaries}

\subsubsection{Notation}
We shall always adhere to the notation introduced here throughout the remaining part of the current section. Let $(\varGamma,C)$ be a split-step SUSYQW endowed with an anisotropic coin $C,$ and let $C(\sharp)$ be non-trivial for each $\sharp = \rL,\rR.$ Recall that $Q_{\epsilon_\pm}$ introduced in \cref{Theorem: Witten Index Formula} are operators of the following forms:
\[
Q_{\epsilon_\pm} = (1 \pm p) e^{i \theta}b(\cdot + 1)L - (1 \mp p)e^{-i \theta}b^*L^* \pm |q|(a_2(\cdot + 1) - a_1),
\]
where the unnecessary constant $-2i$ is removed for notational simplicity.  We have
\begin{align}
\label{Equation: Definition of Qepsilon}
Q_{\epsilon_\pm} &= \alpha_\pm(\cdot + 1)L- \alpha^*_\mp L^* \pm \beta, \\
\alpha_\pm &:= (1 \pm p) e^{i \theta}b, \\
\beta &:= |q|(a_2(\cdot + 1) - a_1),
\end{align}
where the two-sided limits of the last two sequences will be denoted respectively by 
\begin{align*}
\alpha_\pm(\sharp) &:= \lim_{x \to \sharp} \alpha_\pm(x) = (1 \pm p) e^{i \theta}b(\sharp),& &\qquad \sharp = \rL,\rR, \\
\beta(\sharp) &:= \lim_{x \to \sharp} \beta(x) = |q|(a_2(\sharp) - a_1(\sharp)) = -2|q|a(\sharp),& &\qquad \sharp = \rL,\rR,
\end{align*}
where the last equality follows from \cref{Equation: Property of Nontrivial Unitary Involutions}. We shall also make use of the simplification assumption \cref{Equation: Simplified Coin} throughout this subsection, so that
\begin{align} 
\alpha_\pm(x) &= 
\begin{cases}
\alpha_\pm(\rR), & x \geq 1, \\
\alpha_\pm(\rL), & x \leq 0,
\end{cases} \\
\beta(x) &= 
\begin{cases}
\beta(\rR), & x \geq 1, \\
-|q|(a(\rL) + a(\rR)), & x = 0,\\
\beta(\rL), & x \leq -1. 
\end{cases}
\end{align}

\subsubsection{A sketch for the proof of \cref{Theorem: Classification of Dimensions}}
\label{Section: Sketch}
The main theorem of the current section, \cref{Theorem: Classification of Dimensions}, does require a lengthy argument as we need to separately consider the four types of the coin operator. However, the basic idea behind the proof is in fact elementary. Note first that the equation $(Q_{\epsilon_\pm}\Psi)(x) = 0$ is equivalent to 
\begin{equation}
\label{Equation: Motivation about DE}
\alpha_\pm(x + 1)\Psi(x+1) - \alpha_\mp(x)^* \Psi(x-1)\pm \beta(x)\Psi(x) = 0.
\end{equation}
This equation, known as the second-order linear difference equation, can then be put into the following first-order matrix equation;
\begin{equation}
\label{Equation: Motivation about Matrix DE}
\begin{pmatrix}
\Psi(x+1) \\
\Psi(x)
\end{pmatrix}
=  
\begin{pmatrix}
\frac{\mp \beta(x)}{\alpha_\pm(x + 1)} & \frac{\alpha_\mp(x)^*}{\alpha_\pm(x + 1)} \\
1 & 0
\end{pmatrix}
\begin{pmatrix}
\Psi(x) \\ \Psi(x-1)
\end{pmatrix},
\end{equation}
whenever $\alpha_\pm(x + 1) \neq 0.$ This idea of transforming a difference equation to the associated matrix equation of less order is well-known (see, for example, \cite{Book:Elaydi05:AnIntroductionToDifferenceEquations}), and this is precisely the approach we are going to take. Note that we need not only to algebraically solve Equation \cref{Equation: Motivation about DE}, but also to ensure the solutions to be square summable. The following coefficient matrix shall be used throughout this section;
\begin{align}
\label{Equation: Definition of Apm}
A_\pm(x) &:=
\begin{pmatrix} 
\frac{\mp \beta(x)}{\alpha_\pm(x+1)} &  \frac{\alpha_\mp(x)^*}{\alpha_\pm(x+1)} \\
1 & 0
\end{pmatrix}, \\
\label{Equation: Definition of Apm Sharp}
A_\pm(\sharp) &:= \lim_{x \to \sharp} A_\pm(x) = 
\begin{pmatrix} 
\frac{\mp \beta(\sharp)}{\alpha_\pm(\sharp)} &  \frac{\alpha_\mp(\sharp)^*}{\alpha_\pm(\sharp)} \\
1 & 0
\end{pmatrix}, \qquad \sharp = \rL,\rR,
\end{align}
where $A_\pm(\rL)$ are well-defined if $C$ is of either Type \ref{Equation1: Definition of Type II} or \ref{Equation: Definition of Type III}, and $A_\pm(\rR)$ are well-defined if $C$ is of either Type \ref{Equation2: Definition of Type II} or \ref{Equation: Definition of Type III}.

\subsubsection{Abstract matrix difference equations}
We are interested in solving a \textbi{first-order linear matrix difference equation} which is an equation of the following form:
\begin{equation}
\label{Equation: General Matrix Difference Equation} 
\Phi(x+1) = A_0 \Phi(x), \qquad x \in \N,
\end{equation}
where $\N = \{1,2,\dots\}$ and $A_0$ is a fixed invertible $2 \times 2$ matrix. An easy inductive argument shows that \cref{Equation: General Matrix Difference Equation} is equivalent to the following equation:
\begin{equation}
\label{Equation: Inductive Easier Matrix Equation} 
\Phi(x+1) = \underbrace{A_0 \dots A_0}_{\mbox{$x$ times}} \Phi(1) = A_0^x \Phi(1), \qquad x \geq 0.
\end{equation}

We call any $\C^2$-valued sequence $\Phi$ satisfying \cref{Equation: General Matrix Difference Equation} an \textbi{algebraic solution} with in mind that it may fail to be square summable. It is easy to see from \cref{Equation: Inductive Easier Matrix Equation} that any algebraic solution $\Phi$ is uniquely determined by the initial value $\Phi(1).$ Given a $\C^2$-valued sequence $\Phi,$ we have that $\Phi$ is an algebraic solution to \cref{Equation: General Matrix Difference Equation} and $\sum_{x \in \N} \|\Phi(x)\|^2 < \infty$ if and only if $\Phi \in \ker(L \oplus L - \bigoplus_{x \in \Z} A_0).$ The following well-known result is included merely for the sake of completeness (See, for example, the proof of \cite[Theorem 2.15]{Book:Elaydi05:AnIntroductionToDifferenceEquations} which makes use of the discrete analogue of the Wronskian);

\begin{lemma}
\label{Lemma: Linear Independence of Solutions to Matrix Difference Equation}
Let $A_0$ be a fixed invertible $2 \times 2$ matrix, and let $\Phi,\Phi'$ be two algebraic solutions to the difference equation \cref{Equation: General Matrix Difference Equation}. Then $\Phi,\Phi'$ are linearly independent if and only if $\Phi(x_0),\Phi'(x_0)$ are linearly independent for any $x_0 \in \N.$
\end{lemma}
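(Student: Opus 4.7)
The plan is to exploit two facts that are already in place just before the lemma: the invertibility of $A_0$, and the iterative formula \cref{Equation: Inductive Easier Matrix Equation} showing that an algebraic solution $\Phi$ is uniquely determined by its value at $x = 1$. Together these give, for any $x_0, y \in \N$, the relation $\Phi(y) = A_0^{y - x_0}\Phi(x_0)$ (with negative exponents interpreted via $A_0^{-1}$, which is permissible precisely because $A_0$ is assumed invertible). Thus the values of an algebraic solution at any two points of $\N$ are related by an invertible linear transformation, and one should expect linear (in)dependence at one point to propagate to every point.

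First I would dispose of the easy direction. If $\Phi,\Phi'$ are linearly dependent as sequences, there exist scalars $(\lambda,\lambda') \neq (0,0)$ with $\lambda \Phi + \lambda'\Phi' = 0$ identically, and evaluating at the chosen $x_0$ immediately gives $\lambda \Phi(x_0) + \lambda'\Phi'(x_0) = 0$, so $\Phi(x_0),\Phi'(x_0)$ are linearly dependent. This direction uses neither the difference equation nor the invertibility of $A_0$.

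For the converse, I would assume that $\Phi(x_0), \Phi'(x_0)$ are linearly dependent for some $x_0 \in \N$, pick scalars $(\lambda, \lambda') \neq (0,0)$ with $\lambda \Phi(x_0) + \lambda' \Phi'(x_0) = 0$, and apply the invertible matrix $A_0^{y - x_0}$ to this identity to obtain
\[
\lambda \Phi(y) + \lambda' \Phi'(y) = A_0^{y - x_0}\bigl(\lambda \Phi(x_0) + \lambda'\Phi'(x_0)\bigr) = 0
\]
for every $y \in \N$. This shows simultaneously that linear dependence at a single point implies linear dependence at every point (so the quantifier ``for any $x_0$'' in the statement is unambiguous), and that it implies linear dependence of $\Phi,\Phi'$ as sequences, completing the contrapositive.

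There is no real obstacle here; the whole argument is a direct consequence of invertibility of $A_0$, which is exactly what allows the propagation of dependence in both directions along $\N$. If one wished to give the more conceptual Wronskian/Casoratian proof alluded to in the reference to \cite{Book:Elaydi05:AnIntroductionToDifferenceEquations}, one would instead set $W(x) := \det[\Phi(x)\; \Phi'(x)]$, observe that $W(x+1) = \det(A_0)\, W(x)$, and use $\det(A_0) \neq 0$ to conclude that $W$ vanishes at one point if and only if it vanishes everywhere; but the elementary propagation argument above seems more directly suited to the present context.
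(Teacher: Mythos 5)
Your proof is correct and takes essentially the same approach as the paper: both arguments rest on propagating a linear relation $\lambda\Phi(x_0)+\lambda'\Phi'(x_0)=0$ to all of $\N$ via the recurrence (which requires invertibility of $A_0$ to move backwards), and then concluding. Your version simply phrases both directions contrapositively, which if anything makes the quantifier issue more transparent; the Wronskian alternative you mention is the one the paper itself points to in \cite{Book:Elaydi05:AnIntroductionToDifferenceEquations} but does not pursue.
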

\begin{proof}
If $\Phi(x_0),\Phi'(x_0)$ are linearly independent for each $x_0 \in \Z,$ then $\Phi,\Phi'$ are obviously linearly independent. To prove the converse, suppose that $\Phi,\Phi'$ are linearly independent, and that $c\Phi(x_0) + c'\Phi'(x_0) = 0$ for some fixed $x_0 \in \N$ and some $c,c' \in \C.$ Since $\Phi,\Phi'$ are both solutions to the difference equation \cref{Equation: General Matrix Difference Equation}, we have $c\Phi(x) + c'\Phi'(x) = 0$ for each $x \in \N,$ and so the linear independence of $\Phi,\Phi'$ gives $c = c' = 0.$ It follows that $\Phi(x_0),\Phi'(x_0)$ are linearly independent for each $x_0 \in \N.$ 
\end{proof}

To put it another way, \cref{Lemma: Linear Independence of Solutions to Matrix Difference Equation} states that two algebraic solutions $\Phi,\Phi'$ to \cref{Equation: General Matrix Difference Equation} are either identically linearly independent or identically linearly dependent. 
It is then easy to observe that $\dim \ker(L \oplus L - \bigoplus_{x \in \N} A_0) \leq 2.$ Here, the equality may not hold, since an algebraic solution to \cref{Equation: Inductive Easier Matrix Equation} may fail to be square summable. To check the square summability of solutions, the following lemma is useful:
\begin{lemma}
\label{Lemma: Square Summability}
Let $A_0$ be a fixed invertible $2 \times 2$ matrix with two distinct eigenvalues $z_1,z_2,$ so that $A_0$ admits diagonalisation of the following form for some invertible matrix $P:$
\[
A_0 = 
P
\begin{pmatrix}
z_1 & 0 \\
0 & z_2
\end{pmatrix}P^{-1}.
\]
Suppose that $\Phi$ is an algebraic solution to the difference equation \cref{Equation: General Matrix Difference Equation}, and that
\begin{equation}
\label{Equation: Definition of k}
\begin{pmatrix}
k_{1} \\
k_{2} \\
\end{pmatrix}
:= P^{-1} \Phi(1). 
\end{equation}
Then we have $\Phi \in \ker(L \oplus L - \bigoplus_{x \in \N} A_0)$ if and only if the following sum is finite; 
\[
\sum_{x \in \N} \left(|k_{1}|^2 |z_{1}|^{2x} + |k_{2}|^2 |z_{2}|^{2x}\right) < \infty.
\]
\end{lemma}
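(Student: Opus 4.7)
The plan is to exploit the explicit closed form for algebraic solutions afforded by the earlier equation $\Phi(x+1) = A_0^x \Phi(1)$, combined with the diagonalisation of $A_0$ supplied in the hypothesis, and then to read off square summability through the norm equivalence induced by the invertible matrix $P$.

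First I would write $\Phi(x) = A_0^{x-1} \Phi(1)$ for every $x \in \N$, which is precisely the statement of the matrix difference equation iterated. Using the hypothesis $A_0 = P\,\mathrm{diag}(z_1,z_2)\,P^{-1}$, a simple induction yields $A_0^{x-1} = P\,\mathrm{diag}(z_1^{x-1},z_2^{x-1})\,P^{-1}$, so that
\[
\Phi(x) = P\begin{pmatrix} k_1 z_1^{x-1} \\ k_2 z_2^{x-1} \end{pmatrix},\qquad x \in \N,
\]
with $(k_1,k_2)^{\mathrm{T}} = P^{-1}\Phi(1)$ as in \cref{Equation: Definition of k}.

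Next I would invoke the fact that an invertible matrix $P \in M_2(\C)$ induces an equivalent norm on $\C^2$: there exist constants $c_1,c_2>0$ such that $c_1\|v\| \le \|Pv\| \le c_2\|v\|$ for every $v \in \C^2$. Applying this with $v = (k_1 z_1^{x-1}, k_2 z_2^{x-1})^{\mathrm{T}}$ and squaring gives
\[
c_1^2 \bigl(|k_1|^2 |z_1|^{2(x-1)} + |k_2|^2 |z_2|^{2(x-1)}\bigr) \le \|\Phi(x)\|^2 \le c_2^2 \bigl(|k_1|^2 |z_1|^{2(x-1)} + |k_2|^2 |z_2|^{2(x-1)}\bigr).
\]
Summing over $x \in \N$, the convergence of $\sum_x \|\Phi(x)\|^2$ is equivalent to the convergence of $\sum_x\bigl(|k_1|^2 |z_1|^{2(x-1)} + |k_2|^2 |z_2|^{2(x-1)}\bigr)$. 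A harmless index shift (the two series differ by the multiplicative constants $|z_j|^{-2}$, which are positive and finite since $z_j \ne 0$ by invertibility of $A_0$) finally identifies this with $\sum_x \bigl(|k_1|^2 |z_1|^{2x} + |k_2|^2 |z_2|^{2x}\bigr)$, which is exactly the stated criterion.

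I do not expect a substantive obstacle here: the content is essentially that diagonalisation turns the matrix recursion into two decoupled scalar geometric sequences, and square summability of a $\C^2$-valued sequence is insensitive to the invertible change of basis $P$. The only point requiring a modicum of care is to record that $P^{-1}$ is a bijection of $\C^2$ preserving the summability class, so that writing $\Phi$ in the $P$-basis does not alter whether $\Phi \in \ker\bigl(L\oplus L - \bigoplus_{x \in \N} A_0\bigr)$.
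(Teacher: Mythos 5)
Your proof is correct and follows essentially the same route as the paper's own argument: both iterate the recursion and diagonalise to write $\Phi(x+1)=P(k_1 z_1^{x},\,k_2 z_2^{x})^{\mathrm{T}}$, then invoke the two-sided norm equivalence induced by the fixed invertible matrix $P$ to identify square summability of $\Phi$ with convergence of $\sum_{x}\bigl(|k_1|^2|z_1|^{2x}+|k_2|^2|z_2|^{2x}\bigr)$. The only difference is a cosmetic exponent offset, which you correctly absorb via the index shift at the end.
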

\begin{proof}
It follows from \cref{Equation: Inductive Easier Matrix Equation} that
\[
\Phi(x+1) = A_0^x \Phi(1) 
= 
P 
\begin{pmatrix}
z_1^x & 0 \\
0 & z_2^x
\end{pmatrix}
P^{-1}\Phi(1) =
P 
\begin{pmatrix}
k_1z_1^x  \\
k_2z_2^x
\end{pmatrix},\qquad x \geq 0.
\]
Then there exist constants $C_1,C_2 > 0,$ such that
\[
C_1 
\left\|
\begin{pmatrix}
k_1 z_1^x\\
k_2 z_2^x
\end{pmatrix}
\right\|^2 \leq 
\|\Phi(x + 1)\|^2 \leq C_2 
\left\|
\begin{pmatrix}
k_1 z_1^x\\
k_2 z_2^x
\end{pmatrix}
\right\|^2,
\qquad x \geq 0,
\]
where
\[
\left\|
\begin{pmatrix}
k_1 z_1^x\\
k_2 z_2^x
\end{pmatrix}
\right
\|^2  
= |k_1|^2 |z_1|^{2x} + |k_2|^2 |z_2|^{2x}.
\]
The claim follows.
\end{proof}

In fact, we shall end up solving Equation \cref{Equation: General Matrix Difference Equation} with a constraint on the initial condition, and so we introduce the following notation:

\begin{lemma}
Given an invertible $2 \times 2$ matrix $A_0$ and two complex numbers $a_0,b_0 \in \C,$ we introduce the following subspace of $\ell^2(\N,\C^2):$
\begin{equation}
\label{Equation: Difference Equation with Constraints}
\eS(A_0,a_0,b_0) := 
\left\{
	\begin{pmatrix}\Phi_1 \\ \Phi_2\end{pmatrix} \in \ker \left(L \oplus L - \bigoplus_{x \in \N} A_0 \right) \mid a_0\Phi_1(1) + b_0\Phi_2(1) = 0
\right\}.
\end{equation}
If $a_0,b_0 \in \C$ are both non-zero, then $\dim \eS(A_0,a_0,b_0) \leq 1.$
\end{lemma}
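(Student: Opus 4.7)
The plan is to exploit the fact that an algebraic solution to the first-order matrix difference equation \cref{Equation: General Matrix Difference Equation} is completely determined by its initial value, as was recorded after \cref{Equation: Inductive Easier Matrix Equation} via the formula $\Phi(x+1) = A_0^x \Phi(1)$. This will reduce the problem to a purely linear-algebraic statement in $\C^2$.

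More precisely, first I would set $V := \ker\bigl(L \oplus L - \bigoplus_{x \in \N} A_0\bigr)$ and define the ``initial-value'' linear map $\iota : V \to \C^2$ by $\iota(\Phi) := \Phi(1) = (\Phi_1(1),\Phi_2(1))^{\mathrm{T}}$. Because every element $\Phi \in V$ is an algebraic solution to \cref{Equation: General Matrix Difference Equation}, the identity $\Phi(x+1) = A_0^x \Phi(1)$ shows that $\Phi$ is recovered from $\Phi(1)$, so $\iota$ is injective.

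Next, define the linear functional $\phi : \C^2 \to \C$ by $\phi(v_1,v_2) := a_0 v_1 + b_0 v_2$. The hypothesis that $a_0$ and $b_0$ are both non-zero (in fact, even one of them being non-zero suffices) guarantees that $\phi$ is not the zero functional, so $\dim \ker \phi = 1$. By the very definition \cref{Equation: Difference Equation with Constraints}, the subspace $\eS(A_0,a_0,b_0)$ consists exactly of those $\Phi \in V$ satisfying $\phi(\iota(\Phi)) = 0$, i.e.\ $\iota\bigl(\eS(A_0,a_0,b_0)\bigr) \subseteq \ker \phi$.

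Combining these two observations, the injectivity of $\iota$ yields
\[
\dim \eS(A_0,a_0,b_0) \;=\; \dim \iota\bigl(\eS(A_0,a_0,b_0)\bigr) \;\leq\; \dim \ker \phi \;=\; 1,
\]
which is the desired bound. There is no real obstacle here: the argument is a one-line consequence of the uniqueness of solutions to the first-order matrix recurrence together with the fact that a single non-trivial linear constraint on $\C^2$ cuts the dimension down to one.
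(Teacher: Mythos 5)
Your proof is correct and essentially matches the paper's: both reduce the problem to the one-dimensional kernel of the nonzero linear functional $(v_1,v_2)\mapsto a_0 v_1 + b_0 v_2$ on $\C^2$, using uniqueness of the algebraic solution from its initial value $\Phi(1)$. The only difference is cosmetic---you package that uniqueness as injectivity of the initial-value map $\iota$, whereas the paper invokes \cref{Lemma: Linear Independence of Solutions to Matrix Difference Equation} to conclude that two constrained solutions must be everywhere linearly dependent (and your parenthetical remark that a single nonzero coefficient already suffices is correct).
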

\begin{proof}
If $\Phi,\Phi' \in \eS(A_0,a_0,b_0),$ then $\Phi(1),\Phi'(1)$ are linearly dependent vectors in $\C^2:$
\[
a_0\Phi(1) + b_0\Phi'(1) =
a_0\begin{pmatrix}\Phi_1(1) \\ \Phi_2(1)\end{pmatrix} + 
b_0\begin{pmatrix}\Phi'_1(1) \\ \Phi'_2(1)\end{pmatrix}
=
\begin{pmatrix}
a_0\Phi_1(1) + b_0\Phi_2(1)\\
a_0\Phi'_1(1) + b_0\Phi'_2(1)
\end{pmatrix}
=0.
\]
Thus, $\Phi,\Phi'$ are also linearly dependent by \cref{Lemma: Linear Independence of Solutions to Matrix Difference Equation}. The claim follows.
\end{proof}

\subsubsection{Concrete matrix difference equations}
With \cref{Equation: Difference Equation with Constraints} in mind, we are now in a position to state the explicit forms of the matrix difference equations we need to solve;
\begin{lemma}
If the type of the coin $C$ is one of II,II',III, then we have the following associated linear isomorphisms respectively:
\begin{align}
\label{Equation1: Isomorphism from Qepsilon}
&\ker Q_{\epsilon_\pm} \ni (\Psi(x))_{x \in \Z} \longmapsto 
\left( 
\begin{pmatrix}
\Psi(x) \\
\Psi(x-1)
\end{pmatrix}\right)_{x \in \N} \in  
\eS(A_\pm(\rR),\alpha_\pm(\rR), \pm \beta(0)), \\
\label{Equation2: Isomorphism from Qepsilon}
&\ker Q_{\epsilon_\pm} \ni  (\Psi(x))_{x \in \Z} \longmapsto 
\left( 
\begin{pmatrix}
\Psi(-(x-1)) \\
\Psi(-x)
\end{pmatrix}\right)_{x \in \N}  \in 
\eS(A_\pm(\rL)^{-1},\mp \beta(0),\alpha_\mp(\rL)^*),  \\
\label{Equation3: Isomorphism from Qepsilon}
&\ker Q_{\epsilon_\pm} \ni \Psi \longmapsto 
\left(
\begin{pmatrix}
\Psi(x) \\
\Psi(x-1)
\end{pmatrix}\right)_{x \in \Z}\in  \ker\left(L \oplus L - \bigoplus_{x \in \Z} A_\pm(x)\right).
\end{align}
\end{lemma}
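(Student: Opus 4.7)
The plan is to translate the kernel equation $Q_{\epsilon_\pm}\Psi = 0$ into a first-order matrix recursion. Using \cref{Equation: Definition of Qepsilon}, the equation $(Q_{\epsilon_\pm}\Psi)(x) = 0$ reads
\[
\alpha_\pm(x+1)\Psi(x+1) - \alpha_\mp(x)^*\Psi(x-1) \pm \beta(x)\Psi(x) = 0,
\]
and wherever $\alpha_\pm(x+1) \neq 0$ this is equivalent to $\Phi(x+1) = A_\pm(x)\Phi(x)$ with $\Phi(x) := (\Psi(x),\Psi(x-1))^{\mathrm{T}}$ and $A_\pm(x)$ as in \cref{Equation: Definition of Apm}. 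I will handle the three isomorphisms in turn, determining in each type where the recursion applies, what single linear constraint survives at the boundary $x = 0$, and how $\Psi$ is forced to vanish on the side where $b$ vanishes.

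Iso \cref{Equation3: Isomorphism from Qepsilon} (Type III) is the easiest: since $b(\rL),b(\rR) \neq 0$ one has $\alpha_\pm(x+1) \neq 0$ for every $x \in \Z$ (using $|p| < 1$, which follows from $p^2+|q|^2 = 1$ with $q \neq 0$), so the recursion holds on all of $\Z$ with no extra boundary condition. The map $\Psi \mapsto \Phi$ is injective with inverse $\Phi \mapsto (\Phi_1(x))_{x \in \Z}$, and $\|\Phi(x)\|^2_{\C^2} = |\Psi(x)|^2 + |\Psi(x-1)|^2$ yields $\ell^2$-equivalence immediately.

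For Iso \cref{Equation1: Isomorphism from Qepsilon} (Type II), $b(\rL) = 0$ gives $\alpha_\pm(\rL) = 0$; since $C(\rL)$ is nontrivial one has $a(\rL) = \pm 1$ and hence $\beta(\rL) = \mp 2|q| \neq 0$. At every $x \leq -1$ the kernel equation collapses to $\beta(\rL)\Psi(x) = 0$, forcing $\Psi(x) = 0$ throughout $\{x \leq -1\}$. Substituting $\Psi(-1) = 0$ together with $\alpha_\mp(0) = \alpha_\mp(\rL) = 0$ into the $x = 0$ equation leaves exactly $\alpha_\pm(\rR)\Psi(1) \pm \beta(0)\Psi(0) = 0$, the constraint defining $\eS(A_\pm(\rR),\alpha_\pm(\rR),\pm\beta(0))$, and for $x \geq 1$ the recursion is driven by the constant matrix $A_\pm(\rR)$. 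Conversely, given $\Phi$ in the target, the assignment $\Psi(x) := \Phi_1(x)$ for $x \geq 1$, $\Psi(0) := \Phi_2(1)$, and $\Psi(x) := 0$ for $x \leq -1$ produces a preimage; the $\eS$-constraint is precisely what makes $(Q_{\epsilon_\pm}\Psi)(0) = 0$. Iso \cref{Equation2: Isomorphism from Qepsilon} (Type II') is entirely symmetric: $b(\rR) = 0$ forces $\Psi(x) = 0$ for $x \geq 1$, the $x = 0$ equation reduces to $\mp\beta(0)\Psi(0) + \alpha_\mp(\rL)^*\Psi(-1) = 0$, and on $\{x \leq -1\}$ the recursion uses $A_\pm(\rL)$, whose invertibility is immediate from $\det A_\pm(\rL) = -\alpha_\mp(\rL)^*/\alpha_\pm(\rL) \neq 0$; iterating its inverse in the direction $x \to -\infty$ produces exactly the claimed $\Phi(y+1) = A_\pm(\rL)^{-1}\Phi(y)$ for $\Phi(y) := (\Psi(1-y),\Psi(-y))^{\mathrm{T}}$.

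The main obstacle I foresee is careful bookkeeping at the boundary $x = 0$. Because $\beta(0) = -|q|(a(\rL)+a(\rR))$ is a genuine hybrid of the two limit values (neither $\beta(\rL)$ nor $\beta(\rR)$), one must verify in each type that the vanishing of the appropriate $\alpha$ at $x = 0$ is precisely what collapses the raw boundary equation $\alpha_\pm(1)\Psi(1) - \alpha_\mp(0)^*\Psi(-1) \pm \beta(0)\Psi(0) = 0$ into the single linear relation appearing in the definition of the relevant $\eS$-space. Once this matching is verified type by type, surjectivity and $\ell^2$-compatibility follow routinely.
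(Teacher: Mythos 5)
Your proof is correct and follows essentially the same route as the paper's: translate $Q_{\epsilon_\pm}\Psi=0$ into the first-order recursion $\Phi(x+1)=A_\pm(x)\Phi(x)$, observe that vanishing of $\alpha_\pm$ on the side where $b$ vanishes forces $\Psi$ to be zero there and turns the $x=0$ equation into exactly the linear constraint in the definition of the relevant $\eS$-space, and check $\ell^2$-equivalence via $\|\Phi(x)\|^2=|\Psi(x)|^2+|\Psi(x-1)|^2$. The bookkeeping details you flag (the hybrid value of $\beta(0)$, the invertibility of $A_\pm(\rL)$ in Type II', the need for $|p|<1$ in Type III) are precisely the points one must verify, and you verify them correctly.
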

\begin{proof}
If $C$ is of Type \ref{Equation1: Definition of Type II}, then $\alpha_\pm(\rL) = 0$ and $\beta(\rL) \neq 0.$ Thus $\Psi \in \ker Q_{\epsilon_\pm}$ if and only if \cref{Equation: Motivation about DE} holds true for each $x \geq 1$ together with the following two conditions:
\[
\alpha_\pm(\rR)\Psi(1) \pm \beta(0)\Psi(0) = 0 \mbox{ and } \Psi(x) = 0, 
\quad x \leq -1.
\]
As in \cref{Section: Sketch}, Equation \cref{Equation: Motivation about DE} is equivalent to the following:
\[
L \oplus L
\begin{pmatrix}
\Psi(x) \\ \Psi(x-1)
\end{pmatrix}
=  
\begin{pmatrix} 
\frac{\mp \beta(x)}{\alpha_\pm(x+1)} &  \frac{\alpha_\mp(x)^*}{\alpha_\pm(x+1)} \\
1 & 0
\end{pmatrix}
\begin{pmatrix}
\Psi(x) \\ \Psi(x-1)
\end{pmatrix}
=
A_\pm(\rR)
\begin{pmatrix}
\Psi(x) \\ \Psi(x-1)
\end{pmatrix},
\qquad x \geq 1,
\]
and so \cref{Equation1: Isomorphism from Qepsilon} is a well-defined operator. It remains to show that \cref{Equation1: Isomorphism from Qepsilon} is surjective, since the injectivity is obvious. It is easy to verify that any vector in $\eS(A_\pm(\rR),\alpha_\pm(\rR), \pm \beta(0))$ must be of the form $(L\Psi_0,\Psi_0)$ for some $\Psi_0 \in \ell^2(\N)$ satisfying 
\[
L \oplus L
\begin{pmatrix}
\Psi_0(x+1) \\ \Psi_0(x)
\end{pmatrix}
=
A_\pm(\rR)
\begin{pmatrix}
\Psi_0(x+1) \\ \Psi_0(x)
\end{pmatrix},
\qquad  x \geq 1,
\]
We define $\Psi \in \ell^2(\Z,\C^2)$ by
\[
\Psi(x-1) := 
\begin{cases}
0, & x \leq 0, \\
\Psi_0(x), & x \geq 1.
\end{cases} 
\]
Then it is easy to show that $\Psi \in \ker Q_{\epsilon_\pm},$ and that it gets mapped to $(L\Psi_0,\Psi_0)$ under \cref{Equation1: Isomorphism from Qepsilon}. Therefore, \cref{Equation1: Isomorphism from Qepsilon} is a well-defined linear isomorphism.

Similarly, if $C$ is of Type \ref{Equation2: Definition of Type II}, then $\alpha_\pm(\rR) = 0$ and $\beta(\rR) \neq 0.$ Thus $\Psi \in \ker Q_{\epsilon_\pm}$ if and only if \cref{Equation: Motivation about DE} holds true for each $x \leq -1$ together with the following two conditions:
\[
\mp \beta(0)\Psi(0) + \alpha_\mp(\rL)^*\Psi(-1)  = 0 \mbox{ and } \Psi(x) = 0, \quad x \geq +1.
\]
As before \cref{Equation: Motivation about DE} is equivalent to the following:
\[
\begin{pmatrix}
\Psi(x+1) \\ \Psi(x)
\end{pmatrix}
=
A_\pm(\rL)
\begin{pmatrix}
\Psi(x) \\ \Psi(x-1)
\end{pmatrix},
\qquad x \leq -1.
\]
If we introduce the change of variable $x \leftrightarrow -x,$ then the above equation becomes
\[
L \oplus L
\begin{pmatrix}
\Psi(-(x-1)) \\ \Psi(-x)
\end{pmatrix}
=
\begin{pmatrix}
\Psi(-x) \\ \Psi(-x-1)
\end{pmatrix} = 
A_\pm(\rL)^{-1}
\begin{pmatrix}
\Psi(-(x-1)) \\ \Psi(-x)
\end{pmatrix},
\qquad  x \geq 1.
\]
and so \cref{Equation2: Isomorphism from Qepsilon} is a well-defined operator. It remains to show that \cref{Equation2: Isomorphism from Qepsilon} is surjective, since the injectivity is obvious. It is easy to verify that any vector in $\eS(A_\pm(\rL)^{-1},\mp \beta(0),\alpha_\mp(\rL)^*)$ must be of the form $(\Psi_0,L\Psi_0)$ for some $\Psi_0 \in \ell^2(\N)$ with
\[
\begin{pmatrix}
\Psi_0(x) \\ \Psi_0(x+1)
\end{pmatrix}
=
A_\pm(\rL)
\begin{pmatrix}
\Psi_0(x+1) \\ \Psi_0(x+2)
\end{pmatrix},
\qquad  x \geq 1.
\]
We define $\Psi \in \ell^2(\Z,\C^2)$ by
\[
\Psi(-(x-1)) := 
\begin{cases}
0, & x \leq 0, \\
\Psi_0(x), & x \geq 1.
\end{cases} 
\]
Then it is easy to show that $\Psi \in \ker Q_{\epsilon_\pm},$ and that it gets mapped to $(\Psi_0,L\Psi_0)$ under \cref{Equation2: Isomorphism from Qepsilon}. Therefore, \cref{Equation2: Isomorphism from Qepsilon} is a well-defined linear isomorphism. The fact that \cref{Equation3: Isomorphism from Qepsilon} is a linear isomorphism if $C$ is of Type \ref{Equation: Definition of Type III} is left as an easy exercise.
\end{proof}

As in \cref{Lemma: Square Summability}, diagonalisation of the coefficient matrices $A_\pm(\sharp)$ is important.
\begin{lemma}
\label{Lemma: Eigenvalues of Apm}
If $A_\pm(\sharp)$ are well-defined for $\sharp = \rL,\rR,$ then the two matrices $A_\pm(\sharp)$ have two non-zero distinct eigenvalues $z_{\pm,1}(\sharp), z_{\pm,2}(\sharp)$ of the following forms:
\begin{equation}
\label{Equation: Eigenvalues of Apmx}
z_{\pm,j}(\sharp) = \frac{q^*}{1 \pm p} \left(\frac{(-1)^j \pm a(\sharp)}{b(\sharp)}\right),
\qquad  j=1,2.
\end{equation}
Moreover, the two matrices $A_\pm(\sharp)$ admit diagonalisation of the following form;
\begin{align}
A_\pm(\sharp) &= 
P_\pm(\sharp)
\begin{pmatrix}
z_{\pm,1}(\sharp) & 0 \\
0 & z_{\pm,2}(\sharp)
\end{pmatrix}
P_\pm(\sharp)^{-1}, \\
\label{Equation: Definition of Ppm}
P_\pm(\sharp) &:= 
\begin{pmatrix}
z_{\pm,1}(\sharp) & z_{\pm,2}(\sharp) \\
1 & 1
\end{pmatrix}.
\end{align}
\end{lemma}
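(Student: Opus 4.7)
The plan is to read off the characteristic polynomial of $A_\pm(\sharp)$ directly, reduce it to a simple quadratic by a well-chosen change of variables, and then produce eigenvectors by inspection.

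First, since the second row of $A_\pm(\sharp)$ is $(1,0)$, any matrix of this shape has characteristic polynomial $z^2 - (\mathrm{top\ left})z - (\mathrm{top\ right})=0$. For $A_\pm(\sharp)$, after clearing the common denominator $\alpha_\pm(\sharp)$ (which is non-zero precisely because $A_\pm(\sharp)$ is assumed well-defined, i.e., $b(\sharp)\neq 0$), this reads
\[
\alpha_\pm(\sharp)\,z^2\; \pm\;\beta(\sharp)\,z\;-\;\alpha_\mp(\sharp)^{*}=0.
\]
Substituting $\alpha_\pm(\sharp)=(1\pm p)e^{i\theta}b(\sharp)$ and $\beta(\sharp)=-2|q|a(\sharp)$ and introducing the new variable $w:=(1\pm p)e^{i\theta}b(\sharp)\,z/|q|$ (equivalently $z=q^*w/[(1\pm p)b(\sharp)]$), I would use the identities $(1+p)(1-p)=|q|^2$ and $|b(\sharp)|^2=1-a(\sharp)^2$ (valid since $b(\sharp)\neq 0$ gives $|a(\sharp)|<1$) to collapse the characteristic equation to
\[
w^{2}\;\mp\;2a(\sharp)\,w\;+\;a(\sharp)^{2}-1=0,\qquad\text{i.e.,}\qquad (w\mp a(\sharp))^{2}=1.
\]
Hence $w=\pm a(\sharp)\pm 1$, which, reorganised as $w=(-1)^{j}\pm a(\sharp)$ for $j=1,2$, gives exactly \cref{Equation: Eigenvalues of Apmx}.

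Next I would verify the two required properties of the eigenvalues. Distinctness is immediate since the two candidates for $w$ differ by $\pm 2$. Non-vanishing follows from $|a(\sharp)|<1$: indeed $(-1)^{j}\pm a(\sharp)=0$ would force $|a(\sharp)|=1$, contradicting $|b(\sharp)|^{2}=1-a(\sharp)^{2}>0$. Consequently $A_\pm(\sharp)$ has two distinct non-zero eigenvalues and is diagonalisable.

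Finally, for any matrix of the shape $\bigl(\begin{smallmatrix}\ast&\ast\\ 1&0\end{smallmatrix}\bigr)$, the eigenvalue equation forces a non-zero eigenvector associated with the eigenvalue $z$ to have the form $(z,1)^{\mathrm{T}}$. Collecting the two eigenvectors as columns yields the matrix $P_\pm(\sharp)$ in \cref{Equation: Definition of Ppm}, and the diagonalisation formula is then just the spectral decomposition, whose verification $A_\pm(\sharp)P_\pm(\sharp)=P_\pm(\sharp)\operatorname{diag}(z_{\pm,1}(\sharp),z_{\pm,2}(\sharp))$ is a one-line check. There is no real obstacle here: the only piece of bookkeeping requiring a bit of care is tracking the two layers of signs $\pm$ (from the supercharge sector) and $(-1)^{j}$ (from the two quadratic roots), so I would prefer to perform the $+$ and $-$ cases in parallel and factor as $(w\mp a(\sharp))^{2}=1$ to avoid index errors.
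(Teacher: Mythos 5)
Your proof is correct and follows essentially the same route as the paper: both exploit the companion-form structure of $A_\pm(\sharp)$ (second row $(1,0)$) to read off the characteristic quadratic and the eigenvectors $(z,1)^{\mathrm T}$. The only difference is cosmetic --- the paper evaluates the discriminant $\beta(\sharp)^2 + 4\alpha_\pm(\sharp)\alpha_\mp(\sharp)^* = 4|q|^2$ directly and applies the quadratic formula, whereas you normalise the same quadratic to $(w\mp a(\sharp))^2=1$ by a change of variable.
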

\begin{proof}
It is left as an easy exercise to show that a $2 \times 2$ matrix of the form
\[
\begin{pmatrix}
s & t \\
1 & 0
\end{pmatrix}
\]
has two eigenvalues $2z_j = 
s + (-1)^j \sqrt{s^2 + 4t},$ where $j=1,2,$ together with the following eigenvalue equations:
\[
\begin{pmatrix}
s & t \\
1 & 0
\end{pmatrix}
\begin{pmatrix}
z_j \\
1
\end{pmatrix} = 
z_j
\begin{pmatrix}
z_j \\
1
\end{pmatrix},
\qquad  j=1,2.
\]
With this result in mind, the matrices $A_\pm(\sharp)$ given by \cref{Equation: Definition of Apm Sharp} are as follows;
\[
2z_{\pm,j}(\sharp) 
= \frac{\mp \beta(\sharp) + (-1)^j \sqrt{\beta(\sharp)^2 + 4 \alpha_\pm(\sharp) \alpha_\mp(\sharp)^*} }{\alpha_\pm(\sharp)},
\qquad j = 1,2,
\]
where 
\[
\beta(\sharp)^2 + 4 \alpha_\pm(\sharp) \alpha_\mp(\sharp)^*  
= 4|q|^2 a(\sharp)^2 + 4(1 - p^2)|b(\sharp)|^2 = 4|q|^2 > 0.
\]
It follows that
\[
z_{\pm,j}(\sharp) 
= \frac{\mp \beta(\sharp) + (-1)^j \sqrt{\beta(\sharp)^2 + 4 \alpha_\pm(\sharp) \alpha_\mp(\sharp)^*} }{2 \alpha_\pm(\sharp)}
= \frac{q^*}{1 \pm p} \left(\frac{(-1)^j \pm a(\sharp)}{b(\sharp)}\right).
\]
The claim follows.
\end{proof}

\begin{definition}
We define the increasing function $f : [-1,1] \to [0,+\infty]$ by
\[
f(\kappa) := 
\begin{cases}
\sqrt{\frac{1 + \kappa}{1 - \kappa}}, & \kappa \neq 1 \\
+\infty, & \kappa = 1.
\end{cases}
\]
\end{definition}

The following figure shows the graphs of $y = f(\pm \kappa):$
\[
\begin{tikzpicture}[scale=0.9]
\begin{axis}[axis lines=center,width = 0.9\textwidth,height = 0.5\textwidth]
\addplot[samples=200,domain=-0.9:0.9]{sqrt((1 + x)/(1 - x))};
\addlegendentry{$y = f(+\kappa)$}
\addplot[dashed,samples=200,domain=-0.9:0.9]{sqrt((1 - x)/(1 + x))};
\addlegendentry{$y = f(-\kappa)$}
\end{axis}
\end{tikzpicture}
\]
We shall also make use of the following obvious identities:
\begin{align}
\label{Equation: Reflection Property}
&f(\kappa) ^{-1} = f(-\kappa), \\
\label{Equation: Multiplicative Property}
&f(\kappa)f(\kappa')  = f\left(\frac{\kappa + \kappa'}{1 + \kappa \kappa'}\right), \\
\label{Equation: Additivity and Multiplicative}
&f(\kappa)f(\kappa') < 1 \mbox{ if and only if } \kappa + \kappa' < 0,
\end{align}
where $\kappa,\kappa' \in (-1,1).$

\begin{corollary}
With the notation introduced in \cref{Lemma: Eigenvalues of Apm} in mind, we have
\begin{equation}
\label{Equation: Absolute Value of Zpm}
|z_{\pm,j}(\sharp)| = 
\begin{cases}
f(\mp p)f(\mp a(\sharp)), & j=1, \\
f(\mp p)f(\pm a(\sharp)), & j=2.
\end{cases}
\end{equation}
\end{corollary}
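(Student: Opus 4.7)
The proof is a direct computation from the closed-form expression for the eigenvalues given in \cref{Lemma: Eigenvalues of Apm}, combined with the unit-norm constraints $p^2 + |q|^2 = 1$ and $a(\sharp)^2 + |b(\sharp)|^2 = 1$. No nontrivial spectral argument is required; the only thing to be careful about is the bookkeeping of signs in the various instances of $f$.

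My plan is as follows. First I would take absolute values in \cref{Equation: Eigenvalues of Apmx}, obtaining
\[
|z_{\pm,j}(\sharp)| = \frac{|q|}{1 \pm p}\cdot\frac{|(-1)^j \pm a(\sharp)|}{|b(\sharp)|},
\]
where the factor $1 \pm p$ in the denominator is positive because $|q| > 0$ forces $p \in (-1,1)$, and where $|b(\sharp)| > 0$ since $A_\pm(\sharp)$ is assumed to be well-defined (so we are in Type \ref{Equation1: Definition of Type II}, \ref{Equation2: Definition of Type II}, or \ref{Equation: Definition of Type III}). I would then substitute $|q| = \sqrt{(1-p)(1+p)}$ and $|b(\sharp)| = \sqrt{(1-a(\sharp))(1+a(\sharp))}$, giving
\[
\frac{|q|}{1\pm p} = \sqrt{\frac{1\mp p}{1\pm p}} = f(\mp p).
\]

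For the remaining factor I would split on $j$. When $j=1$, $|(-1)^j \pm a(\sharp)| = |-1 \pm a(\sharp)| = 1 \mp a(\sharp)$ (recall $a(\sharp) \in (-1,1)$), so
\[
\frac{1 \mp a(\sharp)}{\sqrt{1 - a(\sharp)^2}} = \sqrt{\frac{1 \mp a(\sharp)}{1 \pm a(\sharp)}} = f(\mp a(\sharp)).
\]
When $j=2$, the numerator is instead $1 \pm a(\sharp)$, which yields $f(\pm a(\sharp))$ by the identical manipulation. Combining these with the $f(\mp p)$ factor gives exactly \cref{Equation: Absolute Value of Zpm}.

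There is no substantive obstacle here; the only potential pitfall is tracking the $\pm/\mp$ pattern correctly across the two cases $j=1,2$ and confirming that $1 \pm p > 0$ and $a(\sharp)^2 < 1$ so that the square roots and quotients are all legitimate. Once those are in place, the identities collapse to the stated formula in one line apiece.
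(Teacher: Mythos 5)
Your proposal is correct and takes essentially the same route as the paper: both arguments take absolute values in \cref{Equation: Eigenvalues of Apmx}, use $|q|=\sqrt{(1-p)(1+p)}$ and $|b(\sharp)|=\sqrt{(1-a(\sharp))(1+a(\sharp))}$, and reduce each of the two factors to a value of $f$. The paper additionally records the auxiliary identities in \cref{Equation: Little Trick}, but those are not actually used to compute the moduli here (they serve the later Type II' argument); the substantive computation matches yours.
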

\begin{proof} 
Since $qq^* = (1-p)(1+p)$ and $b(\sharp)b(\sharp)^* = ((-1)^j - a(\sharp))((-1)^j + a(\sharp)),$ 
\begin{equation}
\label{Equation: Little Trick}
\frac{q^*}{1 \pm p} = \frac{1 \mp p}{q} \mbox{ and }
\frac{b(\sharp)^*}{(-1)^j \pm a(\sharp)} = \frac{(-1)^j \mp a(\sharp)}{b(\sharp)}.
\end{equation}

where 
\[
\left| \frac{q^*}{1 \pm p}\right| = 
\frac{|q|}{|1 \pm p|} = 
\frac{\sqrt{(1+p)(1-p)}}{\sqrt{(1 \pm p)^2}} = 
\sqrt{\frac{1 \mp p}{1 \pm p}} = f(\mp p).
\]
On the other hand,
\[
\left|\frac{(-1)^j \pm  a(\sharp) }{b(\sharp)}\right| =
\frac{|(-1)^j \pm  a(\sharp) |}{|b(\sharp)|}
= \frac{\sqrt{((-1)^j \pm  a(\sharp))^2}}{\sqrt{(1 + a(\sharp))(1 - a(\sharp))}}
= \sqrt{\frac{((-1)^j \pm  a(\sharp))^2}{(1 + a(\sharp))(1 - a(\sharp))}}.
\]
We get
\begin{align*}
\left|\frac{(-1)^1 \pm  a(\sharp) }{b(\sharp)}\right| &= \sqrt{\frac{(-1 \pm  a(\sharp))^2}{(1 + a(\sharp))(1 - a(\sharp))}} = \sqrt{\frac{(1 \mp  a(\sharp))^2}{(1 + a(\sharp))(1 - a(\sharp))}} = \sqrt{\frac{1 \mp  a(\sharp)}{1 \pm a(\sharp)}}, \\
\left|\frac{(-1)^2 \pm  a(\sharp) }{b(\sharp)}\right| &= \sqrt{\frac{(1 \pm  a(\sharp))^2}{(1 + a(\sharp))(1 - a(\sharp))}} = \sqrt{\frac{1 \pm  a(\sharp)}{1 \mp a(\sharp)}}.
\end{align*}
The claim follows.
\end{proof}

\subsection{Proof of \cref{Theorem: Classification of Dimensions}}

It remains to prove \crefrange{Equation: Classification of Type I}{Equation: Classification of Type III}.

\subsubsection{Type \ref{Equation: Definition of Type I} coin}
\begin{proof}[Proof of Equality \cref{Equation: Classification of Type I}]
If $C$ is an anisotropic coin of Type \ref{Equation: Definition of Type I}, then $\beta(\sharp) \neq 0$ for each $\sharp = \rL,\rR.$ It follows that $\Psi \in \ker Q_{\epsilon_\pm}$ if and only if $\Psi(x) = 0$ whenever $x \neq 0.$ We get
\[
\beta(0) = -|q|(a(\rR) + a(\rL)) = 
\begin{cases}
0, & a(\rL) a(\rR) < 0, \\
\mbox{non-zero}, & a(\rL)a(\rR) > 0.
\end{cases}
\]
The claim follows.
\end{proof}

\subsubsection{Type \ref{Equation1: Definition of Type II} coin} 
If $C$ is an anisotropic coin of Type \ref{Equation1: Definition of Type II}, then we shall make use of the isomorphism \cref{Equation1: Isomorphism from Qepsilon};
\[
d_\pm = \dim Q_{\epsilon_\pm} = \dim \eS(A_\pm(\rR),\alpha_\pm(\rR), \pm \beta(0)) \leq 1.
\]
We shall compute $d_\pm$ by making use of \cref{Lemma: Square Summability}. As in \cref{Lemma: Eigenvalues of Apm}, the matrices $A_\pm(\rR)$ admit diagonalisation of the following form:
\begin{align*}
A_\pm(\rR) &= P_\pm(\rR)
\begin{pmatrix}
z_{\pm,1}(\rR) & 0 \\
0 & z_{\pm,2}(\rR)
\end{pmatrix} P_\pm(\rR)^{-1}, \\
P_\pm(\rR) &= 
\begin{pmatrix}
z_{\pm,1}(\rR) & z_{\pm,2}(\rR) \\
1 & 1
\end{pmatrix}.
\end{align*}
Given $\C$-valued sequences $\Phi_\pm = (\Phi_\pm(x))_{x \in \N},$ we have $\Phi_\pm \in \eS(A_\pm(\rR),\alpha_\pm(\rR), \pm \beta(0))$ if and only if $\Phi_\pm$ are square-summable and the following equalities hold true:
\begin{align}
\label{Equation1: One-sided DE}&\Phi_\pm(x + 1) = A_\pm(\rR)\Phi_\pm(x), &&x \in \N, \\ 
\label{Equation2: One-sided DE} &\Phi_\pm(1) = m_\pm
\begin{pmatrix}
\frac{\mp \beta(0)}{\alpha_\pm(\rR)} \\
1
\end{pmatrix},
&&\exists m_\pm \in \C.
\end{align}

\begin{lemma}
If the sequences $\Phi_\pm$ satisfy \cref{Equation1: One-sided DE} and \cref{Equation2: One-sided DE}, then
\begin{equation}
\label{Equation1: Definition of kpm}
\begin{pmatrix}
k_{\pm,1} \\
k_{\pm,2}
\end{pmatrix} :=
P_\pm(\rR)^{-1} \Phi_\pm(1) = 
\frac{-m_\pm}{2}
\begin{pmatrix}
-1 \pm a(\rL)\\
-1 \mp a(\rL)
\end{pmatrix}.
\end{equation}
\end{lemma}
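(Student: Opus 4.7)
The proof is a direct computation, with the main trick being to keep track of signs cleanly and to recognise the cancellation that produces the stated form.

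My plan is as follows. First, I would explicitly invert $P_\pm(\rR)$. Since $P_\pm(\rR) = \bigl(\begin{smallmatrix} z_{\pm,1}(\rR) & z_{\pm,2}(\rR) \\ 1 & 1 \end{smallmatrix}\bigr)$ has determinant $z_{\pm,1}(\rR) - z_{\pm,2}(\rR)$, one computes from \cref{Equation: Eigenvalues of Apmx} that
\[
z_{\pm,1}(\rR) - z_{\pm,2}(\rR) = \frac{q^*}{1 \pm p} \cdot \frac{-2}{b(\rR)} = \frac{-2 q^*}{(1 \pm p)b(\rR)},
\]
so that $P_\pm(\rR)^{-1} = \tfrac{1}{z_{\pm,1}(\rR) - z_{\pm,2}(\rR)}\bigl(\begin{smallmatrix} 1 & -z_{\pm,2}(\rR) \\ -1 & z_{\pm,1}(\rR) \end{smallmatrix}\bigr)$.

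Next, I would rewrite the first component of $\Phi_\pm(1)$ from \cref{Equation2: One-sided DE} into a form compatible with $z_{\pm,j}(\rR)$. Using $\alpha_\pm(\rR) = (1 \pm p)e^{i\theta}b(\rR)$, the identity $|q| = q^* e^{i\theta}$, and $\beta(0) = -|q|(a(\rL) + a(\rR))$, the ratio simplifies to
\[
\frac{\mp \beta(0)}{\alpha_\pm(\rR)} = \frac{\pm q^*(a(\rL) + a(\rR))}{(1 \pm p)b(\rR)},
\]
which shares the common factor $q^*/((1\pm p)b(\rR))$ with both $z_{\pm,1}(\rR)$ and $z_{\pm,2}(\rR)$. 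This factorisation is the key observation: it makes the subsequent cancellation transparent, and it ensures the final answer will have no $q^*, b(\rR), p$ dependence.

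Then I would substitute $\Phi_\pm(1)$ into $P_\pm(\rR)^{-1}\Phi_\pm(1)$ and simplify. The numerator of $k_{\pm,1}$ becomes
\[
\frac{q^* m_\pm}{(1 \pm p)b(\rR)}\Bigl(\pm(a(\rL) + a(\rR)) - (1 \pm a(\rR))\Bigr) = \frac{q^* m_\pm}{(1 \pm p)b(\rR)}(-1 \pm a(\rL)),
\]
where the $\pm a(\rR)$ terms cancel precisely. Dividing by $z_{\pm,1}(\rR) - z_{\pm,2}(\rR)$ yields $k_{\pm,1} = \tfrac{-m_\pm}{2}(-1 \pm a(\rL))$. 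The computation for $k_{\pm,2}$ is identical up to sign adjustments and produces $\tfrac{-m_\pm}{2}(-1 \mp a(\rL))$, completing the verification.

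There is no real obstacle here beyond the bookkeeping of the double $\pm/\mp$ signs; the whole argument is a bilinear substitution followed by one algebraic cancellation. I would therefore present the proof compactly, highlighting the common factor $q^*/((1\pm p)b(\rR))$ and the cancellation $\pm a(\rR) - (\pm a(\rR)) = 0$ as the only non-trivial step, and leaving the $k_{\pm,2}$ case to the reader as it mirrors the $k_{\pm,1}$ case exactly.
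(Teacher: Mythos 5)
Your computation is correct, and it follows essentially the same route as the paper: explicitly invert $P_\pm(\rR)$ via its determinant, rewrite $\mp\beta(0)/\alpha_\pm(\rR)$ with the common factor $q^*/((1\pm p)b(\rR))$, and observe the cancellation of $\pm a(\rR)$. The paper handles both components in one stroke by writing the cancellation as $\frac{\pm\beta(0)}{\alpha_\pm(\rR)} + z_{\pm,j}(\rR) = \frac{q^*((-1)^j \mp a(\rL))}{(1\pm p)b(\rR)}$ for $j=1,2$, but the substance is identical to yours.
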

\begin{proof}
If we let $\sharp = \rR,$ then
\begin{equation} 
\label{Equation: To Be Used Later}
\begin{pmatrix}
k_{\pm,1} \\
k_{\pm,2}
\end{pmatrix} =
\begin{pmatrix}
z_{\pm,1}(\sharp) & z_{\pm,2}(\sharp) \\
1 & 1
\end{pmatrix}^{-1} 
\Phi_\pm(1) = 
\frac{1}{\det P_\pm(\sharp)}
\begin{pmatrix}
1 & -z_{\pm,2}(\sharp) \\
-1 & z_{\pm,1}(\sharp)
\end{pmatrix}
\Phi_\pm(1),
\end{equation}
where \cref{Lemma: Eigenvalues of Apm} implies
\[
\det P_\pm(\sharp) = z_{\pm,1}(\sharp) - z_{\pm,2}(\sharp) = \frac{q^*}{1 \pm p} \left(\frac{-1 \pm a(\sharp)}{b(\sharp)}\right) - 
\frac{q^*}{1 \pm p} \left(\frac{+1 \pm a(\sharp)}{b(\sharp)}\right)
= \frac{-2q^*}{b(\sharp)(1 \pm p)}.
\]
With this equality in mind we obtain
\[
\frac{\det P_\pm(\rR)}{m_\pm}
\begin{pmatrix}
k_{\pm,1} \\
k_{\pm,2}
\end{pmatrix}
= 
\begin{pmatrix}
1 & -z_{\pm,2}(\rR) \\
-1 & z_{\pm,1}(\rR)
\end{pmatrix}
\begin{pmatrix}
\frac{\mp \beta(0)}{\alpha_\pm(\rR)} \\
1
\end{pmatrix}
=
\begin{pmatrix}
-\left(\frac{\pm \beta(0)}{\alpha_\pm(\rR)}  +z_{\pm,2}(\rR) \right)\\
\frac{\pm \beta(0)}{\alpha_\pm(\rR)}  +z_{\pm,1}(\rR)
\end{pmatrix},
\]
where
\[
\frac{\pm \beta(0)}{\alpha_\pm(\rR)} + z_{\pm,j}(\rR) = 
\frac{q^*(\mp a(\rL) \mp a(\rR))}{(1 \pm p)  b(\rR)}  + 
\frac{q^*((-1)^j \pm a(\rR))}{(1 \pm p)  b(\rR)}
= \frac{q^*((-1)^j \mp a(\rL))}{(1 \pm p)  b(\rR)}
\]
Therefore
\[
\frac{\det P_\pm(\rR)}{m_\pm}
\begin{pmatrix}
k_{\pm,1} \\
k_{\pm,2}
\end{pmatrix}
= 
\frac{q^*}{(1 \pm p)  b(\rR)} 
\begin{pmatrix}
-1 \pm a(\rL)\\
-1 \mp a(\rL)
\end{pmatrix} = 
\frac{-\det P_\pm(\rR)}{2}
\begin{pmatrix}
-1 \pm a(\rL)\\
-1 \mp a(\rL)
\end{pmatrix}.
\]
\end{proof}

\begin{proof}[Proof of Equality \cref{Equation1: Classification of Type II}]
We shall first assume $a(\rL) = 1.$ If the sequences $\Phi_\pm$ satisfy \cref{Equation1: One-sided DE} and \cref{Equation2: One-sided DE}, then
\[
\begin{pmatrix}
k_{+,1} \\
k_{+,2}
\end{pmatrix} =
\begin{pmatrix}
0 \\
m_+ 
\end{pmatrix} \mbox{ and }
\begin{pmatrix}
k_{-,1} \\
k_{-,2}
\end{pmatrix} = 
\begin{pmatrix}
m_- \\
 0
\end{pmatrix}.
\]
Thus $\Phi_+$ is square summable (resp. $\Phi_-$ is square summable) if and only if
\[
|m_+|^2 \sum_{x=0}^\infty |z_{+,2}(\rR)|^{2x} < \infty \qquad \left(\mbox{resp. } |m_-|^2 \sum_{x=0}^\infty |z_{-,1}(\rR)|^{2x} < \infty \right),
\]
where \cref{Equation: Absolute Value of Zpm} gives
\[
|z_{+,2}(\rR)| = f(- p)f(+ a(\rR)) \mbox{ and } |z_{-,1}(\rR)| = f(+ p)f(+ a(\rR)).
\]
Therefore, we obtain
\[
d_\pm = 
\begin{cases}
1, &\mp p + a(\rR) < 0, \\
0, &\mp p + a(\rR) \geq 0. \\
\end{cases}
\]
An analogous argument gives that if $a(\rL) = -1,$ then
\[
d_\pm = 
\begin{cases}
1, &\mp p - a(\rR) < 0, \\
0, &\mp p - a(\rR) \geq 0. \\
\end{cases}
\]
Thus, \cref{Equation1: Classification of Type II} is proved. 
\end{proof}

\subsubsection{Type \ref{Equation2: Definition of Type II} coin} 
This case is nothing but a repetition of the previous argument, but we include the proof for completeness. If $C$ is an anisotropic coin of Type \ref{Equation2: Definition of Type II}, then we shall make use of the isomorphism \cref{Equation2: Isomorphism from Qepsilon};
\[
d_\pm = \dim Q_{\epsilon_\pm} = \dim \eS(A_\pm(\rL)^{-1},\mp \beta(0),\alpha_\mp(\rL)^*) \leq 1.
\]
We shall compute $d_\pm$ by making use of \cref{Lemma: Square Summability}. As in \cref{Lemma: Eigenvalues of Apm}, the matrices $A_\pm(\rL)^{-1}$ admit diagonalisation of the following form:
\begin{align*}
A_\pm(\rL)^{-1} &= P_\pm(\rL)
\begin{pmatrix}
z_{\pm,1}(\rL)^{-1} & 0 \\
0 & z_{\pm,2}(\rL)^{-1}
\end{pmatrix} P_\pm(\rL)^{-1}, \\
P_\pm(\rL) &= 
\begin{pmatrix}
z_{\pm,1}(\rL) & z_{\pm,2}(\rL) \\
1 & 1
\end{pmatrix}.
\end{align*}
Given $\C$-valued sequences $\Phi_\pm = (\Phi_\pm(x))_{x \in \N},$ we get $\Phi_\pm \in \eS(A_\pm(\rL)^{-1},\mp \beta(0),\alpha_\mp(\rL)^*)$ if and only if $\Phi_\pm$ are square-summable and the following algebraic conditions hold:
\begin{align}
\label{Equation3: One-sided DE}&\Phi_\pm(x + 1) = A_\pm(\rL)^{-1}\Phi_\pm(x), &&x \in \N, \\ 
\label{Equation4: One-sided DE} &\Phi_\pm(1) = m_\pm
\begin{pmatrix}
1\\
\frac{\pm \beta(0)}{\alpha_\mp(\rL)^*}
\end{pmatrix}, 
&&\exists m_\pm \in \C, 
\end{align}

\begin{lemma}
If the sequences $\Phi_\pm$ satisfy \cref{Equation3: One-sided DE} and \cref{Equation4: One-sided DE}, then
\[
\begin{pmatrix}
k_{\pm,1} \\
k_{\pm,2}
\end{pmatrix} := P_\pm(\rL)^{-1} \Phi_\pm(1)
= 
\frac{m_\pm}{\det P_\pm(\rL)} 
\begin{pmatrix}
\frac{1 \pm a(\rR)}{+1 \mp a(\rL)}\\
\frac{1 \mp a(\rR)}{-1 \mp a(\rL)}
\end{pmatrix}
\]
\end{lemma}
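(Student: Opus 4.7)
The structure of this lemma is entirely parallel to the previous one (which computed $P_\pm(\rR)^{-1}\Phi_\pm(1)$ under the Type~II hypothesis), so the plan is to mirror that argument with the limit $\sharp = \rR$ replaced by $\sharp = \rL$ and with the initial vector for $\Phi_\pm(1)$ as given by \cref{Equation4: One-sided DE}. Concretely, I would start from the general identity \cref{Equation: To Be Used Later}, namely
\[
\begin{pmatrix} k_{\pm,1} \\ k_{\pm,2} \end{pmatrix} = \frac{1}{\det P_\pm(\rL)}\begin{pmatrix} 1 & -z_{\pm,2}(\rL) \\ -1 & z_{\pm,1}(\rL) \end{pmatrix}\Phi_\pm(1),
\]
recalling from the previous lemma that $\det P_\pm(\rL) = -2 q^{*}/(b(\rL)(1\pm p))$. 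Substituting $\Phi_\pm(1) = m_\pm (1,\pm\beta(0)/\alpha_\mp(\rL)^{*})^{T}$ then reduces the problem to evaluating the two entries
\[
1 \mp z_{\pm,2}(\rL)\,\frac{\beta(0)}{\alpha_\mp(\rL)^{*}}, \qquad -1 \pm z_{\pm,1}(\rL)\,\frac{\beta(0)}{\alpha_\mp(\rL)^{*}}.
\]

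The main computation is the simplification of the quotient $z_{\pm,j}(\rL)/\alpha_\mp(\rL)^{*}$. Using the explicit form \cref{Equation: Eigenvalues of Apmx} for the eigenvalues, the definition $\alpha_\mp(\rL)^{*} = (1\mp p)e^{-i\theta}b(\rL)^{*}$, and the elementary identities $q^{*}e^{i\theta}=|q|$, $|q|^{2}=(1-p)(1+p)$, and $|b(\rL)|^{2}=(1-a(\rL))(1+a(\rL))$, I would obtain
\[
\frac{z_{\pm,j}(\rL)}{\alpha_\mp(\rL)^{*}} = \frac{(-1)^{j} \pm a(\rL)}{|q|\,(1 - a(\rL)^{2})}.
\]
Multiplying by $\pm\beta(0) = \mp|q|(a(\rL)+a(\rR))$ cancels the factors of $|q|$, and after cancelling a factor of $1 \mp a(\rL)$ (for $j=2$) or $1 \pm a(\rL)$ (for $j=1$) from the numerator against the factorisation $(1-a(\rL))(1+a(\rL))$ in the denominator, one lands on
\[
\pm z_{\pm,2}(\rL)\frac{\beta(0)}{\alpha_\mp(\rL)^{*}} = -\frac{a(\rL)+a(\rR)}{\mp(1\mp a(\rL))},\qquad \pm z_{\pm,1}(\rL)\frac{\beta(0)}{\alpha_\mp(\rL)^{*}} = \frac{a(\rL)+a(\rR)}{\pm(1\pm a(\rL))}.
\]

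Inserting these into the expressions above and combining each over a common denominator gives exactly
\[
1 - \Bigl(\pm z_{\pm,2}(\rL)\frac{\beta(0)}{\alpha_\mp(\rL)^{*}}\Bigr) = \frac{1 \pm a(\rR)}{1 \mp a(\rL)},\qquad -1 + \Bigl(\pm z_{\pm,1}(\rL)\frac{\beta(0)}{\alpha_\mp(\rL)^{*}}\Bigr) = \frac{1 \mp a(\rR)}{-1 \mp a(\rL)},
\]
which, multiplied by the common prefactor $m_\pm/\det P_\pm(\rL)$, yields the asserted formula. I do not anticipate any genuine obstacle; the only point requiring care is tracking signs through the $\pm$ cases (in particular the two different ways of rearranging $-1 \pm a(\rR)$ in the second coordinate so that it matches the stated denominator $-1 \mp a(\rL)$), and this is a purely bookkeeping task whose analogue was already carried out in the Type~II proof.
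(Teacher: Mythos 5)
Your strategy is exactly the paper's: apply \cref{Equation: To Be Used Later} with $\sharp = \rL$, substitute the constrained initial vector from \cref{Equation4: One-sided DE}, and simplify each quantity $z_{\pm,j}(\rL)\frac{\pm\beta(0)}{\alpha_\mp(\rL)^*}$ using the basic identities for $q, b(\rL)$; the paper reaches the same simplification via the ``little trick'' form of the eigenvalue, but the algebra is equivalent.

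There is, however, a sign slip in your $j=2$ intermediate. You claim
\[
\pm\, z_{\pm,2}(\rL)\,\frac{\beta(0)}{\alpha_\mp(\rL)^*} \;=\; -\frac{a(\rL)+a(\rR)}{\mp(1\mp a(\rL))} \;=\; \frac{\pm(a(\rL)+a(\rR))}{1\mp a(\rL)},
\]
but a direct computation (using $q^*|q|/(1-p^2)=e^{-i\theta}$ and $|b(\rL)|^2 = 1 - a(\rL)^2$) gives
\[
z_{\pm,2}(\rL)\,\frac{\pm\beta(0)}{\alpha_\mp(\rL)^*} \;=\; \frac{\mp(a(\rL)+a(\rR))}{1\mp a(\rL)},
\]
i.e.\ the opposite sign. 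If you carried your stated intermediate into $1 - (\cdot)$ you would obtain $\frac{1\mp 2a(\rL)\mp a(\rR)}{1\mp a(\rL)}$, not the asserted $\frac{1\pm a(\rR)}{1\mp a(\rL)}$. (Relatedly, your prose about which linear factor is cancelled has $j=1$ and $j=2$ swapped: for $j=2$ the cancelled factor is $1\pm a(\rL)$, for $j=1$ it is $1\mp a(\rL)$.) Your $j=1$ intermediate, $\frac{\pm(a(\rL)+a(\rR))}{1\pm a(\rL)}$, is correct, and so is the final displayed answer, so this is a transcription error rather than a conceptual gap; it simply needs to be corrected so that the chain of displayed equalities actually holds.
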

\begin{proof}
It follows from \cref{Equation: To Be Used Later} that
\[
\det P_\pm(\rL)
\begin{pmatrix}
k_{\pm,1} \\
k_{\pm,2}
\end{pmatrix} 
= 
m_\pm
\begin{pmatrix}
1 & -z_{\pm,2}(\rL) \\
-1 & z_{\pm,1}(\rL)
\end{pmatrix} 
\begin{pmatrix}
1\\
\frac{\pm \beta(0)}{\alpha_\mp(\rL)^*}
\end{pmatrix} =
m_\pm
\begin{pmatrix}
-\left(-1  +z_{\pm,2}(\rL)\frac{\pm \beta(0)}{\alpha_\mp(\rL)^*}\right) \\
-1 + z_{\pm,1}(\rL)\frac{\pm \beta(0)}{\alpha_\mp(\rL)^*}
\end{pmatrix} 
\]
With \cref{Equation: Little Trick} in mind, \cref{Equation: Eigenvalues of Apmx} becomes
\[
z_{\pm,j} = \frac{(1 \mp p)}{q} \frac{b(\rL)^*}{(-1)^j \mp a(\rL)},
\qquad j=1,2.
\]
We get
\[
z_{\pm,j}(\rL) \frac{\pm \beta(0)}{\alpha_\mp(\rL)^*} = \frac{(1 \mp p)}{q} \frac{b(\rL)^*}{(-1)^j \mp a(\rL)}\frac{|q|(\mp a(\rL) \mp a(\rR))}{(1 \mp p)e^{-i\theta} b(\rL)^*} 
= \frac{(-1)^{j+1} \mp a(\rR)}{(-1)^j \mp a(\rL)} + 1.
\]
Thus we obtain
\[
\begin{pmatrix}
k_{\pm,1} \\
k_{\pm,2}
\end{pmatrix} =
\frac{m_\pm}{\det P_\pm(\rL)} 
\begin{pmatrix}
\frac{1 \pm a(\rR)}{+1 \mp a(\rL)}\\
\frac{1 \mp a(\rR)}{-1 \mp a(\rL)}
\end{pmatrix}.
\]
\end{proof}

\begin{proof}[Proof of Equality \cref{Equation2: Classification of Type II}]
We shall first assume $a(\rR) = 1.$ If the sequences $\Phi_\pm$ satisfy \cref{Equation3: One-sided DE} and \cref{Equation4: One-sided DE}, then as before $\Phi_+$ is square summable (resp. $\Phi_-$ is square summable) if and only if
\[
m_+^2 \sum_{x=0}^\infty |z_{+,1}(\rL)|^{-2x} < \infty \qquad \left(\mbox{resp. } m_-^2 \sum_{x=0}^\infty |z_{-,2}(\rL)|^{-2x} < \infty \right),
\]
where \cref{Equation: Absolute Value of Zpm} together with \cref{Equation: Reflection Property} gives
\[
|z_{+,1}(\rL)|^{-1} = f(+ p)f(+ a(\rL)) \mbox{ and } |z_{-,2}(\rL)|^{-1} = f(- p)f(+ a(\rL)).
\]
Therefore, we obtain
\[
d_\pm = 
\begin{cases}
1, &\pm p + a(\rL) < 0, \\
0, &\pm p + a(\rL) \geq 0. \\
\end{cases}
\]
An analogous argument gives that if $a(\rR) = -1,$ then
\[
d_\pm = 
\begin{cases}
1, &\pm p - a(\rL) < 0, \\
0, &\pm p - a(\rL) \geq 0. \\
\end{cases}
\]
The claim follows.
\end{proof}

\subsubsection{Type \ref{Equation: Definition of Type III} coin} 
Let $C$ be of Type \ref{Equation: Definition of Type III}. This case turns out to be the hardest case. Here, we shall make use of the isomorphism \cref{Equation3: Isomorphism from Qepsilon}:
\[
d_\pm = \dim \ker Q_{\epsilon_\pm} = \dim \ker\left(L \oplus L - \bigoplus_{x \in \Z} A_\pm(x)\right).
\]

\begin{lemma}
\label{Lemma: Turning One DE into 2 DEs}
Given arbitrary $\C^2$-valued sequences $\Phi_\pm = (\Phi_\pm(x))_{x \in \Z},$ we define two sequences $\Phi_{\pm,\rL},\Phi_{\pm,\rR}$ by
\[
\Phi_{\pm,\rL}(x) := \Phi_\pm(-x+1) \mbox{ and } \Phi_{\pm,\rR}(x) := \Phi_\pm(x), \qquad x \in \N.
\]
Then $\Phi_\pm \in \ker\left(L \oplus L - \bigoplus_{x \in \Z} A_\pm(x)\right)$ if and only if the following three conditions are simultaneously satisfied:
\begin{align}
\label{Equation1: Criterion for Case C} &\Phi_{\pm,\rR} \in \ker\left(L \oplus L - \bigoplus_{x \in \N}^\infty A_\pm(\rR)\right),\\ 
\label{Equation2: Criterion for Case C} &\Phi_{\pm,\rL} \in \ker\left(L \oplus L - \bigoplus_{x \in \N}^\infty A_\pm(\rL)^{-1}\right),\\
\label{Equation3: Criterion for Case C} &\Phi_{\pm,\rR}(1)  = A_\pm(0)\Phi_{\pm,\rL}(1).
\end{align}
\end{lemma}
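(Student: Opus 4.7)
The plan is to simply unfold the meaning of the kernel condition $\Phi_\pm \in \ker\bigl(L \oplus L - \bigoplus_{x \in \Z} A_\pm(x)\bigr)$ as a pointwise recurrence, split the index set $\Z$ into the three regions dictated by the simplification assumption \cref{Equation: Simplified Coin}, and show that each of the three simultaneous conditions listed in the statement corresponds exactly to one of these regions. There is no hidden analytic difficulty; the only care needed is the reindexing on the negative half-line.

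First I would record the elementary equivalence that $\Phi_\pm$ lies in the stated kernel (as an element of $\ell^2(\Z,\C^2)$) if and only if $\Phi_\pm$ is square-summable on $\Z$ and satisfies the pointwise recurrence
\[
\Phi_\pm(x+1) = A_\pm(x) \Phi_\pm(x), \qquad x \in \Z.
\]
Under \cref{Equation: Simplified Coin} the coefficient $A_\pm(x)$ equals $A_\pm(\rR)$ for $x \geq 1$ and $A_\pm(\rL)$ for $x \leq -1$, so the recurrence splits naturally into three pieces indexed by $x \geq 1$, $x=0$, and $x \leq -1$.

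For $x \geq 1$ the recurrence reads $\Phi_\pm(x+1) = A_\pm(\rR)\Phi_\pm(x)$, which after substituting $\Phi_{\pm,\rR}(x) = \Phi_\pm(x)$ is exactly the condition \cref{Equation1: Criterion for Case C}. For $x = 0$ the recurrence reads $\Phi_\pm(1) = A_\pm(0)\Phi_\pm(0)$, and since $\Phi_{\pm,\rR}(1) = \Phi_\pm(1)$ and $\Phi_{\pm,\rL}(1) = \Phi_\pm(0)$, this is exactly the matching condition \cref{Equation3: Criterion for Case C}. For $x \leq -1$ the matrix $A_\pm(\rL)$ is invertible (its eigenvalues are non-zero by \cref{Lemma: Eigenvalues of Apm}), and the recurrence may be rewritten as $\Phi_\pm(x) = A_\pm(\rL)^{-1}\Phi_\pm(x+1)$; setting $y = -x + 1 \in \N$ so that $\Phi_{\pm,\rL}(y) = \Phi_\pm(-y+1)$, this becomes $\Phi_{\pm,\rL}(y+1) = A_\pm(\rL)^{-1}\Phi_{\pm,\rL}(y)$ for $y \in \N$, which is precisely \cref{Equation2: Criterion for Case C}.

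Finally, square-summability of $\Phi_\pm$ over $\Z$ is trivially equivalent to square-summability of both $\Phi_{\pm,\rR}$ and $\Phi_{\pm,\rL}$ over $\N$, because the map $\Z \to \N \sqcup \N$ sending $x \mapsto x$ for $x \geq 1$ and $x \mapsto -x+1$ for $x \leq 0$ is a bijection preserving absolute values. Combining the three pointwise equivalences with this observation gives the claimed if-and-only-if statement. The only step requiring any thought is verifying that reindexing on the negative side does convert the leftward recurrence for $A_\pm(\rL)$ into a rightward recurrence for $A_\pm(\rL)^{-1}$ starting from $x = 1$ in $\N$; once the change of variable $y = -x + 1$ is chosen, this is a mechanical check.
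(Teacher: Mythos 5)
Your proof is correct and takes essentially the same route as the paper's: both unfold the kernel condition as the pointwise recurrence $\Phi_\pm(x+1)=A_\pm(x)\Phi_\pm(x)$, split $\Z$ into the three regions $x\geq 1$, $x=0$, $x\leq -1$ induced by \cref{Equation: Simplified Coin}, and identify each piece with one of \cref{Equation1: Criterion for Case C,Equation3: Criterion for Case C,Equation2: Criterion for Case C} (in particular using invertibility of $A_\pm(\rL)$ to turn the leftward recurrence into a rightward one for $\Phi_{\pm,\rL}$). You are, if anything, slightly more careful than the paper in flagging the square-summability bookkeeping and the invertibility of $A_\pm(\rL)$; the only minor slip is the phrase ``setting $y=-x+1\in\N$'' — for $x\leq -1$ that gives $y\geq 2$, so the cleaner substitution is $y=-x$, though your final recurrence $\Phi_{\pm,\rL}(y+1)=A_\pm(\rL)^{-1}\Phi_{\pm,\rL}(y)$ for $y\in\N$ is correct after the obvious reindexing.
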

\begin{proof}
Evidently, we have $\Phi_\pm \in \ker\left(L \oplus L - \bigoplus_{x \in \Z} A_\pm(x)\right)$ if and only if the following three conditions are simultaneously satisfied:
\begin{align*}
&\Phi_\pm(+x+1) = A_\pm(\rR)\Phi_\pm(+x), \qquad  x \in \N, \\
&\Phi_\pm(-x+1) = A_\pm(\rL)\Phi_\pm(-x), \qquad  x \in \N, \\
&\Phi_\pm(0 + 1) = A_\pm(0)\Phi_\pm(0),
\end{align*}
where the last condition is obviously \cref{Equation3: Criterion for Case C} and the first two conditions are equivalent to the following two equations respectively:
\begin{align*}
&(L \oplus L) \Phi_{\pm,\rR}(x) = A_\pm(\rR)\Phi_{\pm,\rR}(x), \\
&(L \oplus L) \Phi_{\pm,\rL}(x) = A_\pm(\rL)^{-1}\Phi_{\pm,\rL}(x).
\end{align*}
The claim follows.
\end{proof}

\begin{lemma}
We have
\begin{equation}
\label{Equation: Sandwitch of A by P}
P_\pm(\rR)^{-1}A_\pm(0) P_\pm(\rL)  = 
\begin{pmatrix}
z_{\pm,1}(\rL) &  0\\
0  & z_{\pm,2}(\rL)
\end{pmatrix}.
\end{equation}
\end{lemma}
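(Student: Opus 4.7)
The plan is to verify the matrix identity \eqref{Equation: Sandwitch of A by P} by a direct column-by-column computation, leveraging the explicit form \eqref{Equation: Definition of Ppm} of the diagonalising matrix $P_\pm(\sharp)$.

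First I would rewrite the claim in the equivalent form
\[
A_\pm(0)\, P_\pm(\rL) \;=\; P_\pm(\rR)
\begin{pmatrix}
z_{\pm,1}(\rL) & 0 \\
0 & z_{\pm,2}(\rL)
\end{pmatrix}.
\]
Because the $j$-th column of $P_\pm(\sharp)$ is $(z_{\pm,j}(\sharp),1)^{\mathrm T}$, this reduces to the two vector identities
\[
A_\pm(0)\begin{pmatrix} z_{\pm,j}(\rL) \\ 1 \end{pmatrix}
\;=\;
z_{\pm,j}(\rL)\begin{pmatrix} z_{\pm,j}(\rR) \\ 1 \end{pmatrix},
\qquad j=1,2.
\]
The bottom entries match trivially because the second row of $A_\pm(0)$ is $(1,0)$. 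Recalling that $\alpha_\pm(1)=\alpha_\pm(\rR)$ and $\alpha_\mp(0)=\alpha_\mp(\rL)$ under the simplification \eqref{Equation: Simplified Coin}, the top entry reduces to the scalar identity
\[
\mp\beta(0)\,z_{\pm,j}(\rL) + \alpha_\mp(\rL)^* \;=\; \alpha_\pm(\rR)\,z_{\pm,j}(\rL)\,z_{\pm,j}(\rR).
\]

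To verify this, I would substitute the explicit definitions $\alpha_\pm(\sharp)=(1\pm p)e^{i\theta}b(\sharp)$, $\beta(0)=-|q|(a(\rL)+a(\rR))$, and the formula \eqref{Equation: Eigenvalues of Apmx} for $z_{\pm,j}(\sharp)$. Two simplifications do all the work: the relations $|q|\cdot \frac{q^*}{1\pm p}=(1\mp p)e^{-i\theta}$ (which follows from $|q|^2=(1+p)(1-p)$ and $q^*=|q|e^{-i\theta}$) and $|b(\sharp)|^2=(1-a(\sharp))(1+a(\sharp))$. After clearing a common factor of $(1\mp p)e^{-i\theta}/b(\rL)$ from both sides, the identity becomes the elementary polynomial equality
\[
(a(\rL)+a(\rR))((-1)^j\pm a(\rL)) + (1-a(\rL))(1+a(\rL)) \;=\; ((-1)^j\pm a(\rL))\,((-1)^j\pm a(\rR)),
\]
which one checks by factoring out $(-1)^j\pm a(\rL)$ from the appropriate two terms.

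The only potential obstacle is bookkeeping: keeping track of the $\pm$ signs, the conjugation on $\alpha_\mp(\rL)^*$, and the index $j=1,2$ simultaneously. This can be handled cleanly by treating the two sign choices in parallel and by using the simplifying substitution $q^* b(\sharp)^*/((1\pm p)((-1)^j\pm a(\sharp))) = (1\mp p)b(\sharp)/(q((-1)^j\mp a(\sharp)))$ recorded in \eqref{Equation: Little Trick}, which has already been used in the previous lemma. No further ideas are required beyond careful algebra.
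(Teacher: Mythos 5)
Your column-by-column (eigenvector) reformulation is a cosmetic rearrangement of what the paper does: the paper multiplies out $A_\pm(0)P_\pm(\rL)$ and $P_\pm(\rR)\,\mathrm{diag}(z_{\pm,1}(\rL),z_{\pm,2}(\rL))$ and matches entries, which — since the bottom row of $A_\pm(0)$ is $(1,0)$ and the columns of $P_\pm(\sharp)$ are $(z_{\pm,j}(\sharp),1)^{\mathrm T}$ — is exactly the check $A_\pm(0)(z_{\pm,j}(\rL),1)^{\mathrm T} = z_{\pm,j}(\rL)(z_{\pm,j}(\rR),1)^{\mathrm T}$ that you state, together with the same scalar identity and the same two simplifications $|q|\cdot q^*/(1\pm p)=(1\mp p)e^{-i\theta}$ and $|b(\sharp)|^2=(1-a(\sharp))(1+a(\sharp))$. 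So this is essentially the paper's argument, efficiently organised.

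One small transcription error: since $\mp\beta(0)=\pm|q|(a(\rL)+a(\rR))$, the polynomial identity obtained after clearing the common factor $(1\mp p)e^{-i\theta}/b(\rL)$ should read
$\pm(a(\rL)+a(\rR))((-1)^j\pm a(\rL)) + (1-a(\rL))(1+a(\rL)) = ((-1)^j\pm a(\rL))((-1)^j\pm a(\rR))$,
with the leading $\pm$ on the first product; as displayed (without that sign) the equality fails for the lower sign. With that sign restored the identity expands on both sides to $1\pm(-1)^j a(\rL) \pm(-1)^j a(\rR) + a(\rL)a(\rR)$, and the rest of your plan goes through.
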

\begin{proof}
As in \cref{Equation: Little Trick}, we can let
\[
r_\pm = \frac{q^*}{1 \pm p} = \frac{1 \mp p}{q}.
\]
We have
\begin{align*}
P_\pm(\sharp) &= 
\begin{pmatrix}
z_{\pm,1}(\sharp) & z_{\pm,2}(\sharp) \\
1 & 1
\end{pmatrix} = 
\begin{pmatrix}
r_\pm \left(\frac{-1 \pm a(\sharp)}{b(\sharp)}\right) & r_\pm \left(\frac{+1 \pm a(\sharp)}{b(\sharp)}\right) \\
1 & 1
\end{pmatrix}, \\ 
A_\pm(0) &= 
\begin{pmatrix}
\frac{\mp \beta(0)}{\alpha_\pm(\rR)} &  \frac{\alpha_\mp(\rL)^*}{\alpha_\pm(\rR)} \\
1 & 0
\end{pmatrix}
=
\begin{pmatrix}
\frac{\pm |q|(a(\rR) + a(\rL))}{(1 \pm p) e^{i \theta} b(\rR)} &  \frac{(1 \mp p) e^{-i \theta} b(\rL)^*}{(1 \pm p) e^{i \theta} b(\rR)} \\
1 & 0
\end{pmatrix}
=
\begin{pmatrix}
\frac{\pm r_\pm (a(\rR) + a(\rL))}{b(\rR)} &  \frac{r_\pm^2 b(\rL)^*}{b(\rR)} \\
1 & 0
\end{pmatrix}.
\end{align*}
On one hand,
\begin{align*}
P_\pm(\rR) 
\begin{pmatrix}
z_{\pm,1}(\rL) &  0\\
0  & z_{\pm,2}(\rL)
\end{pmatrix}
&=
\begin{pmatrix}
z_{\pm,1}(\rR) & z_{\pm,2}(\rR) \\
1 & 1
\end{pmatrix}
\begin{pmatrix}
z_{\pm,1}(\rL) &  0\\
0  & z_{\pm,2}(\rL)
\end{pmatrix} \\ 
&=
\begin{pmatrix}
z_{\pm,1}(\rL)z_{\pm,1}(\rR) &  z_{\pm,2}(\rL)z_{\pm,2}(\rR)\\
z_{\pm,1}(\rL)  & z_{\pm,2}(\rL)
\end{pmatrix}.
\end{align*}
On the other hand,
\begin{align*}
A_\pm(0) P_\pm(\rL) &=
\begin{pmatrix}
\frac{\pm r_\pm (a(\rR) + a(\rL))}{b(\rR)} &  \frac{r_\pm^2 b(\rL)^*}{b(\rR)} \\
1 & 0
\end{pmatrix}
\begin{pmatrix}
r_\pm \left(\frac{-1 \pm a(\rL)}{b(\rL)}\right) & r_\pm \left(\frac{+1 \pm a(\rL)}{b(\rL)}\right) \\
1 & 1
\end{pmatrix} \\ 
&=
\begin{pmatrix}
\frac{\pm r_\pm^2 (-1 \pm a(\rL))(a(\rR) + a(\rL))}{b(\rL) b(\rR)} + \frac{r_\pm^2 |b(\rL)|^2}{b(\rL)b(\rR)} & \frac{\pm r_\pm^2 (+1 \pm a(\rL))(a(\rR) + a(\rL))}{b(\rL) b(\rR)} + \frac{r_\pm^2 |b(\rL)|^2}{b(\rL)b(\rR)} \\
z_{\pm,1}(\rL) & z_{\pm,2}(\rL)
\end{pmatrix}
,
\end{align*}
where for each $j = 1,2,$ we have
\begin{align*}
&\frac{\pm r_\pm^2 ((-1)^j \pm a(\rL))(a(\rR) + a(\rL))}{b(\rL) b(\rR)} + \frac{r_\pm^2 |b(\rL)|^2}{b(\rL)b(\rR)} \\
&= \frac{r_\pm^2}{b(\rL) b(\rR)} \left( \pm (-1)^j a(\rR)  \pm (-1)^ja(\rL) + a(\rL)a(\rR) + a(\rL)^2 + |b(\rL)|^2\right)  \\
&= \frac{r_\pm^2}{b(\rL) b(\rR)} \left( \pm (-1)^j a(\rR)  \pm (-1)^ja(\rL) + a(\rL)a(\rR) + 1\right)  \\
&= \frac{r_\pm^2}{b(\rL) b(\rR)} ((-1)^j \pm a(\rL))((-1)^j \pm a(\rR))\\
&= z_{\pm,j}(\rL)z_{\pm,j}(\rR).
\end{align*}
We obtain \cref{Equation: Sandwitch of A by P} as follows;
\[
A_\pm(0) P_\pm(\rL) = 
\begin{pmatrix}
z_{\pm,1}(\rL)z_{\pm,1}(\rR) & z_{\pm,2}(\rL)z_{\pm,2}(\rR) \\
z_{\pm,1}(\rL) & z_{\pm,2}(\rL)
\end{pmatrix} = 
P_\pm(\rR) 
\begin{pmatrix}
z_{\pm,1}(\rL) &  0\\
0  & z_{\pm,2}(\rL)
\end{pmatrix}.
\]
\end{proof}

\begin{corollary}
\label{Corollary: Square Summability}
Suppose that $\C^2$-valued sequences $\Phi_\pm$ satisfy algebraic equations $\Phi_\pm(x+1) = A_\pm(x)\Phi_\pm(x)$ for each $x \in \Z,$ and that
\[
\begin{pmatrix}
k_{\pm,1} \\
k_{\pm,2} \\
\end{pmatrix} := 
P_\pm(\rL)^{-1} \Phi_\pm(0).
\]
Then $\Phi_\pm \in \ker\left(L \oplus L - \bigoplus_{x \in \Z} A_\pm(x)\right)$ if and only if the following sum is finite for each $j=1,2:$
\begin{equation}
\label{Equation: Type III Criterion}
|k_{\pm,j}|^2 \sum_{x \in \N} \left( |z_{\pm,j}(\rL)|^{-2x} +  |z_{\pm,j}(\rR)|^{+2x}\right) < \infty.
\end{equation}
Moreover, 
\begin{align}
\label{Equation1: Critical Case} |z_{+,1}(\rL)|^{-1} < 1 \mbox{ and } |z_{+,1}(\rR)| < 1 \mbox{ if and only if } -p \in (a(\rL),a(\rR)), \\
\label{Equation2: Critical Case} |z_{+,2}(\rL)|^{-1} < 1 \mbox{ and } |z_{+,2}(\rR)| < 1 \mbox{ if and only if } +p \in (a(\rR),a(\rL)), \\
\label{Equation3: Critical Case} |z_{-,1}(\rL)|^{-1} < 1 \mbox{ and } |z_{-,1}(\rR)| < 1 \mbox{ if and only if } -p \in (a(\rR),a(\rL)), \\
\label{Equation4: Critical Case} |z_{-,2}(\rL)|^{-1} < 1 \mbox{ and } |z_{-,2}(\rR)| < 1 \mbox{ if and only if } +p \in (a(\rL),a(\rR)). 
\end{align}
Furthermore, we have $d_\pm \leq 1.$
\end{corollary}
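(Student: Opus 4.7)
The plan is to combine \cref{Lemma: Turning One DE into 2 DEs}, \cref{Lemma: Square Summability}, and the sandwich identity \cref{Equation: Sandwitch of A by P} to split the bi-infinite summability question into two decoupled one-sided problems, and then to translate the resulting modulus conditions on $z_{\pm,j}(\sharp)$ into linear inequalities in $p$ and $a(\sharp)$ by means of the identities \cref{Equation: Reflection Property} and \cref{Equation: Additivity and Multiplicative} governing $f.$

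First I would apply \cref{Lemma: Turning One DE into 2 DEs} to any algebraic solution $\Phi_\pm,$ producing the right restriction $\Phi_{\pm,\rR}$ with initial datum $\Phi_{\pm,\rR}(1) = A_\pm(0)\Phi_\pm(0)$ and the left restriction $\Phi_{\pm,\rL}$ with $\Phi_{\pm,\rL}(1) = \Phi_\pm(0).$ With the definition $(k_{\pm,1},k_{\pm,2})^{\mathrm{T}} := P_\pm(\rL)^{-1}\Phi_\pm(0),$ \cref{Lemma: Square Summability} applied to $\Phi_{\pm,\rL}$ gives the left-sided criterion $\sum_{x \in \N} |k_{\pm,j}|^2 |z_{\pm,j}(\rL)|^{-2x} < \infty$ for each $j.$ On the right, the sandwich identity \cref{Equation: Sandwitch of A by P} yields $P_\pm(\rR)^{-1}\Phi_{\pm,\rR}(1) = P_\pm(\rR)^{-1}A_\pm(0)P_\pm(\rL)(k_{\pm,1},k_{\pm,2})^{\mathrm{T}} = (z_{\pm,1}(\rL)k_{\pm,1}, z_{\pm,2}(\rL)k_{\pm,2})^{\mathrm{T}},$ and then \cref{Lemma: Square Summability} together with $z_{\pm,j}(\rL) \neq 0$ (from \cref{Lemma: Eigenvalues of Apm}) reduces the right-sided condition to $\sum_{x \in \N} |k_{\pm,j}|^2 |z_{\pm,j}(\rR)|^{2x} < \infty$ for each $j.$ Adding these two one-sided criteria coordinate-wise produces precisely \cref{Equation: Type III Criterion}.

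To establish the four equivalences \crefrange{Equation1: Critical Case}{Equation4: Critical Case}, I would combine \cref{Equation: Absolute Value of Zpm} with the reflection identity \cref{Equation: Reflection Property} so that each modulus becomes a product of two $f$-values. For example, $|z_{+,1}(\rL)|^{-1} = f(p)f(a(\rL))$ and $|z_{+,1}(\rR)| = f(-p)f(-a(\rR)),$ and then \cref{Equation: Additivity and Multiplicative} turns the two inequalities $< 1$ into $p + a(\rL) < 0$ and $-p - a(\rR) < 0,$ i.e.\ $a(\rL) < -p < a(\rR).$ The remaining three equivalences follow by the same mechanism with alternative sign choices.

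Finally, the bound $d_\pm \leq 1$ should follow from a direct contradiction: within a fixed sign $\pm,$ the critical conditions for $j=1$ and $j=2$ demand mutually opposite orderings of $a(\rL)$ and $a(\rR),$ so at most one index $j \in \{1,2\}$ can satisfy the full critical pair. Consequently any admissible $(k_{\pm,1},k_{\pm,2})$ must have $k_{\pm,j} = 0$ for at least one $j,$ so the space of square summable solutions is at most one-dimensional. I expect the main obstacle to be purely bookkeeping: propagating the $\pm/\mp$ signs and the reflection $f(\kappa)^{-1} = f(-\kappa)$ correctly through all four critical cases without confusion; once the sandwich identity \cref{Equation: Sandwitch of A by P} is in place, the rest is essentially mechanical.
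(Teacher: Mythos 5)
Your proposal is correct and follows essentially the same route as the paper's proof: it splits the bi-infinite problem into two half-line problems via Lemma~\ref{Lemma: Turning One DE into 2 DEs}, uses the sandwich identity~\cref{Equation: Sandwitch of A by P} to relate the right-hand diagonalising coefficients to $(k_{\pm,1},k_{\pm,2})$ through $z_{\pm,j}(\rL)$, applies Lemma~\ref{Lemma: Square Summability} to both halves, and then translates the modulus conditions into linear inequalities via~\cref{Equation: Absolute Value of Zpm} together with the identities~\cref{Equation: Reflection Property} and~\cref{Equation: Additivity and Multiplicative}. Your concluding argument that $j=1$ and $j=2$ require opposite strict orderings of $a(\rL)$ and $a(\rR)$ (so at most one $k_{\pm,j}$ can be nonzero) is the same observation the paper packages as a WLOG assumption $a(\rL)<a(\rR)$, and both give $d_\pm \leq 1$.
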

Note that $k_{\pm,1}$ and $k_{\pm,2}$ cannot be simultaneously both non-zero. 
\begin{proof}
Recall that $\Phi \in \ker\left(L \oplus L - \bigoplus_{x \in \Z} A_\pm(x)\right)$ if and only if \cref{Equation1: Criterion for Case C,Equation2: Criterion for Case C,Equation3: Criterion for Case C} hold true. With the notation introduced in \cref{Lemma: Turning One DE into 2 DEs} in mind, we have
\begin{align*}
&
\begin{pmatrix}
k_{\pm,1} \\
k_{\pm,2} \\
\end{pmatrix} =
P_\pm(\rL)^{-1}\Phi_\pm(0) = P_\pm(\rL)^{-1}\Phi_{\pm,\rL}(1)
, \\
&
\begin{pmatrix}
k'_{\pm,1} \\
k'_{\pm,2} \\
\end{pmatrix}
:=
P_\pm(\rR)^{-1} \Phi_\rR(1) 
=
P_\pm(\rR)^{-1}A_\pm(0)\Phi_{\pm,\rL}(1)
= 
\begin{pmatrix}
z_{\pm,1}(\rL)k_{\pm,1}\\
z_{\pm,2}(\rL)k_{\pm,2}
\end{pmatrix},
\end{align*}
where the second last equality follows from \cref{Equation3: Criterion for Case C} and the last equality follows from \cref{Equation: Sandwitch of A by P}. It follows from \cref{Lemma: Square Summability} that $\Phi \in \ker\left(L \oplus L - \bigoplus_{x \in \Z} A_\pm(x)\right)$ if and only if
\begin{align*}
& |k_{\pm,j}|^2  \sum_{x \in \N} |z_{\pm,j}(\rL)|^{-2x} < \infty, &&j=1,2, \\
&|k_{\pm,j}|^2 |z_{\pm,j}(\rL)|^2 \sum_{x \in \N}  |z_{\pm,j}(\rR)|^{+2x} < \infty, &&j=1,2,
\end{align*}
and so we get the criterion \cref{Equation: Type III Criterion}. It follows from \cref{Equation: Absolute Value of Zpm} that
\[
\begin{matrix}
|z_{\pm,1}(\rL)|^{-1} = f(\pm p)f(\pm a(\rL)), \qquad & |z_{\pm,1}(\rR)| = f(\mp p)f(\mp a(\rR)), \\
|z_{\pm,2}(\rL)|^{-1} = f(\pm p)f(\mp a(\rL)), \qquad & |z_{\pm,2}(\rR)| = f(\mp p)f(\pm a(\rR)),
\end{matrix}
\]
where 
\begin{align*}
&|z_{\pm,1}(\rL)|^{-1} < 1 \mbox{ and } |z_{\pm,1}(\rR)| < 1 \mbox{ if and only if }  \pm a(\rL) < \mp p < \pm a(\rR), \\
&|z_{\pm,2}(\rL)|^{-1} < 1 \mbox{ and } |z_{\pm,2}(\rR)| < 1 \mbox{ if and only if }  \mp a(\rL) < \mp p < \mp a(\rR).
\end{align*}
Thus \crefrange{Equation1: Critical Case}{Equation4: Critical Case} hold true. Finally, we may assume without loss of generality that $a(\rL) < a(\rR),$ so that \crefrange{Equation2: Critical Case}{Equation3: Critical Case} both fail to hold. That is, $\Phi_\pm \in \ker\left(L \oplus L - \bigoplus_{x \in \Z} A_\pm(x)\right)$ are always of the following forms:
\[
\begin{pmatrix}
k_{+,1} \\
0 \\
\end{pmatrix} = 
P_\pm(\rL)^{-1} \Phi_+(0) \mbox{ and }
\begin{pmatrix}
0 \\
k_{-,2} \\
\end{pmatrix} = 
P_\pm(\rL)^{-1} \Phi_-(0),
\]
and so $d_\pm \neq 2.$ This is because the conclusion of \cref{Lemma: Linear Independence of Solutions to Matrix Difference Equation} still holds true, if the indexing set $\N$ is replaced by $\Z.$
\end{proof}

\begin{proof}[Proof of Equality \cref{Equation: Classification of Type III}]
The claim immediately follows from \crefrange{Equation1: Critical Case}{Equation4: Critical Case}.
\end{proof}

\section{Proof of the main theorem} 
\label{Section: Proof of the main theorem}

We are finally in a position to prove \cref{Theorem: MainTheorem}, the main theorem of the present paper. This will be done in two separate steps: proof of the Fredholmness characterisation as in \cref{Equation: Fredholmness} and proof of the index formula \cref{Equation: Index Formula for Type III}.

\subsection{The Fredholmness}
\label{Section: Characterisation of Closed Range}

In order to prove the Fredholmness characterisation \cref{Equation: Fredholmness}, let us first discuss the following simple characterisation of the closedness of the range of $Q_{\epsilon_+}$;
\begin{lemma}
\label{Remark: Spectral Gaps and Closedness}
With the notation introduced in \cref{Lemma: Diagonalisation of Shift Implies Off-diagonalisation of Super Charge} in mind, the operator $Q_{\epsilon_+}$ has a closed range if and only if the time-evolution $U$ has spectral gaps\footnote 
{
That is to say, $\pm 1$ are not accumulation points of the spectrum of $U.$
} 
at $\pm 1.$
\end{lemma}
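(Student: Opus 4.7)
The plan is to relate the spectrum of $Q_{\epsilon_+}^{\ast}Q_{\epsilon_+}$ to the spectrum of the time-evolution $U$ by going through the superhamiltonian $H = Q^2$. First, I would exploit the fact that $\varGamma$ and $C$ are unitary involutions to observe that $C \varGamma = C^{-1} \varGamma^{-1} = U^{-1} = U^{\ast}$, and hence $2iQ = [\varGamma,C] = \varGamma C - C \varGamma = U - U^{\ast}$. Squaring gives
\[
-4 H = -4 Q^2 = (U - U^{\ast})^2 = U^2 - 2 + U^{-2},
\]
so $H = (2 - U^2 - U^{-2})/4$. By the spectral mapping theorem applied to the unitary $U$, writing a general spectral value of $U$ as $e^{i\theta}$, we obtain
\[
\sigma(H) = \{\sin^2 \theta \mid e^{i\theta} \in \sigma(U)\}.
\]

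Next I would use \cref{Lemma: Diagonalisation of Shift Implies Off-diagonalisation of Super Charge}. Since the supercharge $Q$ is self-adjoint, so is $Q_\epsilon = \epsilon^{\ast} Q \epsilon$; therefore $Q_{\epsilon_-} = Q_{\epsilon_+}^{\ast}$ in the representation \cref{Equation: Skew-diagonal Representation of Qepsilon}, and squaring yields
\[
\epsilon^{\ast} H \epsilon = Q_\epsilon^2 =
\begin{pmatrix} Q_{\epsilon_+}^{\ast} Q_{\epsilon_+} & 0 \\ 0 & Q_{\epsilon_+} Q_{\epsilon_+}^{\ast} \end{pmatrix}.
\]
Since $\sigma(A^{\ast} A) \setminus \{0\} = \sigma(A A^{\ast}) \setminus \{0\}$ for any bounded $A$, and since the spectrum is a unitary invariant, it follows that $\sigma(H) \setminus \{0\} = \sigma(Q_{\epsilon_+}^{\ast} Q_{\epsilon_+}) \setminus \{0\}$.

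Finally, combining these two facts: by the closed-range criterion \cref{Equation: Closed Range Theorem}, the operator $Q_{\epsilon_+}$ has closed range precisely when $\inf \sigma(Q_{\epsilon_+}^{\ast} Q_{\epsilon_+}) \setminus \{0\} > 0$, equivalently when $0$ is not an accumulation point of $\sigma(Q_{\epsilon_+}^{\ast} Q_{\epsilon_+})$, equivalently when $0$ is not an accumulation point of $\sigma(H)$. By the spectral description above, the preimage of any small neighbourhood of $0$ under the continuous map $e^{i\theta} \mapsto \sin^2 \theta$ on the unit circle is a small neighbourhood of $\{\pm 1\}$, so this last condition is in turn equivalent to $\pm 1$ not being accumulation points of $\sigma(U)$, i.e.\ to $U$ having spectral gaps at $\pm 1$.

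The only non-mechanical step is the spectral-mapping translation from accumulation points of $\sigma(H)$ near $0$ to accumulation points of $\sigma(U)$ near $\pm 1$; this is routine because $\theta \mapsto \sin^2 \theta$ is a proper continuous map from the unit circle whose zero set is exactly $\{\pm 1\}$. Everything else is direct bookkeeping using the preliminary facts \cref{Equation: Kernel Theorem}--\cref{Equation: Closed Range Theorem} together with \cref{Lemma: Diagonalisation of Shift Implies Off-diagonalisation of Super Charge}.
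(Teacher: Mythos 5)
Your proof is correct and follows essentially the same route as the paper's: you use the closed-range criterion \cref{Equation: Closed Range Theorem}, identify $\sigma(Q_{\epsilon_+}^*Q_{\epsilon_+}) \setminus \{0\}$ with $\sigma(H) \setminus \{0\}$ via unitary equivalence and the block-diagonal form, and then apply the spectral theorem to $H = (\mathrm{Im}\,U)^2$. The paper states this more tersely (it simply notes $\sigma(H_\epsilon)=\sigma(H)$, $H=(\mathrm{Im}\,U)^2$, and invokes the spectral theorem), but your explicit computation of $\sigma(H)=\{\sin^2\theta : e^{i\theta}\in\sigma(U)\}$ and the careful passage from accumulation points of $\sigma(H)$ at $0$ to accumulation points of $\sigma(U)$ at $\pm 1$ is precisely what the paper's appeal to the spectral theorem is compressing.
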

\begin{proof}
It is a well-known fact that $Q_{\epsilon_+}$ has a closed range if and only if $\inf \sigma(Q_{\epsilon_+}^*Q_{\epsilon_+}) \setminus \{0\} > 0$ (see, for example, \cite[Lemma 7.27]{Book:Arai17:AnalysisOnFockSpacesAndMathematicalTheoryOfQuantumFields}). Since $\sigma(H_\epsilon) = \sigma(H)$ and $H = ( \textrm{Im\,} U)^2,$ the claim follows from the spectral theorem.
\end{proof}

To put it another way, \cref{Remark: Spectral Gaps and Closedness} states that the closedness of the operator $Q_{\epsilon_+}$ is nothing but a spectral property of the time-evolution $U = \varGamma C.$ The following result is therefore useful;

\begin{theorem}
\label{Theorem: Essential Spectrum of Time-evolution}
Let $(\varGamma,C)$ be a split-step SUSYQW with an anisotropic coin $C,$ and let
\[
U_\sharp = \varGamma \bigoplus_{x \in \Z} C(\sharp),
\qquad \sharp = \rL,\rR.
\]
Then the essential spectrum of the time-evolution $U = \varGamma C$ is given by
\begin{equation}
\label{Equation: Essential Spectrum of U}
\ess(U) = \sigma(U_\rL) \cup \sigma(U_\rR).
\end{equation}
More explicitly, we have $\sigma(U_\sharp) = \{z \in \T \mid \Re z \in I_\sharp\}$ for each $\sharp = \rL,\rR,$ where
\begin{equation}
\label{Equation: Spectrum of Usharp}
I_\sharp := 
\begin{cases}
\{\pm 1\}, & \mbox{if $C(\sharp)$ is trivial}, \\
[pa(\sharp) - |qb(\sharp)|,pa(\sharp) + |qb(\sharp)|], & \mbox{otherwise}. 
\end{cases}
\end{equation}
\end{theorem}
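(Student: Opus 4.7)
The plan is to prove the two equations \cref{Equation: Essential Spectrum of U} and \cref{Equation: Spectrum of Usharp} separately, handling the translation-invariant operators $U_\sharp$ by Fourier analysis and then reducing $\ess(U)$ to them by a compact perturbation.

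For \cref{Equation: Spectrum of Usharp}, the operator $U_\sharp = \varGamma \bigoplus_{x \in \Z} C(\sharp)$ is translation invariant, so the discrete Fourier transform $\sF$ diagonalises it into multiplication by the matrix symbol $\hat U_\sharp(k) = \hat \varGamma(k)\, C(\sharp)$ on $L^2(\T, \C^2)$, where $\hat \varGamma(k)$ is the symbol given in \cref{Equation: Fourier Transform of Shift}. If $C(\sharp) = \pm 1$ is trivial, then $U_\sharp = \pm \varGamma$, and the eigenvalue equation for $\hat \varGamma(k)$ gives $\sigma(\varGamma) = \{\pm 1\} = I_\sharp$ for every $k$. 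If $C(\sharp)$ is non-trivial, then $\hat U_\sharp(k)$ is a product of two Hermitian unitaries, hence a unitary $2 \times 2$ matrix whose two eigenvalues form a complex-conjugate pair $z(k), \overline{z(k)} \in \T$ determined entirely by $\Re z(k) = \tfrac{1}{2}\tr \hat U_\sharp(k)$. A direct multiplication then yields
\[
\tfrac{1}{2} \tr \hat U_\sharp(k) = p\, a(\sharp) + \Re(q e^{ik} b(\sharp)).
\]
As $k$ sweeps $\T$, this real number traces out exactly the interval $I_\sharp = [pa(\sharp) - |qb(\sharp)|,\, pa(\sharp) + |qb(\sharp)|]$, establishing \cref{Equation: Spectrum of Usharp}.

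For \cref{Equation: Essential Spectrum of U}, I would introduce the cut-off coin $\tilde C := C(\rL)\chi_{(-\infty,0]} + C(\rR)\chi_{[1,+\infty)}$, viewed as a multiplication operator on $\rH$. The anisotropy assumption means that $C - \tilde C$ is a multiplication operator by a $2 \times 2$-matrix-valued sequence vanishing at $\pm \infty$, hence compact on $\rH$; consequently $U = \varGamma C$ and $V := \varGamma \tilde C$ differ by a compact operator, and Weyl's theorem gives $\ess(U) = \ess(V)$. It then remains to show $\ess(V) = \sigma(U_\rL) \cup \sigma(U_\rR)$. For the inclusion $\supseteq$, I would use Weyl sequences: given $z \in \sigma(U_\sharp)$, take an approximate eigenfunction of $\hat U_\sharp$, pull it back to $\rH$, and translate its support far into the half-line indexed by $\sharp$, where $V$ and $U_\sharp$ agree outside a bounded neighbourhood of the interface $x = 0$. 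For the converse inclusion, I would glue the bounded resolvents $R_\sharp := (U_\sharp - z)^{-1}$ ($\sharp = \rL, \rR$) by a smooth partition of unity $1 = \varphi_\rL + \varphi_\rR$ with $\varphi_\sharp$ supported near $\sharp$, defining the parametrix $R := \varphi_\rL R_\rL \varphi_\rL + \varphi_\rR R_\rR \varphi_\rR$; since $\varGamma$ is tridiagonal under the identification $\rH = \ell^2(\Z)\oplus \ell^2(\Z)$, the commutators $[\varGamma, \varphi_\sharp]$ are finite rank, so $R(V - z) - 1$ and $(V - z) R - 1$ are compact, showing $V - z$ is Fredholm for every $z \notin \sigma(U_\rL) \cup \sigma(U_\rR)$.

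The main obstacle is the parametrix matching in the last step, whose success relies crucially on the locality (nearest-neighbour form) of the off-diagonal blocks of $\varGamma$. A convenient alternative is to invoke the anisotropic quantum-walk spectral machinery developed in \cite{Richard-Suzuki-Aldecoa18,Richard-Suzuki-Aldecoa19}, which already yields the decomposition $\ess(U) = \sigma(U_\rL) \cup \sigma(U_\rR)$ abstractly, thereby reducing the entire proof to the Fourier calculation that establishes \cref{Equation: Spectrum of Usharp}.
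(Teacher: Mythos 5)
Your proposal is correct and follows essentially the same route as the paper's (brief) sketch: the discrete Fourier transform for the explicit computation of $\sigma(U_\sharp)$ via the trace of the symbol, and Weyl-type arguments combined with the compactness of $C-\tilde C$ for the essential-spectrum decomposition, which the paper simply outsources to Weyl's criterion or the $C^*$-algebraic result of \cite{Richard-Suzuki-Aldecoa18}. The only cosmetic point is that for your parametrix $R=\varphi_\rL R_\rL\varphi_\rL+\varphi_\rR R_\rR\varphi_\rR$ one wants $\varphi_\rL^2+\varphi_\rR^2=1$ rather than $\varphi_\rL+\varphi_\rR=1$, but since $\varphi_\rL^2+\varphi_\rR^2-1=-2\varphi_\rL\varphi_\rR$ is finitely supported this discrepancy is itself finite rank and does not affect the conclusion.
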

\begin{proof}[Proof of \cref{Theorem: Essential Spectrum of Time-evolution}]
This result is standard, and so we will only give a brief sketch of the proof. Firstly, the well-known equality \cref{Equation: Essential Spectrum of U} can be proved by either using Weyl's criterion for the essential spectrum or an elegant 
$C^*$-algebraic approach (see, for example, \cite[Theorem 2.2]{Richard-Suzuki-Aldecoa18}).  Secondly, the fact that each $\sigma(U_\sharp)$ is characterised by \cref{Equation: Spectrum of Usharp} is an easy consequence of the standard approach which makes use of the discrete Fourier transform.
\end{proof}

We are now in a position to show that \cref{Equation: Fredholmness} is also a characterisation of the closedness of the range of $Q_{\epsilon_+}$;
\begin{theorem}
\label{Theorem: Closedness of Range of Qepsilon}
Let $(\varGamma,C)$ be a split-step SUSYQW endowed with an isotropic coin $C.$ With the notation introduced in \cref{Theorem: Witten Index Formula} in mind, the operator $Q_{\epsilon_+}$ has a closed range if and only if $|p| \neq |a(\sharp)|$ whenever $C(\sharp)$ is non-diagonal, where $\sharp = \rL,\rR.$
\end{theorem}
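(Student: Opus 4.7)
The plan is to combine \cref{Remark: Spectral Gaps and Closedness} with \cref{Theorem: Essential Spectrum of Time-evolution} and then finish with a short algebraic reduction. First, by \cref{Remark: Spectral Gaps and Closedness}, the operator $Q_{\epsilon_+}$ has closed range if and only if $U$ has spectral gaps at $\pm 1$, that is, neither $+1$ nor $-1$ is an accumulation point of $\sigma(U).$ Since $U$ is unitary and hence normal, the set $\sigma(U)\setminus \ess(U)$ consists exactly of isolated eigenvalues of finite multiplicity, so having a spectral gap at $\pm 1$ is equivalent to $\pm 1 \notin \ess(U).$ Applying \cref{Theorem: Essential Spectrum of Time-evolution} then rephrases the closed range condition as
\[
\{-1,+1\} \cap (\sigma(U_\rL) \cup \sigma(U_\rR)) = \emptyset,
\]
which, via the explicit formula $\sigma(U_\sharp) = \{z \in \T : \Re z \in I_\sharp\},$ further reduces to $\pm 1 \notin I_\sharp$ for each $\sharp = \rL,\rR.$

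It therefore remains to determine when the endpoints of $I_\sharp = [pa(\sharp) - |qb(\sharp)|, pa(\sharp) + |qb(\sharp)|]$ sandwich $\pm 1.$ When $C(\sharp)$ is non-diagonal, the memberships $\pm 1 \in I_\sharp$ are equivalent to $|1 \mp pa(\sharp)| \leq |qb(\sharp)|.$ Squaring is harmless because $|pa(\sharp)| \leq 1,$ and substituting $|q|^2 = 1 - p^2$ together with $|b(\sharp)|^2 = 1 - a(\sharp)^2$ collapses the inequality to the perfect-square condition
\[
(a(\sharp) \mp p)^2 \leq 0,
\]
which forces $a(\sharp) = \pm p.$ Consequently $\pm 1 \notin I_\sharp$ exactly when $|a(\sharp)| \neq |p|,$ and the characterisation $|p| \neq |a(\sharp)|$ for each $\sharp = \rL,\rR$ with $C(\sharp)$ non-diagonal falls out at once.

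The conceptual work is entirely carried by the two invoked results, and the only step that needs direct verification is the quadratic manipulation above. I do not expect a serious obstacle; the one subtlety worth flagging is the translation between ``spectral gap at $\pm 1$'' (phrased as absence of accumulation points) and ``$\pm 1 \notin \ess(U)$,'' which relies on the standard fact that, for a normal operator, any point of the spectrum outside the essential spectrum is an isolated eigenvalue of finite multiplicity, and conversely, accumulation points of the spectrum lie in $\ess(U).$
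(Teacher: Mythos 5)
Your overall route is exactly the paper's route: Lemma~\ref{Remark: Spectral Gaps and Closedness} converts the closed-range question to a spectral-gap question, Theorem~\ref{Theorem: Essential Spectrum of Time-evolution} supplies $\ess(U)$, and then one analyses membership of $\pm 1$ in $I_\sharp$. Your algebraic finish --- rewrite $\pm1\in I_\sharp$ as $|1\mp p a(\sharp)|\le|q b(\sharp)|$, square, and use $|q|^2=1-p^2$, $|b(\sharp)|^2=1-a(\sharp)^2$ to collapse to $(a(\sharp)\mp p)^2\le 0$ --- is a clean, correct alternative to the paper's trigonometric substitution $p=\sin\Theta$, $a(\sharp)=\sin\Theta_\sharp$ together with the cosine addition formula; both give $\pm1\in I_\sharp\iff a(\sharp)=\pm p$ for non-diagonal $C(\sharp)$.

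There is, however, one genuine gap in your bridge step. You assert that ``spectral gap at $\pm 1$'' (in the sense that $\pm1$ are not accumulation points of $\sigma(U)$) is \emph{equivalent} to $\pm1\notin\ess(U)$, and then justify this by citing that accumulation points lie in $\ess(U)$. That gives only one direction. The converse can fail: a point of $\ess(U)$ may be an isolated eigenvalue of infinite multiplicity, hence in the essential spectrum yet not an accumulation point of $\sigma(U)$. In the present model this is not a hypothetical: if some limit $C(\sharp)$ is trivial, Theorem~\ref{Theorem: Essential Spectrum of Time-evolution} gives $\sigma(U_\sharp)=\{\pm 1\}$, so $\pm1\in\ess(U)$, while Lemma~\ref{Lemma: Infinite Dimensional Ker Qepsilon} shows $\dim\ker Q_{\epsilon_\pm}=\infty$, exhibiting exactly such an isolated infinite-multiplicity eigenvalue. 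Your chain of reductions would then deliver ``no closed range,'' whereas the theorem's condition --- which constrains only the non-diagonal limits --- allows closed range in that situation. The paper avoids this pitfall by first observing that $\sigma(U_\sharp)$ is a \emph{discrete} subset of $\T$ precisely when $C(\sharp)$ is diagonal, so that $\pm1\in\sigma(U_\sharp)$ contributes an accumulation point of $\ess(U)$ only when $C(\sharp)$ is non-diagonal; the spectral-gap condition is therefore governed solely by the non-diagonal limits. You need to insert this dichotomy (discrete vs.\ arc-like $\sigma(U_\sharp)$) in place of the blanket equivalence with $\pm1\notin\ess(U)$.
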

\begin{proof}
It immediately follows from \cref{Theorem: Essential Spectrum of Time-evolution} that for each $\sharp = \rL,\rR,$ the set $\sigma(U_\sharp)$ is a discrete subset of $\T$ if and only if $C(\sharp)$ is a diagonal matrix (i.e. $b(\sharp) = 0$). With  \cref{Remark: Spectral Gaps and Closedness} in mind, we have that the time-evolution $U$ has spectral gaps at $\pm 1$ if and only if the set $I_\sharp$ does not contain both $-1$ and $+1,$ whenever the limit $C(\sharp)$ is not a diagonal matrix. For such $\sharp,$ we introduce the following parametrisation:
\[
\Theta := \arcsin(p) \mbox{ and } \Theta_\sharp := \arcsin(a(\sharp)).
\]
With this parametrisation in mind, we obtain
\begin{align*}
p &= \sin \Theta, & q &= e^{i \Arg q}\cos \Theta , \\
a(\sharp) &= \sin \Theta_\sharp, & b(\sharp) &=  e^{i \Arg b(\sharp)}\cos \Theta_\sharp.
\end{align*}
The addition formula for the cosine gives:
\[
pa(\sharp) \pm |qb(\sharp)| = \sin \Theta \sin \Theta_\sharp \pm  \cos \Theta \cos \Theta_\sharp = \pm \cos (\Theta \mp \Theta_\sharp),
\]
so that the set $I_\sharp$ becomes the following closed interval:
\begin{equation}
I_\sharp = [-\cos (\Theta + \Theta_\sharp), +\cos (\Theta - \Theta_\sharp)],
\end{equation}
where $-\pi < \Theta \pm \Theta_\sharp < \pi.$ Thus, we have $\pm 1 \in I_\sharp$ if and only if $\Theta \mp \Theta_\sharp = 0.$ That is, $I_\sharp$ does not contain both $-1$ and $+1$ if and only if $|p| \neq |a(\sharp)|.$ The claim follows.
\end{proof}

\begin{proof}[Proof of the characterisation \cref{Equation: Fredholmness}]
Let $(\varGamma,C)$ be a split-step SUSYQW endowed with an anisotropic coin $C,$ and let $C(\sharp)$ be a non-trivial unitary involution for each $\sharp = \rL,\rR.$ Recall the Fredholmness is invariant under compact perturbations (see \cref{Equation: Topological Invariance} for details). Therefore, we may assume without loss of generality that $C$ is of the form \cref{Equation: Simplified Coin}, so that \cref{Theorem: Classification of Dimensions} implies $d_\pm \leq 1.$ That is, \cref{Theorem: Closedness of Range of Qepsilon} implies that $(\varGamma,C)$ is Fredholm if and only if $|p| \neq |a(\sharp)|$ for each $\sharp = \rL,\rR.$ Note that if $C(\sharp)$ is diagonal, then $|p| \neq |a(\sharp)| = 1$ obviously holds true. The claim follows.
\end{proof}

\subsection{Proof of the index formula \cref{Equation: Index Formula for Type III}}
From here on, we shall assume that $(\varGamma,C)$ is a Fredholm SUSYQW and prove \cref{Equation: Index Formula for Type III} by considering the four coin types separately:

\subsubsection{Type \ref{Equation: Definition of Type I} coin} 

\begin{proof}[Proof of Equality \cref{Equation: Index Formula for Type III}]
If $C$ is of Type \ref{Equation: Definition of Type I}, then \cref{Equation: Index Formula for Type III} becomes
\begin{equation}
\label{Equation: Index Formula for Type I} 
\ind(\varGamma,C) = 0,
\end{equation}
since $|a(\sharp)| = 1$ for each $\sharp = \rL,\rR$ by assumption. In fact, \cref{Equation: Index Formula for Type I} immediately follows \cref{Equation: Classification of Type I}. The claim follows.
\end{proof}

\subsubsection{Type \ref{Equation1: Definition of Type II} coin} 

\begin{proof}[Proof of Equality \cref{Equation: Index Formula for Type III}]
If $C$ is of Type \ref{Equation1: Definition of Type II}, then \cref{Equation: Index Formula for Type III} becomes
\begin{equation}
\label{Equation1: Index Formula for Type II} 
\ind(\varGamma,C) = 
\begin{cases}
+\sgn p, & |a(\rR)| < |p|, \\
0, & \mbox{otherwise}, 
\end{cases} 
\end{equation}
since $|a(\rL)| = 1$ by assumption. It follows from \cref{Theorem: Classification of Dimensions} that $d_+ \neq d_-$ if and only if one of the following two conditions holds true:
\[
(d_\pm,d_\mp) = (1,0) \mbox{ if and only if } \mp p \leq a(\rL)a(\rR) < \pm p,
\]
where the last condition is equivalent to $|a(\rR)| < \pm p$, since $|p| \neq |a(\rR)|.$ Thus \cref{Equation1: Index Formula for Type II} holds true.
\end{proof}

\subsubsection{Type \ref{Equation2: Definition of Type II} coin} 

\begin{proof}[Proof of Equality \cref{Equation: Index Formula for Type III}]
If $C$ is of Type \ref{Equation2: Definition of Type II}, then \cref{Equation: Index Formula for Type III} becomes
\begin{equation}
\label{Equation2: Index Formula for Type II} 
\ind(\varGamma,C) = 
\begin{cases}
-\sgn p, & |a(\rL)| < |p|, \\
0, & \mbox{otherwise}, 
\end{cases} 
\end{equation}
since $|a(\rR)| = 1$ by assumption. It follows from \cref{Theorem: Classification of Dimensions} that $d_+ \neq d_-$ if and only if one of the following two conditions holds true:
\[
(d_\pm,d_\mp) = (1,0) \mbox{ if and only if } \pm p \leq a(\rL)a(\rR) < \mp p,
\]
where the last condition is equivalent to $|a(\rR)| < \mp p$, since $|p| \neq |a(\rR)|.$
\end{proof}

\subsubsection{Type \ref{Equation: Definition of Type III} coin} 

\begin{proof}[Proof of Equality \cref{Equation: Index Formula for Type III}]
Let us assume that $C$ is of Type \ref{Equation: Definition of Type III} coin operator. Note first that if $|a(\rL)| = |a(\rR)|,$ then $\ind(\varGamma,C) = 0.$ Thus, we shall assume $|a(\rL)| \neq |a(\rR)|$ from here on. By the invariance principle \cref{Equation: Witten Index for Negative Coin}, we shall assume without loss of generality that $a(\rR) < a(\rL)$ throughout, and so
\[
\ind(\varGamma,C) = \pm 1 \mbox{ if and only if } \pm p \in (a(\rR),a(\rL)) \mbox{ and } \mp p \notin (a(\rR),a(\rL)).
\]
We show first that \cref{Equation: Index Formula for Type III} holds true, under the assumption $|a(\rR)| < |a(\rL)|$ first. That is, we need to check 
\[
\ind(\varGamma,C) = 
\begin{cases}
+ \sgn p, & \mbox{if } p \neq 0 \mbox{ and } |a(\rR)| < |p| <|a(\rL)|, \\
0, & \mbox{otherwise}
\end{cases}
\]
Since $|a(\rR)| < |a(\rL)|$ and $a(\rR) < a(\rL),$ we must always have $a(\rL) = |a(\rL)|.$ If $a(\rR) \geq 0,$ then 
\[
\ind(\varGamma,C) = \pm 1 \mbox{ if and only if }  a(\rR) < \pm p < a(\rL).
\]
On the other hand, if $a(\rR) < 0,$ then
\[
\ind(\varGamma,C) = \pm 1 \mbox{ if and only if } \pm p \in (|a(\rR)|,|a(\rL)|).
\]
It remains to prove that \cref{Equation: Index Formula for Type III} holds true, under the other assumption $|a(\rL)| < |a(\rR)|.$ That is, we need to check 
\[
\ind(\varGamma,C) = 
\begin{cases}
- \sgn p, & \mbox{if } p \neq 0 \mbox{ and } |a(\rL)| < |p| < |a(\rR)|, \\
0, & \mbox{otherwise}. 
\end{cases}
\]
As before invariance principle \cref{Equation: Witten Index for Negative Coin} allows us to assume without loss of generality that $a(\rR) < a(\rL).$ Since $|a(\rL)| < |a(\rR)|$ and $a(\rR) < a(\rL),$ we must always have $a(\rR) = -|a(\rR)|.$ If $a(\rL) < 0,$ then 
\[
\ind(\varGamma,C) = \pm 1 \mbox{ if and only if } |a(\rL)| < \mp p < |a(\rR)|.
\]
On the other hand, if $a(\rL) \geq 0,$ then
\[
\ind(\varGamma,C) = \pm 1 \mbox{ if and only if }  |a(\rL)| < \mp p < |a(\rR)|.
\]
\end{proof}

\section{Concluding Remarks}
\label{Section: Concluding Remarks}

A somewhat natural question arises. Can we still define the Witten index, if a given SUSYQW fails to be Fredholm? The answer to this question turns out to be yes, and we shall give a brief account of how research towards this direction can be undertaken. In fact, the standard theory of supersymmetry is already capable of dealing with the Witten index which cannot be interpreted as the Fredholm index by making use of a certain trace formula. See, for example, \cite{BGGSS87} or \cite{Gesztesy-Simon88}. These papers provide a theoretical foundation in what follows.

Let $(\varGamma,C)$ be a one-dimensional split-step SUSYQW, and let $\epsilon$ be any unitary operator which gives diagonalisation of the shift operator as in \cref{Equation: Diagonalisation of the Shift Operator}. We can then consider the unitarily equivalent SUSYQW $(\epsilon^* \varGamma \epsilon,\epsilon^*C \epsilon)$ together with the new associated supercharge $Q_\epsilon := \epsilon^* \varGamma \epsilon$ and superhamiltonian $H_\epsilon := \epsilon^* H \epsilon = Q_\epsilon^2$ admitting:
\[
Q_\epsilon = 
\begin{pmatrix}
0 & Q_{\epsilon_-} \\
Q_{\epsilon_+} & 0
\end{pmatrix}, \qquad
H_\epsilon =
\begin{pmatrix}
H_{\epsilon_+} & 0 \\
0 & H_{\epsilon_-}
\end{pmatrix},
\]
where the first equality follows from \cref{Equation: Skew-diagonal Representation of Qepsilon}. We say that the triple $(\varGamma,C,\epsilon)$ is \textbi{trace-compatible}, if $H_{\epsilon_+} - H_{\epsilon_-}$ is a trace-class operator on $\ell^2(\Z).$ We can then define the \textbi{Witten index} of the triple $(\varGamma,C,\epsilon)$ by
\begin{equation}
\label{Equation: Trace-formula for The Witten Index}
\ind(\varGamma,C,\epsilon) := \lim_{t \to \infty} \tr(e^{-t H_{\epsilon_+}} - e^{-t H_{\epsilon_-}}),
\end{equation}
whenever the limit exists. It is not known to the authors whether or not Formula \cref{Equation: Trace-formula for The Witten Index} depends on $\epsilon.$ As in \cite{BGGSS87}, if the SUSYQW $(\varGamma,C)$ turns out to be Fredholm, then the above limit exists, and we get
\[
\ind(\varGamma,C,\epsilon) = \ind(\varGamma,C),
\]
where the left hand side does not depend on $\epsilon$ in this Fredholm case. That is, in principle, we should be able to recover \cref{Equation: Index Formula for Type III} by simply evaluating the trace-formula \cref{Equation: Trace-formula for The Witten Index}. Research towards this direction is work in progress, and this will be part of the PhD dissertation of the second author. The present paper concludes with the following simple example;

{\footnotesize
\begin{example}
Let $(\varGamma,C)$ be a one-dimensional split-step SUSYQW whose coin operator $C$ has the property that $b(x) = 0$ for each $x \in \Z.$ With the notation introduced in \cref{Theorem: Witten Index Formula} in mind, we obtain
\[
-2iQ_{\epsilon_\pm} = 0 - 0 \pm |q|(a_2(\cdot + 1) - a_1) =: \pm \beta.
\]
Then the superhamiltonian becomes
\[
H_\epsilon = Q_{\epsilon}^2 = 
\begin{pmatrix}
0 & -\frac{i\beta}{2} \\
+\frac{i\beta}{2}& 0
\end{pmatrix}
\begin{pmatrix}
0 & -\frac{i\beta}{2} \\
+\frac{i\beta}{2}& 0
\end{pmatrix}
=
\begin{pmatrix}
\frac{\beta^2}{4} & 0\\
0& \frac{\beta^2}{4} 
\end{pmatrix}.
\]
This implies $H_+ = H_-,$ so that $(\varGamma,C,\epsilon)$ is trace-compatible and $\ind(\varGamma,C,\epsilon) = 0.$
\end{example}
}

\appendix
\def\thesection{\Roman{section}}
\section{Supplementary Material}
\label{Section: Appendix}

An (abstract) \textbi{supersymmetric quantum walk} (SUSYQW) is a pair $(\varGamma,C)$ of two unitary involutions on a Hilbert space $\rH.$


The Witten index for SUSYQWs turns out to enjoy the following two invariance principles, each of which will a significant role in this paper.

\begin{theorem}[Invariance of the Witten index]
\label{Theorem: Invariance of the Witten Index}
The following two assertions hold true:
\begin{enumerate}
\item Unitary Invariance. Let $(\varGamma,C)$ and $(\varGamma',C')$ be two SUSYQWs that are \textbi{unitarily equivalent} in the sense 
$
(\varGamma',C') = (\epsilon^* \varGamma \epsilon, \epsilon^* C \epsilon)
$
for some  unitary operator $\epsilon$ on $\rH.$ 

Then $(\varGamma,C)$ is a Fredholm SUSYQW if and only if so is $(\epsilon^* \varGamma \epsilon, \epsilon^* C \epsilon).$
In this case,  
\begin{equation}
\label{Equation: Unitary Invariance}
\ind(\varGamma,C) = \ind(\epsilon^* \varGamma \epsilon, \epsilon^* C \epsilon).
\end{equation}
\item Topological Invariance. Let $(\varGamma,C)$ and $(\varGamma,C')$ be two SUSYQWs sharing the same shift operator $\varGamma,$ and let $C - C'$ be a compact operator. Then $(\varGamma,C)$ is a Fredholm SUSYQW if and only if so is $(\varGamma, C').$ In this case, 
\begin{equation}
\label{Equation: Topological Invariance}
\ind(\varGamma,C) = \ind(\varGamma,C').
\end{equation}
\end{enumerate}
\end{theorem}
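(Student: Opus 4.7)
The plan is to prove both invariance principles by tracking how the supercharge $Q=[\varGamma,C]/2i$ and its block decomposition with respect to $\rH = \rH_+ \oplus \rH_-$, where $\rH_\pm = \ker(\varGamma \mp 1)$, transform under the respective operations. In both cases the relation $\ind(\varGamma,C) = \Delta_\varGamma(H) = \ind Q_+$ (whenever $(\varGamma,C)$ is Fredholm) reduces the statement to a property of the off-diagonal entry $Q_+ : \rH_+ \to \rH_-$.

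For unitary invariance, I would first verify the bookkeeping that $(\varGamma',C') = (\epsilon^*\varGamma \epsilon, \epsilon^* C \epsilon)$ yields $Q' = \epsilon^* Q \epsilon$ and $H' = \epsilon^* H \epsilon$. The key geometric fact is that the unitary $\epsilon$ carries the new eigenspaces to the old ones, since
\[
\ker(\varGamma' \mp 1) = \ker(\epsilon^* \varGamma \epsilon \mp 1) = \epsilon^* \ker(\varGamma \mp 1).
\]
Restricting $\epsilon^*$ gives unitary isomorphisms $\rH_\pm \to \rH'_\pm$ that intertwine $H_\pm$ with $H'_\pm$ and $Q_+$ with $Q'_+$. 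Consequently, $Q'_+$ is Fredholm iff $Q_+$ is, the kernel dimensions $\dim\ker H'_\pm = \dim\ker H_\pm$ match, and \cref{Equation: Unitary Invariance} follows immediately from \cref{Equation: Definition of Witten Index}.

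For topological invariance, let $K := C - C'$, which is compact by hypothesis. Because the shift $\varGamma$ is the same for both pairs, the two supercharges
\[
Q = \frac{[\varGamma, C]}{2i}, \qquad Q' = \frac{[\varGamma, C']}{2i}
\]
are both off-diagonal with respect to one and the same decomposition $\rH = \rH_+ \oplus \rH_-$, so they produce comparable off-diagonal entries $Q_+, Q'_+ : \rH_+ \to \rH_-$. Their difference satisfies $Q - Q' = [\varGamma, K]/2i$, which is compact; restricting to the $(+,-)$ block yields that $Q_+ - Q'_+$ is compact. The classical stability of the Fredholm index under compact perturbations then gives that $Q_+$ is Fredholm iff $Q'_+$ is, with $\ind Q_+ = \ind Q'_+$, and \cref{Equation: Topological Invariance} follows.

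The only subtle point I anticipate is the explicit verification that $\{Q,\varGamma\}=0$ and $\{Q',\varGamma\}=0$, which is what guarantees the off-diagonal block form in both parts; this is a direct computation from $\varGamma^2 = 1$ and the definition of the commutator, and should be recorded once for all. Beyond this, both parts are essentially routine consequences of standard Fredholm perturbation theory applied to $Q_+$, together with the unitary equivalence of the restricted problem.
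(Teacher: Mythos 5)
Your proof is correct. The paper states this theorem in the appendix without supplying a proof, treating it as standard material, so there is no paper argument to compare against; your reasoning is exactly the expected one. Both parts reduce cleanly to the off-diagonal block $Q_+\colon \rH_+ \to \rH_-$ via $\ind(\varGamma,C)=\ind Q_+$: in part~1, $\epsilon^*$ restricts to unitaries $\rH_\pm \to \rH'_\pm = \epsilon^*\rH_\pm$ intertwining $Q_+$ with $Q'_+$, so Fredholmness and the Fredholm index transfer; in part~2, the shared $\varGamma$ fixes the decomposition, $Q-Q' = [\varGamma, C-C']/2i$ is compact, hence so is $Q_+ - Q'_+$, and stability of the Fredholm index under compact perturbation gives the claim. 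One small wording point: your closing sentence characterises both parts as ``Fredholm perturbation theory,'' but part~1 is really a unitary-equivalence argument, not a perturbation argument; the substance is nonetheless sound.
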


\begin{remark}
The invariance principle \cref{Equation: Unitary Invariance} can be used to classify SUSYQWs in the following precise sense. If the Witten indices associated with two given SUSYQWs $(\varGamma,C)$ and $(\varGamma',C')$ do not agree to each other, then they cannot be unitarily equivalent. This is, of course, analogous to the manner in which we use the homotopy/homology groups to prove that certain topological spaces are not homotopy equivalent.
\end{remark}

Here is yet another important principle of the Witten index:
\begin{theorem}[{\cite[Corollary 3.7]{Suzuki18}}]
If one of $(\varGamma,C),(-\varGamma,C),(\varGamma,-C)$ is a Fredholm SUSYQW, then so are the rest. In this case we have the following formulas:
\begin{align}
\label{Equation: Witten Index for Negative Coin}
\ind(\varGamma,-C) &= \ind(\varGamma,C), \\
\label{Equation: Witten Index for Negative Shift}
\ind(-\varGamma,C) &= - \ind(\varGamma,C).
\end{align}
\end{theorem}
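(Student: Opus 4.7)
The plan is to reduce each identity to an algebraic computation at the level of the supercharge, and then transfer the result to the Fredholm index via the block representation from \cref{Equation: Matrix Representation of Q}. The starting observation is that the supercharge depends \emph{antisymmetrically} on each of the factors $\varGamma$ and $C$: writing $Q(\varGamma,C) = [\varGamma,C]/(2i)$, we immediately get $Q(\varGamma,-C) = -Q(\varGamma,C)$ and $Q(-\varGamma,C) = -Q(\varGamma,C)$. Consequently, in both cases the superhamiltonian $H=Q^{2}$ is unchanged, which already confirms the simultaneous Fredholmness of the three pairs, since Fredholmness of a SUSYQW depends only on $H$ (as recalled in \cref{Subsection:Witten}).

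For the formula $\ind(\varGamma,-C)=\ind(\varGamma,C)$, I would argue that the chirality decomposition $\rH = \rH_{+}\oplus \rH_{-}$ with $\rH_{\pm}=\ker(\varGamma\mp 1)$ is preserved verbatim when $C$ is negated, so the new off-diagonal blocks are just the negatives of the original ones, $Q'_{\pm}=-Q_{\pm}$. Since $\ker(-Q_{\pm})=\ker Q_{\pm}$ and closedness of range is also preserved under a sign change, $Q'_{+}$ is Fredholm iff $Q_{+}$ is, and the Fredholm indices coincide.

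For the formula $\ind(-\varGamma,C)=-\ind(\varGamma,C)$, the subtle point is that the chirality decomposition is \emph{swapped}: $\ker(-\varGamma\mp 1)=\ker(\varGamma\pm 1)=\rH_{\mp}$. Re-reading the block form of $Q$ from the original representation with rows and columns interchanged and then multiplying by $-1$ exhibits the new off-diagonal blocks as $Q''_{+}=-Q_{-}$ and $Q''_{-}=-Q_{+}$. Thus $Q''_{+}$ is Fredholm iff $Q_{-}=Q_{+}^{*}$ is Fredholm iff $Q_{+}$ is Fredholm, and the indices satisfy $\ind Q''_{+}=\ind Q_{-}=-\ind Q_{+}$, which is the desired equality.

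The whole argument is essentially bookkeeping, so there is no single hard step; the one place I would be careful is recording correctly how the two summands of $\rH$ interchange under $\varGamma\mapsto -\varGamma$, because that interchange is precisely what produces the sign in \cref{Equation: Witten Index for Negative Shift} while leaving \cref{Equation: Witten Index for Negative Coin} sign-neutral. An alternative route, which I would keep in mind as a sanity check, is to note that the evolution operator transforms as $U\mapsto U$ (for $-C,-\varGamma$ together) or $U\mapsto -U$ (for either of the individual sign changes), and to rederive $H=((U-U^{-1})/(2i))^{2}$ is invariant in every case, confirming once more that Fredholmness is preserved in all three pairs.
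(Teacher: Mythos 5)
The paper does not reprove this result; it simply cites it as \cite[Corollary 3.7]{Suzuki18}, so there is no in-paper argument to compare against. Your proposal is a correct and self-contained derivation: the key observations—that $Q\mapsto -Q$ (hence $H$ unchanged) under either sign flip, that negating $C$ leaves the chirality decomposition $\rH_+\oplus\rH_-$ fixed so $Q'_\pm=-Q_\pm$ and the index is unchanged, while negating $\varGamma$ interchanges $\rH_+$ and $\rH_-$ so $Q''_+=-Q_-$ and the index changes sign—are exactly the bookkeeping behind the cited corollary, and each step checks out.
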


\begin{acknowledgements}
This work was supported by the Research Institute for Mathematical Sciences, a Joint Usage/Research Center located in Kyoto University.
A. S. was supported by JSPS
KAKENHI Grant Number JP18K03327. 
\end{acknowledgements}


\bibliographystyle{alpha}
\newcommand{\etalchar}[1]{$^{#1}$}

\end{document}